\title{The Birman--Solomyak theorem revisited:  a novel elementary proof,
generalisation, and applications}
\author{V. Bach, A.F.M.~ter Elst and J. Rehberg}
\newcommand{\HS}{{\rm HS}}
\newcommand{\cb}{{\mathcal B}}
\newcommand{\cd}{{\mathcal D}}
\newcommand{\ch}{{\mathcal H}}
\newcommand{\cl}{\mathcal{L}}
\newcommand{\cm}{\mathcal{M}}
\newcommand{\cn}{\mathcal{N}}
\newcommand{\cs}{\mathcal{S}}
\newcommand{\Ni}{\mathds{N}}
\newcommand{\Ri}{\mathds{R}}
\newcommand{\R}{\mathds{R}}
\newcommand{\Ci}{\mathds{C}}
\newcommand{\one}{\mathds{1}}
\newcommand{\Tr}{{\mathop{\rm tr \,}}}
\newcommand{\gotl}{\mathfrak l}
\newcommand{\gotm}{\mathfrak m}
\newcommand{\gott}{\mathfrak t}
\newcounter{teller}
\newenvironment{tabel}{\begin{list}%
{\rm  (\alph{teller})\hfill}{\usecounter{teller} \leftmargin=1.1cm
\labelwidth=1.1cm \labelsep=0cm \parsep=0cm}
                      }{\end{list}}
\newtheorem{theorem}{Theorem}[section]
\newtheorem{lemma}[theorem]{Lemma}
\newtheorem{cor}[theorem]{Corollary}
\newtheorem{proposition}[theorem]{Proposition}
\newtheorem{prop}[theorem]{Proposition}
\theoremstyle{definition}
\numberwithin{equation}{section}
\DeclareMathOperator{\tr}{tr}     
\DeclareMathOperator{\supp}{supp}  
\DeclareMathOperator{\dive}{div}
\newcommand{\ca}{{\mathcal A}}
\newcommand{\spann}{\mathop{\rm span}}
\newcommand{\Q}{Q}
\newcommand{\M}{\mathcal M}
\newcommand{\ce}{{\mathcal E}}
\def\eqnarray{\stepcounter{equation}\let\@currentlabel=\theequation
\global\@eqnswtrue
\tabskip\@centering\let\\=\@eqncr
$$\halign to \displaywidth\bgroup\hfil\global\@eqcnt\z@
$\displaystyle\tabskip\z@{##}$&\global\@eqcnt\@ne
  \hfil$\displaystyle{{}##{}}$\hfil
  &\global\@eqcnt\tw@ $\displaystyle{##}$\hfil
  \tabskip\@centering&\llap{##}\tabskip\z@\cr}
\def\endeqnarray{\@@eqncr\egroup
      \global\advance\c@equation\m@ne$$\global\@ignoretrue}
\def\@yeqncr{\@ifnextchar [{\@xeqncr}{\@xeqncr[5pt]}}
\begin{document}

\date{}

\maketitle

\begin{abstract}
\noindent
We provide a new short proof for the Birman--Solomyak theorem for Hilbert--Schmidt operators
and give an application to a Schr\"odinger--Poisson system.
\end{abstract} 

\vspace*{2em} 

\noindent 
\textit{MSC: } 47B10, 47A55.

\medskip 

\noindent
\textit{Keywords: }
Birman--Solomyak theorem, Schr\"odinger--Poisson system, trace class operators.

\section{Introduction} \label{s-introd}

About sixty years ago, Birman and
Solomyak developed their calculus of double operator spectral
integrals, see \cite{BirmanSolomyak4} \cite{BirmanSolomyak2} \cite{BirmanSolomyak3} and derived, among
other results, estimates of the form
\begin{equation} \label{e-BirmSol1}
\|f(A) -f(B) \|_{\cs_p} 
\le 
L(f) \, \|A - B\|_{\cs_p},
\end{equation}
where $A, B$ are bounded, self-adjoint operators, $f\colon  \Ri \to \Ri$ is
a function from an adequate class, $L(f) < \infty$ is a suitable
constant depending only on $f$, and $\|\cdot \|_{\cs_p}$ is a Schatten
norm.
A directly accessible case of \eqref{e-BirmSol1} is $p=2$, where
the Lipschitz property of $f$ is sufficient for
\eqref{e-BirmSol1}, with $L(f)$ being the Lipschitz constant of~$f$.
Birman and Solomyak remarked in \cite{BirmanSolomyak} page~156 that, in
spite of this simple characterisation, an elementary proof of
\eqref{e-BirmSol1} for $p=2$ was lacking for a long time, and it was
eventually given by Gesztesy, Pushnitski and Simon more than forty
years later in \cite{GesztesyPushnitskiSimon}.
Their approach is based on a classical
theorem of L\"owner \cite{Lowner} for finite-dimensional operators
complemented by suitable limit arguments.
We became interested in
\eqref{e-BirmSol1} when one of us applied it in \cite{KaiserRehberg2}
\cite{KaiserRehberg} for the derivation of estimates for particle density
operators in the Kohn--Sham system.

The first aim of this paper is to give an elementary proof of
\eqref{e-BirmSol1} for $p=2$, which is quite different from the one in
\cite{GesztesyPushnitskiSimon} and to generalise \eqref{e-BirmSol1} for the case $p=2$
to arbitrary \emph{unbounded} self-adjoint operators, provided 
$f \colon \Ri \to \Ri$ is (globally) Lipschitz continuous.
More precisely, we prove
the following theorem, in which, for a given Hilbert space $\mathcal
H$, we denote by $\cl_\HS(\mathcal H)$ the Hilbert space of all
Hilbert--Schmidt operators and Hilbert--Schmidt norm $\|\cdot\|_\HS$.

\begin{theorem} \label{tBS101}
Let $A,B$ be two self-adjoint operators in a Hilbert space $H$.
Let $C \in \cl_\HS(H)$ and suppose that $A = B + C$.
Let $f \colon \Ri \to \Ri$ be a Lipschitz continuous function with 
Lipschitz constant $L$.
Then $D(A) = D(B) \subset D(f(A)) = D(f(B))$ and the operator 
$f(A) - f(B)$ extends to a Hilbert--Schmidt operator,
by abuse of notation still denoted by $f(A) - f(B)$, such that 
\begin{equation} \label{e-BirmSol1-thm-1.1}
\|f(A) - f(B)\|_\HS
\leq L \, \|C\|_\HS
. 
\end{equation} 
\end{theorem}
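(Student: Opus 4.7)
My plan is to prove the theorem in two stages: first for bounded self-adjoint $A,B$, then reducing the general (unbounded) case via Lipschitz truncation and weak compactness in Hilbert--Schmidt balls.

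\emph{Bounded case.} The key idea is to lift $A$ and $B$ into the setting of Hilbert--Schmidt operators by considering left and right multiplication $L_A(X) := AX$ and $R_B(X) := XB$. These act as bounded, commuting, self-adjoint operators on the Hilbert space $\cl_\HS(H)$, equipped with the inner product $\langle X, Y\rangle = \Tr(Y^*X)$. By the joint functional calculus for a commuting pair of bounded self-adjoint operators, every bounded Borel $\phi$ on $\sigma(A) \times \sigma(B)$ defines a bounded operator $\phi(L_A, R_B) \in \cb(\cl_\HS(H))$ of norm at most $\|\phi\|_\infty$. I would apply this with the divided difference $h(\lambda, \mu) := (f(\lambda) - f(\mu))/(\lambda - \mu)$, extended measurably across the diagonal; Lipschitz continuity of $f$ gives $\|h\|_\infty \le L$. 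Setting $T := h(L_A, R_B)(C)$ then immediately yields $\|T\|_\HS \le L \|C\|_\HS$.

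The central step is to identify $T$ with $f(A) - f(B)$. For polynomial $f$ of degree $n$ this is transparent: the double operator integral unfolds as $h(L_A, R_B)(C) = \sum_{k=0}^{n-1} A^k C B^{n-1-k}$, which via $C = A - B$ telescopes to $A^n - B^n$. To extend to general Lipschitz $f$, I would run a two-step approximation: first mollify $f$ to obtain $f_\varepsilon \in C^\infty$ with $\|f_\varepsilon'\|_\infty \le L$ and $f_\varepsilon \to f$ uniformly on the compact set $\sigma(A) \cup \sigma(B)$; then approximate each $f_\varepsilon$ by polynomials $p_n$ with $p_n \to f_\varepsilon$ and $p_n' \to f_\varepsilon'$ uniformly on $[-R, R]$, where $R := \max(\|A\|, \|B\|)$. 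The mean-value theorem gives uniform convergence $h_{p_n} \to h_{f_\varepsilon}$ (transferring the polynomial identity to $f_\varepsilon$), and the bounded-pointwise convergence $h_{f_\varepsilon} \to h_f$ passes it to $f$ via continuity of the joint spectral calculus.

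\emph{Unbounded case.} I would set $\varphi_n(t) := \max(-n, \min(n, t))$, a $1$-Lipschitz bounded function, and define $A_n := \varphi_n(A)$, $B_n := \varphi_n(B)$ (bounded self-adjoint). The same double-operator-integral argument, now with $L_A, R_B$ interpreted as commuting (possibly unbounded) self-adjoint operators on $\cl_\HS(H)$ and with symbol $h_{\varphi_n}$ bounded Borel, yields $A_n - B_n \in \cl_\HS(H)$ together with $\|A_n - B_n\|_\HS \le \|C\|_\HS$. The bounded case proper then applies to $(A_n, B_n)$ and $f$, giving $\|f(A_n) - f(B_n)\|_\HS \le L \|C\|_\HS$ uniformly in $n$. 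Since $\varphi_n \to \mathrm{id}$ pointwise with $|\varphi_n(t)| \le |t|$, the spectral calculus combined with the Lipschitz growth bound $|f(t)| \le |f(0)| + L|t|$ gives $f(A_n)u \to f(A)u$ for $u \in D(A) = D(B)$, and similarly for $B$. Weak compactness of the Hilbert--Schmidt ball then extracts a subsequential weak limit $\widetilde T \in \cl_\HS(H)$ with $\|\widetilde T\|_\HS \le L \|C\|_\HS$ and $\widetilde T u = (f(A) - f(B))u$ on $D(A)$, the claimed Hilbert--Schmidt extension.

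The principal obstacle is the identification $T = f(A) - f(B)$ in the bounded case: while the norm bound follows immediately from the joint calculus, verifying the identity for general Lipschitz $f$ requires the two-step approximation to be carried out so that function values and their divided differences converge in tandem. A secondary subtlety is the extension of the $L_A, R_B$ formalism to accommodate unbounded $A, B$, needed to establish the Hilbert--Schmidt bound on $A_n - B_n$.
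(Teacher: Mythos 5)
Your approach — lifting $A$, $B$ to the commuting left/right multiplications $L_A$, $R_B$ on $\cl_\HS(H)$ and applying the joint functional calculus to the divided difference $h_f$ — essentially reconstructs the Birman--Solomyak double operator integral machinery that the paper is explicitly trying to circumvent. The paper instead works elementarily: for operators of pure point spectrum, the bound $\|f(A)-f(B)\|_\HS \le L\|C\|_\HS$ drops out of Parseval applied to two eigenbases; it then approximates general $A,B$ by the finite-rank compressions $A_n := P_n A P_n$, $B_n := P_n B P_n$, for which $A_n - B_n = P_n C P_n$ is automatically Hilbert--Schmidt with the correct bound; finally it truncates $f$ and removes separability. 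Your bounded case is sound as sketched — polynomial telescoping, the mollification/Weierstrass two-step, and the observation that the diagonal of $\Ri^2$ carries no mass for the joint spectral measure attached to $C = A - B$.

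The unbounded case, however, contains a genuine gap. You set $A_n := \varphi_n(A)$, $B_n := \varphi_n(B)$ and claim that "the same double-operator-integral argument yields $A_n - B_n \in \cl_\HS(H)$ together with $\|A_n - B_n\|_\HS \le \|C\|_\HS$." But this claim \emph{is} an instance of the theorem being proved — it is the Birman--Solomyak inequality for the $1$-Lipschitz function $\varphi_n$ applied to the unbounded operators $A$, $B$ — and "the same argument" does not carry over: the identification $h_f(L_A,R_B)(C) = f(A)-f(B)$ was established via polynomial telescoping and uniform polynomial approximation on $\sigma(A)\cup\sigma(B)$, which requires $A,B$ to have compact spectra. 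With unbounded $A,B$ you cannot approximate $\varphi_n$ by polynomials uniformly on the spectrum, and no alternative justification for the identity is offered. The direct way to close the gap is to mimic the paper: approximate the operators by the compressions $P_n A P_n$, $P_n B P_n$ (so that the Hilbert--Schmidt bound on the difference is immediate) and use strong resolvent convergence to pass to the limit for bounded $f$; only afterward truncate $f$ itself, not $A$, to reach general Lipschitz $f$.
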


In our proof of Theorem~\ref{tBS101} we first assume that $f$ is
bounded and that $A$ and $B$ have pure point spectrum, 
which includes all compact operators and
those with compact resolvent.
In contrast to \cite{GesztesyPushnitskiSimon} our proof only 
requires Parseval's identity.
Having this
at hand, we employ an argument quite similar to one in \cite{GesztesyPushnitskiSimon}
to carry this over to the class of all self-adjoint operators $A,B$, 
still with the Hilbert--Schmidt condition.
Finally, an approximation
argument allows us to eliminate the boundedness assumption for $f$.
While our proof of \eqref{e-BirmSol1} in case that $p = 2$ does not
generalise to other Schatten classes with $p \neq 2$, 
we note that Peller \cite{Peller} succeeded in proving \eqref{e-BirmSol1} for 
$p \neq 2$ by using double and triple sums instead of the calculus of
double spectral integrals.

Secondly, we deduce from \eqref{e-BirmSol1-thm-1.1} estimates of the
form
\begin{equation} \label{e-Geszte}
\|f(A)  - f(B) \|_\HS 
\leq    
K \, \|(A +\lambda )^{-1} - (B +\lambda )^{-1} \|_\HS  ,
\end{equation}
where $K > 0$ depends on the Lipschitz constant of $f$ and
$\lambda$.
For smooth functions $f$ with a certain decay
the estimate~\eqref{e-Geszte} is also in
\cite{GesztesyNichols} (3.5) for general Schatten classes.
The proof in \cite{GesztesyNichols}
involves the Helffer--Sj\"ostrand calculus, which is much more complicated.
The importance of
\eqref{e-Geszte} lies in the fact that, e.g., for Schr\"odinger
operators, the difference of $A = -\dive \bigl ( \mathfrak m ^{-1} \nabla \bigr ) + U$ and
 $B = -\dive \bigl ( \mathfrak m ^{-1} \nabla \bigr )  + V$
is usually not a Hilbert--Schmidt operator, but the difference of
their resolvents often is.
Note that for the validity of
\eqref{e-Geszte} only Lipschitz continuity of $f$ is required.
In particular, $f$ does not need to be smooth, nor
compactly supported and no almost analytic extension of $f$
is required.
We further remark that an estimate similar to \eqref{e-Geszte} was
derived and used in \cite{KaiserRehberg2} and
\cite{KaiserRehberg} for the estimation of density matrices with
respect to Schr\"odinger potentials.

The third part of the paper is devoted to an application of the
Birman--Solomyak theorem in density functional
theory, namely, in the investigation of the Schr\"odinger--Poisson
system.
The connection to the
Birman--Solomyak theorem is the following.
Let $\Omega \subset \Ri^d$, with $d \in \{1,2,3\}$, be a bounded Lipschitz domain and 
$f \colon \Ri \to (0,\infty)$ a strictly decreasing differentiable 
function with suitable asymptotics at $\infty$.
Let $H = -\dive \bigl ( \mathfrak m ^{-1} \nabla \bigr ) $ be the unperturbed Schr\"odinger operator
with mixed boundary conditions on $\partial \Omega$.
If $U \in L_2(\Omega,\Ri)$,  then the spectrum of the Schr{\"o}dinger operator
$H + U$ on $L_2(\Omega)$ is purely discrete with sufficiently
rapidly growing eigenvalues, hence for all $\mu \in \Ri$ the operator
\[
f(H + U -\mu)
\] 
defines a
density matrix, that is a positive trace-class operator, and the
diagonal of its Schwartz kernel defines a particle density
$\mathcal{N}(U)\in L_2(\Omega,\Ri)$.
More precisely, viewing a
bounded function $W$ as a multiplication operator, one can
show that the map
\[
W \mapsto 
\Tr\big( f(H + U -\mu) \,  W \big)
\]
extends to a continuous linear functional on $L_2(\Omega)$, which can be
represented by a function $\mathcal{N}(U) \in L_2(\Omega)$.
 In physics, the number $\mu \in \mathbb{R}$ is a
Lagrange multiplier called the Fermi level.
It ensures the
normalisation condition $\Tr (f(H+ U -\mu)) = N$, where $N > 0$
is the particle number.
In many applications, $U = V_0 + V$ is a small
perturbation of a fixed reference potential $V_0 \in
L_2(\Omega)$.
Since the Schr\"odinger--Poisson system can be written
as one single equation (see \cite{Nier})
\begin{equation} \label{e-SP}
- \nabla \cdot \varepsilon \nabla V - \mathcal N(V_0 + V) 
= q  ,
\end{equation}
one is interested in the functional analytic properties of the 
map $U \mapsto \cn(U)$
from $L_2(\Omega)$ into $L_2(\Omega)$ and 
from $W^{1,2}(\Omega)$ into $W^{-1,2}(\Omega)$,
see also \cite{WoodsPayneHasnip} Subsection~3.5.
In \cite{Nier} it was proved that
the operator from $W^{1,2}(\Omega)$ into $W^{-1,2}(\Omega)$ 
is antimonotone and locally Lipschitz continuous, so
that the standard monotone theory with the Browder-Minty theorem
applies.
Unfortunately, the proofs in \cite{Nier} are technically
involved.
We will use the Birman--Solomyak theorem to prove that 
$U \mapsto \cn(U)$ is locally Lipschitz continuous 
from $L_2(\Omega)$ into $L_2(\Omega)$.
In \cite{KaiserNeidhardtRehberg}, however, the monotonicity of $U \mapsto \cn(U)$ 
from $L_2(\Omega)$ into $L_2(\Omega)$ and 
from $W^{1,2}(\Omega)$ into $W^{-1,2}(\Omega)$
was established by an
abstract operator-theoretic principle, without using specific
properties of the concrete Schr\"odinger operators which are involved.
Moreover,
the proof in \cite{KaiserNeidhardtRehberg} is fairly simple.

Concerning the continuity properties, in \cite{KaiserRehberg2} and
\cite{KaiserRehberg} local Lipschitz continuity of the map
$U \mapsto \cn(U)$ 
from $L_2(\Omega)$ into $L_2(\Omega)$
was established, which implies, trivially, the
local Lipschitz continuity from $W^{1,2}(\Omega)$ into $W^{-1,2}(\Omega)$.
These two
properties, (anti)monotonicity and local Lipschitz continuity, are
the key ingredients for an adequate investigation of the
Schr{\"o}dinger--Poisson system \eqref{e-SP}.
Since the latter is a
fundamental model in modern physics (see \cite{ParedesOlivieriMichinel} for a recent
overview), we recall the essentials of the analysis for this
system.
This includes both a derivation of the local Lipschitz
continuity for the particle density operator and a description of the
foundation of a numerical procedure, which is standard nowadays, and
rests on a functional-analytic based contraction principle.

As already indicated, several ideas of this paper are already present in the quoted
foregoing papers, frequently with complicated proofs.
It is, however, our intention to present an entirely 
self-contained text on the Schr\"odinger--Poisson system, now 
based on elementary proofs of the crucial properties of the particle density 
operator and readable also for theoretical  physicists.
Essentially new in this paper is a deeper
analysis of the Schr\"odinger operators. 
In particular it is proved that $H +V$ makes
sense also as operator sum, and not only via forms.
Moreover, a deeper analysis 
shows here that the resolvents of the operators $(H+U)$ map 
$L_2(\Omega)$ into $L_\infty(\Omega)$, 
and the corresponding norms are bounded, uniformly on $L_2$-bounded sets 
of potentials $U$.
Moreover, in Theorem~\ref{tBS510} we 
prove that the solution of the Schr\"odinger--Poisson system
as function of the reference potential $V_0$ and the right hand side $q$ also is 
locally Lipschitz. 
This is the key tool for the treatment of the Kohn--Sham system,
see  \cite{KaiserRehberg2} \cite{KaiserRehberg}.
We refer to \cite{WoodsPayneHasnip} Section~3.5 for the numerics of the 
Kohn--Sham system.

The paper is organised as follows.
In Section~\ref{s-proof} we prove Theorem~\ref{tBS101}.
In Section~\ref{s-resolvent} we prove (\ref{e-Geszte}) on resolvents.
We provide in Section~\ref{ss-scheme} an abstract theorem to construct a solution for a 
strongly monotone operator via the Banach contraction theorem.
In Theorem~\ref{tBS510} we give a rigorous statement and proof 
of the Schr\"odinger--Poisson system (\ref{e-SP}) and show that the solution is 
(locally) Lipschitz as function of the data.

\section{The elementary proof} \label{s-proof}

If $(X,\ca,\mu)$ is a measure space, then we say that $(X,\ca,\mu)$ is
a {\bf locally finite} if for all $E \in \ca$ with $\mu(E) = \infty$ 
there exists a $F \in \ca$ such that $F \subset E$ and $0 < \mu(F) < \infty$.
If $A$ is an operator in a Hilbert space $\mathcal H$, then we provide $D(A)$ with the graph norm given by 
$\|u\|_{D(A)}^2= \|A u\|_\mathcal H^2 + \|u\|_\mathcal H^2$.

We need some lemmas.

\begin{lemma} \label{lBS201}
Let $A$ be a self-adjoint operator in a Hilbert space $\mathcal H$ and $f \colon \Ri \to \Ri$ a Lipschitz continuous function.
Then one has the following.
\begin{tabel} 
\item \label{lBS201-1}
$D(A) \subset D(f(A))$.
\item \label{lBS201-2}
$D(A)$ is a core for  $f(A)$.
\item \label{lBS201-3}
For all $n \in \Ni$ define $\chi_n \colon \Ri \to \Ri$ by 
$\chi_n = (-n) \vee x \wedge n$.
Then $\lim_{n \to \infty} (\chi_n \circ f)(A) u = f(A) u$
for all $u \in D(f(A))$.  
\end{tabel}
\end{lemma}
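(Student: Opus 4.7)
The plan is to argue everything directly from the spectral theorem, using only the growth bound on $f$ imposed by Lipschitz continuity. Let $E$ denote the spectral measure of $A$, so that $D(g(A)) = \{ u \in \mathcal H : \int_\Ri |g(\lambda)|^2 \, d\langle E(\lambda) u, u\rangle < \infty\}$ and $\|g(A) u\|^2$ is exactly this integral for $u$ in the domain. Since $f$ is Lipschitz with constant $L$, one has the pointwise bound
\[
|f(\lambda)|^2 \leq 2 |f(0)|^2 + 2 L^2 \lambda^2
\qquad (\lambda \in \Ri).
\]
This single inequality drives everything.

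For part~\ref{lBS201-1}, I would simply integrate this pointwise bound against $d\langle E(\cdot) u, u\rangle$ for $u \in D(A)$, which yields $\int |f(\lambda)|^2 \, d\langle E(\lambda) u, u\rangle \leq 2 |f(0)|^2 \|u\|^2 + 2 L^2 \|A u\|^2 < \infty$, so $u \in D(f(A))$.

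For part~\ref{lBS201-2}, fix $u \in D(f(A))$ and set $P_n := E([-n,n])$. Each $P_n u$ lies in $D(A)$ because $\int \lambda^2 \, d\langle E(\lambda) P_n u, P_n u\rangle = \int_{[-n,n]} \lambda^2 \, d\langle E(\lambda) u, u\rangle \leq n^2 \|u\|^2$. Since spectral projections commute with $f(A)$, one has $f(A) P_n u = P_n f(A) u$, and $P_n v \to v$ in $\mathcal H$ for every $v \in \mathcal H$. Hence $P_n u \to u$ and $f(A) P_n u \to f(A) u$, i.e.\ $P_n u \to u$ in the graph norm of $f(A)$, so $D(A)$ is dense in $(D(f(A)), \|\cdot\|_{D(f(A))})$, which is the claim.

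For part~\ref{lBS201-3}, observe that $|\chi_n(f(\lambda)) - f(\lambda)|^2 \leq |f(\lambda)|^2$ for every $n$ and tends to $0$ pointwise as $n \to \infty$. For $u \in D(f(A))$ the function $\lambda \mapsto |f(\lambda)|^2$ is integrable with respect to the finite measure $d\langle E(\lambda) u, u\rangle$, so Lebesgue's dominated convergence theorem gives
\[
\|(\chi_n \circ f)(A) u - f(A) u\|^2
= \int_\Ri |\chi_n(f(\lambda)) - f(\lambda)|^2 \, d\langle E(\lambda) u, u\rangle
\longrightarrow 0 .
\]
There is no real obstacle here; the only subtlety is to remember that convergence in the graph norm of $f(A)$ in part~\ref{lBS201-2} requires \emph{both} $P_n u \to u$ in $\mathcal H$ and $f(A) P_n u \to f(A) u$ in $\mathcal H$, which is exactly what the commutation $f(A) P_n = P_n f(A)$ provides.
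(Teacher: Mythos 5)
Your proof is correct and follows essentially the same route as the paper: both rely on the spectral theorem, the quadratic growth bound coming from Lipschitz continuity, truncation by the spectral projections of $A$ (your $P_n = E([-n,n])$ is the multiplication by $\mathds{1}_{[|h| \leq n]}$ in the paper's $L_2$-realisation), and dominated/monotone convergence. The only distinction is cosmetic — you work with the projection-valued measure directly, while the paper passes to the multiplication-operator model of $A$ — and both arguments are sound.
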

\begin{proof}
Using the spectral theorem we may assume that $\mathcal H = L_2(X,\ca,\mu)$
and $A$ is the multiplication operator with a measurable function
$h \colon X \to \Ri$, where $(X,\ca,\mu)$ is a locally finite measure space.
Since $f$ is Lipschitz continuous, there are $L,b > 0$ such that 
$|f(x)| \leq L \, |x| + b$ for all $x \in \Ri$.
Now $f(A)$ is the multiplication operator with the function $f \circ h$.

`\ref{lBS201-1}'.
Let $u \in D(A)$.
Then $\int |h \, u|^2 < \infty$ and $\int |u|^2 < \infty$.
Hence 
\[
\int |(f \circ h) \cdot u|^2
\leq \int |L \, |h| + b|^2 \, |u|^2
\leq 2 L^2 \int |h \, u|^2 + 2 b^2 \int |u|^2 
< \infty
.  \]
Therefore $u \in D(f(A))$.

`\ref{lBS201-2}'.
Let $u \in D(f(A))$.
If $n \in \Ni$, then $u \, \one_{[|h| \leq n]} \in D(A)$.
Moreover, 
\[
\lim_{n \to \infty} \int |(f \circ h) \, u \, \one_{[|h| \leq n]}|^2
= \int |(f \circ h) \, u|^2
\]
by the monotone convergence theorem.
So 
\[
\lim_{n \to \infty} \|A( u - u \, \one_{[|h| \leq n]})\|_\mathcal H^2
= \lim_{n \to \infty} \int \Big( |(f \circ h) \, u|^2 
    - \int |(f \circ h) \, u \, \one_{[|h| \leq n]}|^2 \Big)
= 0.
\]
Similarly $\lim_{n \to \infty} \|u - u \, \one_{[|h| \leq n]}\|_\mathcal H^2 = 0$
and $\lim u \, \one_{[|h| \leq n]} = u$ in $D(A)$.
Hence $D(A)$ is a core for $f(A)$.

`\ref{lBS201-3}'.
If $n \in \Ni$ and $u \in D(f(A))$, then 
$|(\chi_n \circ f \circ h) \cdot u|^2 \leq |(f \circ h) \cdot u|^2$
and also $\lim_{n \to \infty}  (\chi_n \circ f \circ h) \cdot u = (f \circ h) \cdot u$.
Then the statement follows from the Lebesgue dominated convergence theorem.
\end{proof}

The next two lemmas are  trivial.

\begin{lemma} \label{lBS202}
Let $A,B$ be operators in a Hilbert space $\mathcal H$ and $T \in \cl(\mathcal H)$.
Let $D$ be a subspace of $\mathcal H$.
Suppose that $D$ is a core for $B$, the operator $A$ is closed 
and $A u = B u + T u$ for all $u \in D$.
Then $D(B) \subset D(A)$ and $A u = B u + T u$ for all $u \in D(B)$.
\end{lemma}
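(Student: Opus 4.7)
The plan is a one-line closure argument using the core hypothesis. Fix $u \in D(B)$. Since $D$ is a core for $B$, choose a sequence $(u_n)$ in $D$ with $u_n \to u$ in $\mathcal H$ and $B u_n \to B u$ in $\mathcal H$ (this is precisely the definition of a core: $D$ is dense in $D(B)$ equipped with the graph norm). By the assumption that $A$ and $B$ agree up to $T$ on $D$, one has $A u_n = B u_n + T u_n$ for every $n \in \Ni$.

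Next, since $T \in \cl(\mathcal H)$ is bounded and everywhere defined, $T u_n \to T u$. Adding limits gives $A u_n \to B u + T u$ in $\mathcal H$, while simultaneously $u_n \to u$. The closedness of $A$ then forces $u \in D(A)$ and $A u = B u + T u$, which is exactly what must be shown.

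There is no real obstacle, which is why the lemma is announced as trivial; the only points to be careful about are that the boundedness of $T$ (rather than mere closedness) is needed so that $T u_n$ converges without any extra argument, and that the definition of core being used is the graph-norm density of $D$ in $D(B)$, which is what Lemma~\ref{lBS201}\ref{lBS201-2} supplies in the applications that will follow.
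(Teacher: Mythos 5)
Your proof is correct and follows exactly the same route as the paper's: take a graph-norm approximating sequence from the core, use linearity and boundedness of $T$ to pass to the limit on the right-hand side, and conclude by closedness of $A$. The extra remark you add about why boundedness (not just closedness) of $T$ is needed is a fair observation but does not change the argument.
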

\begin{proof}
Let $u \in D(B)$. 
Then there exists a sequence $(u_n)_{n \in \Ni}$ in $D$ 
such that $\lim u_n = u$ in $D(B)$.
Now $\lim A u_n = \lim (B u_n + T u_n) = B u + T u$.
Since $A$ is closed, it follows that  $u \in D(A)$ and $A u = B u + T u$.
\end{proof}

\begin{lemma} \label{lBS203}
Let $T$ be a densely defined symmetric operator in a Hilbert space $\mathcal H$.
Let $(u_\alpha)_{\alpha \in I}$ be an orthonormal basis in $\mathcal H$.
Suppose that $u_\alpha \in D(T)$ and $\sum_{\alpha \in I} \|T u_\alpha\|_\mathcal H^2 < \infty$.
Then $T$ extends to a bounded operator on $\mathcal H$.
\end{lemma}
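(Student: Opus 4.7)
The plan is to establish the boundedness inequality $\|Tu\| \leq C \|u\|$ on the dense subspace $D(T)$, with $C = \bigl(\sum_{\alpha \in I} \|Tu_\alpha\|^2\bigr)^{1/2}$, and then to extend $T$ by continuity to all of $\mathcal H$. The two ingredients that will produce this inequality are Parseval's identity for the orthonormal basis $(u_\alpha)$ and the symmetry of $T$, which permits us to flip inner products whenever both arguments lie in $D(T)$.

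Concretely, I fix $u \in D(T)$. Since $Tu \in \mathcal H$, Parseval's identity yields
$$
\|Tu\|^2 = \sum_{\alpha \in I} |\langle Tu, u_\alpha\rangle|^2 .
$$
The hypothesis $u_\alpha \in D(T)$ combined with the symmetry of $T$ lets me rewrite each term as $\langle u, Tu_\alpha\rangle$, and an application of the Cauchy--Schwarz inequality term by term then gives
$$
\|Tu\|^2 = \sum_{\alpha \in I} |\langle u, Tu_\alpha\rangle|^2 \leq \|u\|^2 \sum_{\alpha \in I} \|Tu_\alpha\|^2 = M \|u\|^2,
$$
where $M := \sum_{\alpha \in I} \|Tu_\alpha\|^2 < \infty$ is finite by assumption.

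Finally, since $T$ is densely defined and bounded on its domain by $\sqrt{M}$, it has a unique continuous extension to all of $\mathcal H$ with operator norm at most $\sqrt{M}$, which is the desired conclusion. There is no real obstacle here; as the authors remark, the lemma is trivial, and the proof reduces to observing that Parseval plus symmetry directly converts the summability hypothesis into an operator-norm bound. The only minor point to be mindful of is that the symmetry identity $\langle Tv, w\rangle = \langle v, Tw\rangle$ requires both $v$ and $w$ to lie in $D(T)$, which is exactly why the hypothesis $u_\alpha \in D(T)$ is essential.
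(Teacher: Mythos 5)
Your proof is correct and is essentially identical to the one in the paper: both apply Parseval's identity to $Tu$, use symmetry to flip $\langle Tu, u_\alpha\rangle$ into $\langle u, Tu_\alpha\rangle$, and finish with Cauchy--Schwarz and continuous extension from the dense domain. No difference in method or generality.
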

\begin{proof}
Let $u \in D(T)$.
Then 
\begin{eqnarray*}
\|T u\|_\mathcal H^2
& = & \sum_{\alpha \in I} |(T u, u_\alpha)|^2
= \sum_{\alpha \in I} |( u, T u_\alpha)|^2
\leq \sum_{\alpha \in I} \|u\|_\mathcal H^2 \, \|T u_\alpha\|_\mathcal H^2
= \|u\|_\mathcal H^2 \, \sum_{\alpha \in I} \|T u_\alpha\|_\mathcal H^2
\end{eqnarray*}
and the lemma follows.
\end{proof}

Now we prove Theorem~\ref{tBS101} in the special case
that $A$ and $B$ have a pure point spectrum.
The proof is very elementary.
For simplicity we assume that $f$ is bounded.

\begin{proposition} \label{pBS204}
Adopt the assumptions and notation of Theorem~{\rm \ref{tBS101}}.
Suppose in addition that $A$ and $B$ both have a pure point spectrum
and that the function $f$ is bounded.
Then $f(A) - f(B)$ is a Hilbert--Schmidt operator and
\[
\|f(A) - f(B)\|_\HS
\leq L \, \|C\|_\HS
.  \]
\end{proposition}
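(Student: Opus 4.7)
The plan is to exploit the pure-point-spectrum hypothesis by expanding in eigenbases and reducing everything to Parseval's identity. I would fix orthonormal bases $(e_i)_{i \in I}$ and $(g_j)_{j \in J}$ of $\mathcal H$ consisting of eigenvectors of $A$ and $B$, with eigenvalues $\alpha_i \in \Ri$ and $\beta_j \in \Ri$ respectively. Since $C$ is bounded one has $D(A) = D(B)$, so $e_i \in D(B)$ and $g_j \in D(A)$ for every $i,j$. Because $f$ is bounded, $f(A)$ and $f(B)$ are bounded self-adjoint operators on $\mathcal H$, with $f(A) e_i = f(\alpha_i) \, e_i$ and $f(B) g_j = f(\beta_j) \, g_j$ by the functional calculus.

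Next I would evaluate both Hilbert--Schmidt norms through the same overlap coefficients $(e_i, g_j)$. Self-adjointness of $f(B)$ gives
\[
\bigl( (f(A) - f(B)) e_i, \, g_j \bigr) = \bigl( f(\alpha_i) - f(\beta_j) \bigr) \, (e_i, g_j),
\]
and two applications of Parseval (first expanding $\|(f(A) - f(B)) e_i\|_{\mathcal H}^2$ in the basis $(g_j)$, then summing over $i$) yield
\[
\|f(A) - f(B)\|_\HS^2 \;=\; \sum_{i,j} \bigl| f(\alpha_i) - f(\beta_j) \bigr|^2 \, |(e_i, g_j)|^2.
\]
Analogously, writing $C e_i = A e_i - B e_i$ and using $(B e_i, g_j) = (e_i, B g_j) = \beta_j \, (e_i, g_j)$, the same computation produces
\[
\|C\|_\HS^2 \;=\; \sum_{i,j} |\alpha_i - \beta_j|^2 \, |(e_i, g_j)|^2.
\]

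Finally, the Lipschitz inequality $|f(\alpha_i) - f(\beta_j)|^2 \leq L^2 \, |\alpha_i - \beta_j|^2$ applied term-by-term to these two identities delivers $\|f(A) - f(B)\|_\HS \leq L \, \|C\|_\HS$, and in particular shows that $f(A) - f(B)$ is Hilbert--Schmidt. The only real obstacle is to check that the \emph{same} overlap scalar $(e_i, g_j)$ governs both norms, with prefactors $f(\alpha_i) - f(\beta_j)$ and $\alpha_i - \beta_j$; this rests on the self-adjointness of $B$ and $f(B)$ together with $e_i \in D(B)$, the latter being automatic from the boundedness of $C$. No double operator integrals and no L\"owner-type finite-dimensional detour enter the argument: only the functional calculus and Parseval's identity are used.
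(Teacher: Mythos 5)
Your proposal is correct and follows essentially the same route as the paper's proof: expand in the two eigenbases, observe that $((f(A)-f(B))e_i,g_j)=(f(\alpha_i)-f(\beta_j))(e_i,g_j)$ and $(Ce_i,g_j)=(\alpha_i-\beta_j)(e_i,g_j)$, apply Parseval to express both Hilbert--Schmidt norms as double sums over the same overlap coefficients $|(e_i,g_j)|^2$, and compare term by term via the Lipschitz bound. The only cosmetic difference is that the paper carries out the comparison at the level of $\|(f(A)-f(B))u_\alpha\|^2\leq L^2\|Cu_\alpha\|^2$ for each fixed $\alpha$ and then sums, while you sum both double series first and compare termwise afterwards; the underlying computation is identical.
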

\begin{proof}
Since $A$ has a pure point spectrum there exists an orthonormal basis 
$(u_\alpha)_{\alpha \in I}$ for $\mathcal H$ and for all $\alpha \in I$ there exists 
a $\lambda_\alpha \in \Ri$ such that $A u_\alpha = \lambda_\alpha \, u_\alpha$.
Similarly for $B$ there exists an orthonormal basis 
$(v_\alpha)_{\alpha \in I}$ for $\mathcal H$ and for all $\alpha \in I$ there exists 
a $\mu_\alpha \in \Ri$ such that $B v_\alpha = \mu_\alpha \, v_\alpha$.

If $\alpha,\beta \in I$, then
\begin{eqnarray*}
((f(A) - f(B)) u_\alpha, v_\beta)_\mathcal H
& = & (f(A) u_\alpha, v_\beta)_\mathcal H - ( u_\alpha, f(B) v_\beta)_\mathcal H  \\
& = & (f(\lambda_\alpha) \, u_\alpha, v_\beta)_\mathcal H - ( u_\alpha, f(\mu_\beta) \, v_\beta)_\mathcal H
= (f(\lambda_\alpha) - f(\mu_\beta)) \, (u_\alpha, v_\beta)_\mathcal H
.
\end{eqnarray*}
Now let $\alpha \in I$.
Then Parseval's identity gives
\begin{eqnarray*}
\|(f(A) - f(B)) u_\alpha\|_\mathcal H^2
& = & \sum_{\beta \in I} |((f(A) - f(B)) u_\alpha, v_\beta)_\mathcal H|^2  
= \sum_{\beta \in I} |f(\lambda_\alpha) - f(\mu_\beta) |^2 
       \, |(u_\alpha, v_\beta)_\mathcal H|^2  \\
& \leq & \sum_{\beta \in I} L^2 \, |\lambda_\alpha - \mu_\beta|^2 
       \, |(u_\alpha, v_\beta)_\mathcal H|^2  
= L^2 \, \|(A - B) u_\alpha\|_\mathcal H^2  
= L^2 \, \|C u_\alpha\|_\mathcal H^2
.  
\end{eqnarray*}
Hence 
\[
\|f(A) - f(B)\|_\HS^2
= \sum_{\alpha \in I} \|(f(A) - f(B)) u_\alpha\|_\mathcal H^2
\leq \sum_{\alpha \in I} L^2 \, \|C u_\alpha\|_\mathcal H^2
= L^2 \, \|C\|_\HS^2
< \infty
\]
and the proposition follows.
\end{proof}

We now drop the assumption  that $A$ and $B$ have a pure point spectrum,
but add the assumption that $\ch$ is separable.
The proof is a modification of the proof of \cite{GesztesyPushnitskiSimon} Theorem~4.1.

\begin{proposition} \label{pBS205}
Adopt the assumptions and notation of Theorem~{\rm \ref{tBS101}}.
Suppose in addition that the function $f$ is bounded and the Hilbert space 
$\mathcal H$ is separable.
Then $f(A) - f(B)$ is a Hilbert--Schmidt operator and
\[
\|f(A) - f(B)\|_\HS
\leq L \, \|C\|_\HS
.  \]
\end{proposition}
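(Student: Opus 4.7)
The plan is to approximate $A$ and $B$ by finite-rank compressions, which automatically have pure point spectrum, apply Proposition~\ref{pBS204} to the compressions, and pass to the limit via lower semi-continuity of the Hilbert--Schmidt norm under strong operator convergence.

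Since $C = A - B$ is bounded, the graph norms of $A$ and $B$ on $D(A) = D(B)$ are equivalent, so $(D(A), \|\cdot\|_A)$ is a separable Hilbert space. I would pick an orthonormal basis $(\psi_k)_{k \in \Ni}$ of $(D(A), \|\cdot\|_A)$, Gram--Schmidt orthonormalise it in $\mathcal H$ to obtain an $\mathcal H$-orthonormal sequence $(e_k)_{k \in \Ni}$ in $D(A)$ with $\spann\{e_1, \dots, e_n\} = \spann\{\psi_1, \dots, \psi_n\}$ for every $n$, and set $V_n = \spann\{e_1, \dots, e_n\}$ together with $P_n$, the $\mathcal H$-orthogonal projection onto $V_n$. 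Then $\bigcup_n V_n \subset D(A)$ is graph-norm dense in $D(A)$, hence a core for both $A$ and $B$, and $P_n \to I$ strongly on $\mathcal H$. Defining
\[
A_n = P_n A P_n, \qquad B_n = P_n B P_n,
\]
one obtains bounded self-adjoint operators of finite rank (hence with pure point spectrum), and $A_n - B_n = P_n C P_n$ is Hilbert--Schmidt with $\|A_n - B_n\|_\HS \leq \|C\|_\HS$. Proposition~\ref{pBS204} therefore yields $\|f(A_n) - f(B_n)\|_\HS \leq L \, \|C\|_\HS$ uniformly in $n$.

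Next I would check that $A_n \to A$ and $B_n \to B$ in the strong resolvent sense. For $u \in V_m$ one has $P_n u = u$ whenever $n \geq m$, so $A_n u = P_n A u \to A u$ in $\mathcal H$; the analogous computation works for $B_n$. Since $\bigcup V_n$ is a core for both $A$ and $B$, the Trotter--Kato criterion yields the desired strong resolvent convergence, which then upgrades to $f(A_n) \to f(A)$ and $f(B_n) \to f(B)$ in the strong operator topology, because $f$ is bounded and continuous. Consequently $f(A_n) - f(B_n) \to f(A) - f(B)$ strongly.

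The conclusion would then follow from lower semi-continuity of the Hilbert--Schmidt norm: applying Fatou's lemma to $\sum_k \|T_n u_k\|^2$ for any orthonormal basis $(u_k)$ of $\mathcal H$ shows that $\|T\|_\HS \leq \liminf_n \|T_n\|_\HS$ whenever $T_n \to T$ in the strong operator topology, whence
\[
\|f(A) - f(B)\|_\HS \leq \liminf_n \|f(A_n) - f(B_n)\|_\HS \leq L \, \|C\|_\HS.
\]
The most delicate point is the coordinated choice of the approximating subspaces $V_n$ so that $\bigcup V_n$ is simultaneously a core for $A$ and for $B$; this is exactly where the equivalence of the two graph norms, a consequence of the boundedness of $C$, enters.
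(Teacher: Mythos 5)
Your proof follows the same route as the paper's: compress to finite rank via $A_n = P_n A P_n$, $B_n = P_n B P_n$ with $P_n$ the projection onto the span of the first $n$ vectors of an $\mathcal H$-orthonormal basis contained in $D(A)=D(B)$, apply Proposition~\ref{pBS204} to get the uniform bound $\|f(A_n)-f(B_n)\|_\HS\le L\|P_nCP_n\|_\HS\le L\|C\|_\HS$, and pass to the limit using strong resolvent convergence (hence $f(A_n)\to f(A)$ strongly since $f$ is bounded and continuous) together with lower semi-continuity of the Hilbert--Schmidt norm under strong operator convergence. The only point where you go beyond the paper is your explicit construction of $(e_k)$ by Gram--Schmidt orthonormalisation in $\mathcal H$ of a graph-norm orthonormal basis of $D(A)$, which guarantees that $\bigcup_n V_n$ is a common core for $A$ and $B$ and hence that the Reed--Simon core criterion applies; the paper simply picks an $\mathcal H$-orthonormal basis inside $D(A)$ and asserts the strong resolvent convergence, deferring the care you take here to the analogous step in \cite{GesztesyPushnitskiSimon}.
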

\begin{proof}
By Proposition~\ref{pBS204} we my assume that $\mathcal H$ is infinite dimensional.
Since $\mathcal H$ is separable and $A$ is self-adjoint 
there exists an orthonormal basis $(e_k)_{k \in \Ni}$
for $\mathcal H$ with $e_k \in D(A)$ for all $k \in \Ni$.
For all $n \in \Ni$ let $P_n$ be the orthogonal projection of $\mathcal H$ onto 
$\spann \{ e_1,\ldots,e_n \} $, set $A_n = P_n \, A \, P_n$ and 
$B_n = P_n \, B \, P_n$.
Now $\lim_{n \to \infty} A_n = A$ in the strong resolvent sense.
Therefore $\lim_{n \to \infty} f(A_n) = f(A)$ strongly in $\mathcal H$ by 
\cite{RS1} Theorem~VIII.20, where we use that $f$ is bounded.
In particular, $\lim_{n \to \infty} f(A_n) e_k = f(A) e_k$ for all $k \in \Ni$.
Similarly $\lim_{n \to \infty} f(B_n) e_k = f(B) e_k$ for all $k \in \Ni$.

Let $m \in \Ni$.
Then for all $n \in \Ni$ with $n > m$ we obtain by Proposition~\ref{pBS204} that 
\begin{eqnarray*}
\sum_{k=1}^m \|(f(A_n) - f(B_n)) e_k\|_\mathcal H^2
& \leq & \sum_{k=1}^n \|(f(A_n) - f(B_n)) e_k\|_\mathcal H^2
= \|f(A_n) - f(B_n)\|_\HS^2  \\
& \leq & L^2 \, \|A_n - B_n\|_\HS^2
= L^2 \, \|P_n \, C \, P_n\|_\HS^2
\leq L^2 \, \|C\|_\HS^2
.  
\end{eqnarray*}
Now take the limit $n \to \infty$ to deduce that 
\[
\sum_{k=1}^m \|(f(A) - f(B)) e_k\|_\mathcal H^2
\leq L^2 \, \|C\|_\HS^2
.  \]
The limit $m \to \infty$ gives
$\|f(A) - f(B)\|_\HS^2 \leq L^2 \, \|C\|_\HS^2$ 
as required.
\end{proof}

In the next step we drop the assumption that $f$ is bounded, 
but keep that $\mathcal H$ is separable.

\begin{proposition} \label{pBS206}
Adopt the assumptions and notation of Theorem~{\rm \ref{tBS101}}.
Suppose in addition that the  Hilbert space 
$\mathcal H$ is separable.
Then $D(A) = D(B) \subset D(f(A)) = D(f(B))$ and the operator 
$f(A) - f(B)$ extends to a Hilbert--Schmidt operator,
by abuse of notation still denoted by $f(A) - f(B)$, such that 
\[
\|f(A) - f(B)\|_\HS
\leq L \, \|C\|_\HS
.  \]
\end{proposition}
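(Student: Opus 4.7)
The plan is to reduce to Proposition~\ref{pBS205} by truncating $f$ with the cutoffs $\chi_n$ from Lemma~\ref{lBS201}\ref{lBS201-3} and passing to the limit via Fatou's lemma on a carefully chosen orthonormal basis.

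First I would observe that $C \in \cl_\HS(\ch)$ is in particular bounded, so $A = B + C$ forces $D(A) = D(B)$; Lemma~\ref{lBS201}\ref{lBS201-1} then gives $D(A) = D(B) \subset D(f(A)) \cap D(f(B))$, so the operator $T_0 := f(A) - f(B)$ is well defined on the dense subspace $D(A)$. Since $f(A)$ and $f(B)$ are self-adjoint and any $u,v \in D(A)$ both lie in $D(f(A)) \cap D(f(B))$, the operator $T_0$ is symmetric. Now set $f_n = \chi_n \circ f$. Each $f_n$ is bounded and is $L$-Lipschitz, since $\chi_n$ has Lipschitz constant $1$, so Proposition~\ref{pBS205} yields
\[
\|f_n(A) - f_n(B)\|_\HS \leq L \, \|C\|_\HS
\qquad (n \in \Ni) .
\]

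As in the proof of Proposition~\ref{pBS205}, pick an orthonormal basis $(e_k)_{k \in \Ni}$ of $\ch$ with $e_k \in D(A)$ for every $k$. For each such $e_k$, Lemma~\ref{lBS201}\ref{lBS201-3} applied both to $f(A)$ and to $f(B)$ yields $f_n(A) e_k \to f(A) e_k$ and $f_n(B) e_k \to f(B) e_k$ in $\ch$; hence $(f_n(A) - f_n(B)) e_k \to T_0 e_k$. Combining this with the uniform bound above and Fatou's lemma (applied to counting measure on $\Ni$) gives
\[
\sum_{k \in \Ni} \|T_0 e_k\|_\ch^2
\leq \liminf_{n \to \infty} \sum_{k \in \Ni} \|(f_n(A) - f_n(B)) e_k\|_\ch^2
\leq L^2 \, \|C\|_\HS^2 .
\]
Lemma~\ref{lBS203} applied to the densely defined symmetric operator $T_0$ then produces a bounded extension $T$ of $T_0$ on $\ch$. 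Since $e_k \in D(T_0)$ one has $T e_k = T_0 e_k$, so the preceding bound yields $\|T\|_\HS^2 = \sum_k \|T e_k\|_\ch^2 \leq L^2 \, \|C\|_\HS^2$, exhibiting $T$ as a Hilbert--Schmidt extension of $f(A) - f(B)$.

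The only delicate point is arranging the pointwise limits $f_n(A) e_k \to f(A) e_k$ and $f_n(B) e_k \to f(B) e_k$, which forces the basis $(e_k)$ to be chosen inside $D(A) = D(B)$, so that Lemma~\ref{lBS201}\ref{lBS201-3} is applicable to both operators on every basis vector. Once this is done, Fatou supplies the $\HS$-bound on $T_0$, and the symmetry of $T_0$ lets Lemma~\ref{lBS203} do the rest, the Hilbert--Schmidt property then being automatic from the basis sum already controlled.
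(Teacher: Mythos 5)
Your core argument for the Hilbert--Schmidt bound is correct and takes a genuinely different route from the paper. Where the paper extracts a weak limit $T$ of the sequence $T_n = f_n(A) - f_n(B)$ in the Hilbert space $\cl_\HS(\ch)$ (via boundedness and weak compactness) and then identifies $T$ with $f(A)-f(B)$ on $D(A)$ by comparing weak and strong limits, you instead observe that $T_0 := f(A) - f(B)$ is densely defined and symmetric on $D(A)$, obtain the bound $\sum_k \|T_0 e_k\|_\ch^2 \leq L^2 \|C\|_\HS^2$ directly by Fatou's lemma, and then invoke Lemma~\ref{lBS203} to produce the bounded Hilbert--Schmidt extension. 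This avoids weak compactness in $\cl_\HS(\ch)$ altogether, is somewhat more elementary, and puts Lemma~\ref{lBS203} to work, which the paper's own proof of this proposition does not do.

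There is, however, a gap: the proposition also asserts $D(f(A)) = D(f(B))$, and you only establish the inclusion $D(A) = D(B) \subset D(f(A)) \cap D(f(B))$. Relatedly, you verify that $T$ agrees with $f(A)-f(B)$ only on $D(A)$, whereas the natural domain of $f(A)-f(B)$ is $D(f(A)) \cap D(f(B))$, which may be larger. Both points are settled exactly as in the paper's last paragraph: once $T$ is a bounded extension, one has $f(A)u = f(B)u + Tu$ for all $u \in D(A)$; since $D(A)$ is a core for $f(B)$ by Lemma~\ref{lBS201}\ref{lBS201-2} and $f(A)$ is closed, Lemma~\ref{lBS202} gives $D(f(B)) \subset D(f(A))$ with $f(A)u = f(B)u + Tu$ for all $u \in D(f(B))$; interchanging $A$ and $B$ (and $T$ with $-T$) gives the reverse inclusion, hence $D(f(A)) = D(f(B))$ and $T$ extends $f(A)-f(B)$ on its full domain. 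Adding this step completes your proof.
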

\begin{proof}
The equality $D(A) = D(B)$ is trivial.

For all $n \in \Ni$ define $\chi_n \colon \Ri \to \Ri$ by 
$\chi_n = (-n) \vee x \wedge n$.
Then $\chi_n$ is Lipschitz continuous with Lipschitz constant $1$.
Hence the composition $f_n = \chi_n \circ f$ is Lipschitz continuous 
with Lipschitz constant $M$.
Define $T_n = f_n(A) - f_n(B)$.
It follows from Proposition~\ref{pBS205} that $T_n \in \cl_\HS(\mathcal H)$ and 
$\|T_n\|_\HS \leq L \, \|C\|_\HS$.
Hence the sequence $(T_n)_{n \in \Ni}$ is bounded in the Hilbert space
$\cl_\HS(\mathcal H)$.
Passing to a subsequence if necessary, there exists a $T \in \cl_\HS(\mathcal H)$
such that $\lim_{n \to \infty} T_n = T$ weakly in $\cl_\HS(\mathcal H)$.
Then $\lim_{n \to \infty} T_n = T$ in the weak operator topology on $\cl(\mathcal H)$
and $\lim_{n \to \infty} T_n u = T u$ weakly in $\mathcal H$ for all $u \in \mathcal H$

Now let $u \in D(A)$.
Then $u \in D(f(A))$ by Lemma~\ref{lBS201}\ref{lBS201-1}.
So $\lim_{n \to \infty} f_n(A) u = f(A) u$ in $\mathcal H$ by Lemma~\ref{lBS201}\ref{lBS201-3}.
Similarly $\lim_{n \to \infty} f_n(B) u = f(B) u$.
Hence $\lim_{n \to \infty} T_n u = (f(A) - f(B)) u$ in $\mathcal H$.
Therefore $T u = (f(A) - f(B)) u$ for all $u \in D(A)$
and $f(A) u = f(B) u + T u$ for all $u \in D(A)$.
Since $D(A)$ is a core for $f(B)$ by Lemma~\ref{lBS201}\ref{lBS201-2} 
and $f(A)$ is closed, it follows from Lemma~\ref{lBS202} that 
$D(f(B)) \subset D(f(A))$ and 
$f(A) u = f(B) u + T u$ for all $u \in D(f(A))$.
Similarly $D(f(A)) \subset D(f(B))$.
Hence $D(f(A)) = D(f(B))$.
Moreover, $T$ is an extension of $f(A) - f(B)$.

Finally, since $\lim_{n \to \infty} T_n = T$ weakly in $\cl_\HS(\mathcal H)$
one deduces that 
\[
\|T\|_\HS
\leq \liminf_{n \to \infty} \|T_n\|_\HS
\leq L \, \|C\|_\HS
\]
as required.
\end{proof}

Finally we drop the assumption that $\mathcal H$ is separable.

\begin{proof}[{\bf Proof of Theorem~\ref{tBS101}.}]
Since $C$ is a compact operator, there exists a separable subspace $\mathcal H_0$ of 
$\mathcal H$ such that $C(\mathcal H_0) \subset \mathcal H_0$ and $C u = 0$ for all $u \in \mathcal H_0^\perp$.
Let $U$ and $V$ be the Cayley transforms of $A$ and $B$.
There exists a separable closed subspace $\mathcal H_1$ of $\mathcal H$ such that
$\mathcal H_0 \subset \mathcal H_1$ and $\mathcal H_1$ is invariant under the four bounded 
operators $U$, $V$, $U^*$ and $V^*$.
Write $\mathcal H_2 = \mathcal H_1^\perp$.
Then $\mathcal H_2$ is invariant under $U$ and $V$.
Let $P \colon \mathcal H \to \mathcal H_1$ be the orthogonal projection onto~$\mathcal H_1$.
Then the restriction of $U$ to $\mathcal H_1$ is a unitary map from $\mathcal H_1$ onto~$\mathcal H_1$.
This corresponds to a self-adjoint operator $A_1$ in $\mathcal H_1$.
Moreover, if $u \in D(A)$, then $P u \in D(A_1)$ and $A_1 P u = P A u$.
Similarly one can define a self-adjoint operator $A_2$ in~$\mathcal H_2$.
Moreover, one can define similarly operators $B_1$ and~$B_2$.

Since $A = B + C$ one deduces that  $A_1 = B_1 + C$ and $A_2 = B_2$.
Proposition~\ref{pBS206} gives 
$D(A_1) = D(B_1) \subset D(f(A_1)) = D(f(B_1))$ and the operator 
$f(A_1) - f(B_1)$ extends to a Hilbert--Schmidt operator on $\mathcal H_1$,
denoted by $T_1$, such that 
$\|T_1\|_\HS \leq L \, \|C\|_\HS$.
Trivially $f(A_2) = f(B_2)$.
Now $f(A) = f(A_1) \oplus f(A_2)$, with a similar decomposition for $f(B)$.
So $D(f(A)) = D(f(B))$.
Moreover, 
$f(A) - f(B) = (f(A_1) - f(B_1)) \oplus 0$.
Hence the operator $T_1 \oplus 0$ extends $f(A) - f(B)$.
Furthermore, $T \oplus 0$ is a Hilbert--Schmidt operator
with $\|T \oplus 0\|_\HS \leq L \, \|C\|_\HS$.
The proof of Theorem~\ref{tBS101} is complete.
\end{proof}

\section{Resolvent estimates} \label{s-resolvent}
Very often one is in the situation that self-adjoint operators $A$ and $B$ do not 
differ by a Hilbert--Schmidt operator, but the difference of 
the resolvents $(A - \lambda)^{-1} - (B - \lambda)^{-1}$ is a Hilbert--Schmidt operator.
Under an additional decay assumption on the function $f$ we shall 
show that $f(A) - f(B)$ is a Hilbert--Schmidt operator and 
estimate the Hilbert--Schmidt norm in terms of the 
Hilbert--Schmidt norm of the difference of the resolvents.

\begin{theorem} \label{t-BSunbounded}
Let $A$ and $B$ be two lower-bounded self-adjoint operators
with lower bound $\rho \in \Ri$.
Let $\lambda > - \rho$.
Let $f \colon (0,\frac{1}{\lambda - \rho}] \to \Ri$ be a Lipschitz continuous function
with Lipschitz constant $L$.
Define $g \colon [-\rho,\infty) \to \Ri$ by $g(x) = f(\frac{1}{x + \lambda})$.
Suppose that the difference of the resolvents $(A+ \lambda \, I)^{-1} - (B  + \lambda \, I)^{-1}$
is a Hilbert--Schmidt operator.
Then $g(A)$ and $g(B)$ are bounded operators, their difference is a Hilbert--Schmidt operator
and 
\[
\|g(A) - g(B) \|_\HS \leq L \, \|(A + \lambda )^{-1} - (B + \lambda )^{-1} \|_\HS.
\]
\end{theorem}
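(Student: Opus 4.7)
The plan is to reduce Theorem~\ref{t-BSunbounded} to Theorem~\ref{tBS101} by passing to the resolvents. Set $\tilde A := (A + \lambda)^{-1}$ and $\tilde B := (B + \lambda)^{-1}$. Because $A$ and $B$ are self-adjoint with common lower bound $\rho$ and $\lambda > -\rho$, the operators $A + \lambda$ and $B + \lambda$ are strictly positive, so $\tilde A$ and $\tilde B$ are bounded self-adjoint operators whose spectra are contained in the domain of $f$. By hypothesis their difference $C := \tilde A - \tilde B$ is Hilbert--Schmidt, hence the triple $(\tilde A, \tilde B, C)$ fits into the setting of Theorem~\ref{tBS101}.

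Before invoking that theorem I first extend $f$ to a Lipschitz continuous function $\tilde f \colon \Ri \to \Ri$ with the same Lipschitz constant $L$; this is a routine McShane-type construction (extend $f$ continuously to the closure of its domain, then extend by constant values outside). Theorem~\ref{tBS101} then yields that $\tilde f(\tilde A) - \tilde f(\tilde B)$ is a Hilbert--Schmidt operator with
\[
\|\tilde f(\tilde A) - \tilde f(\tilde B)\|_\HS \leq L \, \|\tilde A - \tilde B\|_\HS = L \, \|(A + \lambda)^{-1} - (B + \lambda)^{-1}\|_\HS.
\]

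The last step is to identify $\tilde f(\tilde A)$ with $g(A)$, and analogously for $B$. With $\phi(x) := (x + \lambda)^{-1}$ one has $\phi(A) = \tilde A$ by the functional calculus, and on the spectrum of $\tilde A$ the function $\tilde f$ coincides with $f$. The composition rule of the bounded Borel functional calculus therefore gives $\tilde f(\tilde A) = (\tilde f \circ \phi)(A) = (f \circ \phi)(A) = g(A)$, and similarly $\tilde f(\tilde B) = g(B)$. Since $\tilde A$ has compact spectrum and $\tilde f$ is continuous, $g(A)$ is a bounded operator, and likewise for $g(B)$. The main obstacle, insofar as there is one, is the bookkeeping around the Lipschitz extension of $f$ and the invocation of the composition rule; no substantive analytic difficulty arises beyond that already settled in Theorem~\ref{tBS101}.
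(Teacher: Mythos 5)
Your proposal is correct and follows essentially the same route as the paper: both pass to the bounded self-adjoint resolvents $\tilde A = (A+\lambda)^{-1}$, $\tilde B = (B+\lambda)^{-1}$, extend $f$ to a globally Lipschitz function on $\Ri$ with the same constant $L$ by constant continuation outside the interval $(0,\frac{1}{\lambda-\rho}]$, identify $g(A) = f((A+\lambda)^{-1})$ via the spectral composition rule, and then invoke Theorem~\ref{tBS101}. The only difference is that you spell out the Lipschitz extension and the composition-rule bookkeeping a bit more explicitly than the paper does.
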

\begin{proof}
Note that $g(A) = f((A+ \lambda \, I)^{-1})$.
We may extend $f$ to a Lipschitz continuous function, again denoted by $f$, 
by taking it constant on $(-\infty,0]$ and constant on $(\frac{1}{\lambda - \rho},\infty)$.
Then $f$ is bounded, so $f((A+ \lambda \, I)^{-1})$ is bounded.
Similarly $g(B) = f((B+ \lambda \, I)^{-1})$ is bounded.
Then the theorem follows from Theorem~\ref{tBS101} or Proposition~\ref{pBS205}.
\end{proof}

\begin{cor} \label{cBS301}
Let $A$ and $B$ be two lower-bounded self-adjoint operators
with lower bound $\rho \in \Ri$.
Let $\lambda > - \rho$.
Let $g \colon [\rho,\infty) \to \Ri$ be a differentiable function
and suppose that $L = \sup_{x \in [\rho,\infty)} (x+\lambda)^2 \, |g'(x)| < \infty$.
Suppose that the difference of the resolvents $(A+ \lambda \, I)^{-1} - (B  + \lambda \, I)^{-1}$
is a Hilbert--Schmidt operator.
Then $g(A)$ and $g(B)$ are bounded operators, their difference is a Hilbert--Schmidt operator
and 
\[
\|g(A) - g(B) \|_\HS \leq L \, \|(A + \lambda )^{-1} - (B + \lambda )^{-1} \|_\HS
.  \]
\end{cor}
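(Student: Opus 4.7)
The plan is to reduce Corollary~\ref{cBS301} to Theorem~\ref{t-BSunbounded} by making a change of variables that exchanges $g$ on the spectrum of $A$ (and $B$) with a Lipschitz function $f$ on an interval of the form $(0,\frac{1}{\rho+\lambda}]$.

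Concretely, since $A,B \geq \rho$, we have $A+\lambda, B+\lambda \geq \rho+\lambda > 0$, so the spectra of $(A+\lambda)^{-1}$ and $(B+\lambda)^{-1}$ lie in $(0,\frac{1}{\rho+\lambda}]$. I would therefore define
\[
f \colon \bigl(0,\tfrac{1}{\rho+\lambda}\bigr] \to \Ri,
\qquad
f(t) := g\bigl(\tfrac{1}{t} - \lambda\bigr),
\]
so that $g(x) = f\bigl(\tfrac{1}{x+\lambda}\bigr)$ for all $x \in [\rho,\infty)$, and hence $g(A) = f((A+\lambda)^{-1})$ and $g(B) = f((B+\lambda)^{-1})$ by the functional calculus.

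The next step is to verify that $f$ is Lipschitz with constant exactly~$L$. Since $g$ is differentiable, so is $f$, and the chain rule gives
\[
f'(t) = -\tfrac{1}{t^2}\, g'\bigl(\tfrac{1}{t}-\lambda\bigr).
\]
Substituting $x = \tfrac{1}{t} - \lambda$, so that $\tfrac{1}{t^2} = (x+\lambda)^2$, we obtain $|f'(t)| = (x+\lambda)^2\,|g'(x)| \leq L$ for every $t \in (0,\frac{1}{\rho+\lambda}]$. Since this interval is convex, the mean value theorem yields $|f(t_1) - f(t_2)| \leq L\,|t_1 - t_2|$ for all $t_1,t_2$ in the interval, i.e., $f$ is Lipschitz with constant $L$.

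With this in hand, Theorem~\ref{t-BSunbounded} applies directly and yields that $g(A)$ and $g(B)$ are bounded, their difference extends to a Hilbert--Schmidt operator, and
\[
\|g(A) - g(B)\|_\HS \leq L\,\|(A+\lambda)^{-1} - (B+\lambda)^{-1}\|_\HS,
\]
which is the desired conclusion. There is really no main obstacle here beyond carefully matching the intervals and checking the chain-rule identity that converts the weighted sup bound on $g'$ into the unweighted Lipschitz constant of~$f$; the entire work of the corollary is absorbed into Theorem~\ref{t-BSunbounded}.
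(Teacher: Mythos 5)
Your argument matches the paper's proof exactly: substitute $f(t) = g\bigl(\tfrac{1}{t}-\lambda\bigr)$, verify via the chain rule that $|f'|\le L$, and invoke Theorem~\ref{t-BSunbounded}. Incidentally, your interval $(0,\tfrac{1}{\rho+\lambda}]$ is the correct one for operators bounded below by $\rho$; the paper writes $(0,\tfrac{1}{\lambda-\rho}]$, which appears to be a sign slip inherited from the statement of Theorem~\ref{t-BSunbounded} (where, consistently, $[-\rho,\infty)$ should read $[\rho,\infty)$).
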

\begin{proof}
Define $f \colon (0,\frac{1}{\lambda - \rho}] \to \Ri$ by $f(t) = g(\frac{1}{t} - \lambda)$.
Then $g(x) = f(\frac{1}{x + \lambda})$ for all $x \in [\rho,\infty)$.
Moreover, $f'(t) = - \frac{1}{t^2} \, g'(\frac{1}{t} - \lambda) = - (x+\lambda)^2 \, g'(x)$
for all $t \in (0,\frac{1}{\lambda - \rho}]$, where $x = \frac{1}{t} - \lambda$.
Hence $f$ is Lipschitz continuous with Lipschitz constant~$L$.
Now apply Theorem~\ref{t-BSunbounded}.
\end{proof}

\section{A contractive iteration scheme} \label{ss-scheme}

In this section we provide a sufficient condition to obtain a solution of 
a nonlinear map on a real Hilbert space via the Banach contraction theorem.

If $\ch$ is a real Hilbert space, $A \colon \ch \to \ch^*$ is a map
and $m > 0$, then $A$ is called {\bf strongly monotone} with {\bf monotonicity constant} $m$
if $\langle Au - Av, u-v \rangle_{\ch^* \times \ch} \geq m \, \|u-v\|_\ch^2$
for all $u,v \in \ch$.
Note that then $\|A u - A v\|_\ch \geq m \|u-v\|_\ch$ and $A$ is injective.

\begin{theorem} \label{t-ggz}
Let $\mathcal H$ be a real Hilbert space, $m > 0$ and 
$A\colon \mathcal H \to \mathcal H^*$ a strongly monotonous map
with monotonicity constant $m$.
Let $R > 0$ and let $\cb = \{ u \in \ch : \|u\|_\ch \leq R \} $ 
be the closed ball centred at $0$ and radius $R$.
Suppose that the restriction 
$A|_\cb$ is Lipschitz continuous with Lipschitz constant $M$.
Let $J\colon \mathcal H \to \mathcal H^*$ denote the duality map.
Fix $f \in \ch^*$.
Define $Q \colon \cb \to \ch$ by
\[
Qu = u - \frac {m}{M^2} \, J^{-1}(Au -f )
.  \]
Then one has the following.
\begin{tabel} 
\item \label{t-ggz-1}
The map $Q$ is a contraction with contraction constant $\sqrt{ 1 - \frac {m^2}{M^2}}$.
\item \label{t-ggz-2}
Suppose $R \ge \frac {2}{m} \, \|A0-f\|_{\mathcal H^*}$.
Then $\Q$ maps $\mathcal B$ into itself.
\item \label{t-ggz-3}
Suppose $R \ge \frac {2}{m} \, \|A0-f\|_{\mathcal H^*}$.
Then there exists a unique $u \in \cb$ such that 
$A u = f$.
Define $u_1 = 0$ and for all $n \in \Ni$ define $u_{n+1} = Q u_n$.
Then $u = \lim_{n \to \infty} u_n$.
Moreover,
\[
\|u\|_{\mathcal H} \leq \frac {1}{m} \, \|A0-f\|_{\mathcal H^*}.
\]
\item \label{t-ggz-4}
Let also $g \in \ch^*$.
Suppose $R \ge \frac {2}{m} \, \|A0-f\|_{\mathcal H^*}$
and $R \ge \frac {2}{m} \, \|A0-g\|_{\mathcal H^*}$.
Then there exist unique $u,v \in \ch$ such that 
$A u = f$ and $A v = g$.
Furthermore, $\|u - v\|_\ch \leq \frac{1}{m} \, \|f-g\|_{\ch^*}$.
\end{tabel}
\end{theorem}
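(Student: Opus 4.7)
The main tool is the Riesz duality map identity $(J^{-1}\varphi, v)_\ch = \langle \varphi, v \rangle_{\ch^* \times \ch}$ together with $\|J^{-1}\varphi\|_\ch = \|\varphi\|_{\ch^*}$, which translates the dual pairing into a $\ch$-inner product. The global plan is to prove \ref{t-ggz-1} and \ref{t-ggz-2} by a direct expansion, then extract \ref{t-ggz-3} from the Banach contraction theorem, and deduce \ref{t-ggz-4} from strong monotonicity.

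For \ref{t-ggz-1}, I would expand
\[
\|Qu - Qv\|_\ch^2 = \|u-v\|_\ch^2 - 2\tfrac{m}{M^2}\langle Au-Av, u-v\rangle + \tfrac{m^2}{M^4}\|J^{-1}(Au-Av)\|_\ch^2.
\]
Strong monotonicity bounds the cross term by $-2\tfrac{m^2}{M^2}\|u-v\|_\ch^2$, and the Lipschitz hypothesis bounds the quadratic term by $\tfrac{m^2}{M^2}\|u-v\|_\ch^2$ (using $\|J^{-1}\cdot\|_\ch = \|\cdot\|_{\ch^*}$); the sum is exactly $(1 - m^2/M^2)\|u-v\|_\ch^2$. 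For \ref{t-ggz-2}, write $\|Qu\|_\ch \le \|Qu - Q0\|_\ch + \|Q0\|_\ch$. Part \ref{t-ggz-1} dominates the first summand by $\sqrt{1 - m^2/M^2}\,R$, and the second equals $\tfrac{m}{M^2}\|A0-f\|_{\ch^*}$. The required inequality $\sqrt{1-c}\,R + \tfrac{m}{M^2}\|A0-f\|_{\ch^*} \le R$ with $c = m^2/M^2$ reduces, via the elementary estimate $1 - \sqrt{1-c} \ge c/2$ for $c \in [0,1]$, to the hypothesis $R \ge \tfrac{2}{m}\|A0-f\|_{\ch^*}$. This scalar manipulation is the only mildly delicate step.

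For \ref{t-ggz-3}, parts \ref{t-ggz-1} and \ref{t-ggz-2} make $Q$ a strict contraction on the complete metric space $\cb$, so the Banach fixed point theorem yields a unique fixed point $u \in \cb$, obtained as $\lim u_n$ with $u_1 = 0$ and $u_{n+1} = Qu_n$. Since $J^{-1}$ is injective, the fixed point identity reduces to $Au = f$, and strong monotonicity upgrades uniqueness from $\cb$ to all of $\ch$. For the norm bound I would pair $Au - A0 = f - A0$ with $u$: strong monotonicity gives $m\|u\|_\ch^2 \le \langle Au - A0, u\rangle$, while the dual pairing bounds the right side by $\|f-A0\|_{\ch^*}\|u\|_\ch$.

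Finally, for \ref{t-ggz-4} I would apply \ref{t-ggz-3} twice to obtain $u, v$ with $Au = f$ and $Av = g$, then pair $Au - Av = f - g$ with $u - v$: strong monotonicity on the left yields $m\|u-v\|_\ch^2$, the duality pairing on the right yields at most $\|f-g\|_{\ch^*}\|u-v\|_\ch$, and dividing gives the claimed Lipschitz estimate. The main obstacle in the whole argument is really just verifying the scalar inequality in \ref{t-ggz-2} that ensures $Q$ preserves $\cb$; everything else is routine Hilbert-space bookkeeping built on the duality map identities and Banach's theorem.
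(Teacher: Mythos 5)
Your proposal is correct and follows essentially the same route as the paper: the same expansion of $\|Qu-Qv\|_\ch^2$ for~\ref{t-ggz-1}, the same triangle-inequality split for~\ref{t-ggz-2} (with the scalar inequality $\sqrt{1-c}+c/2\le 1$ appearing in the equivalent form $1-\sqrt{1-c}\ge c/2$), Banach's fixed point theorem plus the pairing of $Au-A0$ with $u$ for~\ref{t-ggz-3}, and pairing $Au-Av$ with $u-v$ for~\ref{t-ggz-4}. There is no substantive difference from the paper's argument.
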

\begin{proof}
`\ref{t-ggz-1}'.
Let $u,v \in \cb$.
Using the duality map $J$, one estimates
\begin{eqnarray*}
\|Q u - Q v\|_\mathcal H^2 
& = & (u - v -\frac {m}{M^2} \, J^{-1}(Au  - Av) , 
            u - v -\frac {m}{M^2}J^{-1}(Au  - Av))_\mathcal H  \\
& = & \|u-v\|_\mathcal H^2 -2\frac {m}{M^2} (J^{-1}(Au  - Av),u-v)_\mathcal H + 
   \frac {m^2}{M^4} \, \|J^{-1}(Au  - Av)\|^2_\mathcal H  \\
& = & \|u-v\|_\mathcal H^2 -2\frac {m}{M^2} 
     \langle Au  - Av, u-v \rangle _{\mathcal H ^*\times \mathcal H }
   + \frac {m^2}{M^4} \, \|Au  - Av\|^2_{\mathcal H^*}  \\
& \le & \|u-v\|_\mathcal H^2 -2\frac {m^2}{M^2} \|u-v \|^2 _{\mathcal H }+ 
\frac {m^2}{M^2} \|u  - v\|^2_{\mathcal H} 
= (1- \frac {m^2}{M^2}) \|u  - v\|^2_{\mathcal H}
,  
\end{eqnarray*}
where we used the strong monotonicity and Lipschitz continuity in the 
last inequality.

`\ref{t-ggz-2}'.
Let $u \in \mathcal B$.
Then if follows from Statement~\ref{t-ggz-1} that 
\begin{eqnarray*}
\|\Q u\| _\mathcal H 
& \le & \|\Q u -\Q 0\| _\mathcal H +\|\Q 0\| _\mathcal H 
\le \sqrt{ 1 - \frac {m^2}{M^2}}\|u\|_\mathcal H + \frac {m}{M^2}\|A0 -f \|_{\mathcal H^*}  \\
& \le & \Bigl(\sqrt{ 1 - \frac {m^2}{M^2}} +  \frac {m^2}{2M^2}\Bigr ) R 
\le R
.
\end{eqnarray*}

`\ref{t-ggz-3}'. 
Most follows from the Banach fixed point theorem. 
Since
\[
\|f- A0 \|_{\mathcal H^*} \, \|u\|_\mathcal H 
= \|Au - A0 \|_{\mathcal H^*} \, \|u\|_\mathcal H 
\ge \langle Au - A0 ,u -0 \rangle_{\mathcal H ^*\times \mathcal H } 
\ge m \|u\|_\ch^2,
\]
the estimate follows.

`\ref{t-ggz-4}'. 
The existence follows from Statement~\ref{t-ggz-3}.
The monotonicity gives
$m \, \|u-v\|_\ch^2
\leq \langle Au - Av, u-v \rangle_{\ch^* \times \ch}
= \langle f-g, u-v \rangle_{\ch^* \times \ch}
\leq \|f-g\|_{\ch^*} \, \|u-v\|_\ch$
and hence $m \, \|u-v\|_\ch \leq	 \|f-g\|_{\ch^*}$.
\end{proof}

In \cite{ZeidlerIIB} Section~25.4
and \cite{GGZ} Theorem~3.4 in Subsection~III.3.2  \emph{global} Lipschitz continuity 
for $A$ is demanded, but in the latter is at least indicated that local 
Lipschitz continuity suffices. 
The required radius of the ball $\mathcal B$ in \cite{GGZ}, however, is not optimal, 
and the proof is unnecessarily complicated.
In general, \emph{global} Lipschitz continuity is a severe restriction and, in 
general not fulfilled if $A$ is not linear.
In particular, the 
operator used in Proposition~\ref{tBS509.5} is not globally Lipschitz.

\section{An application in density functional theory} \label{s-SchroedPois}

In this section we give an application of Corollary~\ref{cBS301}
via a Schr\"odinger--Poisson system used in electron transport theory.
The main theorem of this section is as follows.

\begin{theorem} \label{tBS510}
Let $d \in \{ 1,2,3 \} $ and $\Omega \subset \Ri^d$ a bounded open set 
with Lipschitz boundary.
Let $\varepsilon, \mathfrak m \colon \Omega \to \Ri^{d \times d}$ be bounded
measurable functions with $\varepsilon(x)$ and $\mathfrak m(x)$
symmetric for all $x \in \Omega$.
Let $\mu, \tilde \mu > 0$.
Suppose that  
\[
(\varepsilon(x) \xi, \xi)_{\Ri^d} \geq \mu \, \|\xi\|_{\Ri^d}^2
\quad \mbox{and} \quad 
(\gotm(x) \xi, \xi)_{\Ri^d} \geq \tilde \mu \, \|\xi\|_{\Ri^d}^2
\]
for all $\xi \in \Ri^d$.
Let $f\colon \R \to (0,\infty)$ be a differentiable and strictly 
decreasing function with $f'$ locally bounded.
Moreover, suppose that both $r \mapsto r^4 \, f(r)$ 
and $r \mapsto r^4 \, f'(r)$ are bounded on $[0,\infty)$.
Then $f \colon \Ri \to f(\Ri)$ is bijective and assume 
in addition that its inverse is locally Lipschitz.
Let $D \subset \partial \Omega$ be closed.
Fix $N > 0$.

Let $W^{1,2}_D(\Omega)$ be the $W^{1,2}(\Omega)$-closure of the set 
$ \{ u|_\Omega : u \in C^\infty(\R^d) \mbox{ and } \supp u \cap D = \emptyset \} $.
Let $c_P > 0$.
Suppose that 
\[
\|u\|_{L_2(\Omega)}^2 \leq c_P \, \|\nabla u\|_{L_2(\Omega)}^2
\]
for all $u \in W^{1,2}_D(\Omega)$.
(So the space $W^{1,2}_D(\Omega)$ admits a Poincar\'e inequality.)
For all $V \in L_2(\Omega,\Ri)$ and $\psi,\varphi \in W^{1,2}_D(\Omega)$
it follows that $V \, \psi \, \overline \varphi \in L_1(\Omega)$.
Define $\gott_V \colon W^{1,2}_D(\Omega) \times W^{1,2}_D(\Omega) \to \Ci$ by 
\[
\gott_V[\psi,\varphi] 
= \gott[\psi,\varphi] + \int_\Omega V \, \psi \, \overline \varphi
.  \]
Then one has the following.
\begin{tabel}
\item \label{tBS510-1}
For all $V \in L_2(\Omega,\Ri)$ the sesquilinear form
$\gott_V$ is closed, symmetric and lower-bounded.
We denote by $H$ the associated operator in case $V = 0$ and by
$H \dotplus V$ the associated lower-bounded 
self-adjoint operator.
\item \label{tBS510-2}
For all $V \in L_2(\Omega,\Ri)$ the operator $f(H \dotplus V)$ is nuclear.
\item \label{tBS510-3}
For all $V \in L_2(\Omega,\Ri)$ there exists a unique $\ce(V) \in \Ri$ 
such that 
\[
\tr f(H \dotplus (V - \ce(V) \, \one_\Omega)) = N
.  \]
\item \label{tBS510-4}
For all $V \in L_2(\Omega,\Ri)$ there exists a unique $\cn(V) \in L_2(\Omega,\Ri)$ 
such that 
\[
\int_\Omega \cn(V) \, W 
= \tr \bigl ( M_W \, f(H \dotplus V - \ce(V)) \bigr )
\]
for all $W \in L_\infty(\Omega)$, where $M_W$ is the multiplication operator with~$W$.
\item \label{tBS510-5}
For all $V_0 \in L_2(\Omega,\Ri)$ and $q \in (W^{1,2}_D(\Omega,\Ri))^*$
there exists a unique $V \in W^{1,2}_D(\Omega,\Ri)$
such that 
\[
\int_\Omega \varepsilon \nabla V \cdot \nabla W
   - (\cn(V_0 + V), W)_{L_2(\Omega)}
= \langle q,W \rangle_{(W^{1,2}_D(\Omega,\Ri))^* \times W^{1,2}_D(\Omega,\Ri)}
\]
for all $W \in W^{1,2}_D(\Omega,\Ri)$.
Write $\Psi(V_0,q) = V$.
\item \label{tBS510-6}
If $V_0 \in L_2(\Omega,\Ri)$ and $q \in (W^{1,2}_D(\Omega,\Ri))^*$,
then 
\[
\|\Psi(V_0,q)\|_{W^{1,2}_D(\Omega,\Ri)}
\leq \frac{1 + c_P}{\mu} \, \|q + \cn(V_0)\|_{(W^{1,2}_D(\Omega))^*}
.  \]
\item \label{tBS510-7}
If $V_0 \in L_2(\Omega,\Ri)$,
then 
\[
\|\Psi(V_0,q) - \Psi(V_0,\tilde q)\|_{W^{1,2}_D(\Omega)}
\leq \frac{1 + c_P}{\mu} \, \|q - \tilde q\|_{(W^{1,2}_D(\Omega))^*}
\]
for all $q,\tilde q \in (W^{1,2}_D(\Omega,\Ri))^*$.
\item \label{tBS510-8}
For all $R > 0$ there is a $c > 0$ such that
\[
\|\Psi(V_0,q) - \Psi(V_1,\tilde q)\|_{W^{1,2}_D(\Omega)}
\leq c \, \|V_0 - V_1\|_{L_2}
   + \frac{1 + c_P}{\mu} \, \|q - \tilde q\|_{(W^{1,2}_D(\Omega))^*}
\]
for all $V_0,V_1 \in L_2(\Omega,\Ri)$ and $q,\tilde q \in (W^{1,2}_D(\Omega,\Ri))^*$ with 
$\|V_0\|_{L_2} \leq R$, $\|V_1\|_{L_2} \leq R$, 
$\|q\|_{(W^{1,2}_D(\Omega))^*} \leq R$ and $\|\tilde q\|_{(W^{1,2}_D(\Omega))^*} \leq R$.
\end{tabel}
\end{theorem}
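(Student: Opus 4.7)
The plan is to work through the eight items in order, drawing on form perturbation, the resolvent estimate of Corollary~\ref{cBS301}, and the contraction scheme of Theorem~\ref{t-ggz}. For part~\ref{tBS510-1} the Sobolev embedding $W^{1,2}_D(\Omega) \hookrightarrow L_4(\Omega)$ (valid since $d \le 3$) together with H\"older's inequality gives $\int_\Omega |V|\,|\psi|^2 \le \|V\|_{L_2}\,\|\psi\|_{L_4}^2$, which by interpolation is infinitesimally form-bounded with respect to $\gott$; the KLMN theorem then delivers closedness, symmetry and lower boundedness of $\gott_V$. For part~\ref{tBS510-2} the resolvent $(H + c)^{-1}$ is Hilbert--Schmidt in dimension $d \le 3$ because it factors through $W^{1,2}_D(\Omega)$ and the Sobolev embedding places it in the HS class; the second resolvent identity, together with the infinitesimal relative boundedness from part~\ref{tBS510-1}, transfers this to $(H \dotplus V + c)^{-1}$. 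Since $r^4 f(r)$ is bounded by hypothesis I factor $f(H \dotplus V) = \varphi(H \dotplus V)\,(H \dotplus V + c)^{-4}$ with $\varphi$ bounded and $(H \dotplus V + c)^{-4}$ trace class, hence $f(H \dotplus V)$ is nuclear.

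For part~\ref{tBS510-3} the spectrum of $H \dotplus V$ consists of isolated eigenvalues $(\lambda_n)$ tending to $\infty$, and $t \mapsto \sum_n f(\lambda_n - t)$ is continuous, strictly increasing, and tends to $0$ at $-\infty$ and to $+\infty$ at $+\infty$, yielding a unique $\ce(V)$. For part~\ref{tBS510-4} I will factor the nuclear operator $f(H \dotplus V - \ce(V))$ into Hilbert--Schmidt pieces and exploit the $L_2 \to L_\infty$ smoothing of the resolvent announced in the introduction to read off a jointly measurable Schwartz kernel with diagonal in $L_2(\Omega,\Ri)$; this diagonal is taken as $\cn(V)$ and the trace identity follows by Fubini, recovering the continuous extension of $W \mapsto \tr(M_W f(H \dotplus V - \ce(V)))$ from $L_\infty$ to $L_2^*$.

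Parts~\ref{tBS510-5}, \ref{tBS510-6} and \ref{tBS510-7} will follow by applying Theorem~\ref{t-ggz} to the map $A \colon \ch \to \ch^*$ given by $\langle A(V), W\rangle = \int_\Omega \varepsilon \nabla V \cdot \nabla W - (\cn(V_0 + V), W)_{L_2}$ on $\ch = W^{1,2}_D(\Omega,\Ri)$. Ellipticity of $\varepsilon$ combined with the Poincar\'e inequality gives $\langle -\dive(\varepsilon\nabla V), V\rangle \ge \frac{\mu}{1 + c_P}\|V\|^2_{W^{1,2}_D}$, and because $f$ is strictly decreasing the map $V \mapsto -\cn(V_0 + V)$ is monotone from $W^{1,2}_D$ into its dual, so $A$ is strongly monotone with constant $m = \mu/(1 + c_P)$. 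Local Lipschitz continuity of $A$ reduces to local Lipschitz continuity of $\cn \colon L_2 \to L_2$, which is the main quantitative use of Corollary~\ref{cBS301}: the second resolvent identity together with the $L_2 \to L_\infty$ smoothing of the resolvent produces $\|(H \dotplus V_0 + \lambda)^{-1} - (H \dotplus V_1 + \lambda)^{-1}\|_\HS \le C(R)\|V_0 - V_1\|_{L_2}$ on $L_2$-balls of radius $R$, Corollary~\ref{cBS301} then converts this into the HS estimate for $f(H \dotplus V_0 - \ce(V_0)) - f(H \dotplus V_1 - \ce(V_1))$, and the local Lipschitz continuity of $f^{-1}$ is what absorbs the drift of the Fermi level $\ce$.

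Part~\ref{tBS510-8} will be the main obstacle, because the dependence of $\Psi$ on $V_0$ enters only through the nonlinearity $\cn$ and not directly through the elliptic principal part. Setting $V = \Psi(V_0, q)$ and $\tilde V = \Psi(V_1, \tilde q)$, subtracting the two Poisson equations, and decomposing
\[
\cn(V_0 + V) - \cn(V_1 + \tilde V)
= \bigl(\cn(V_1 + V) - \cn(V_1 + \tilde V)\bigr) + \bigl(\cn(V_0 + V) - \cn(V_1 + V)\bigr),
\]
I test against $V - \tilde V \in W^{1,2}_D(\Omega,\Ri)$: the first bracket combines with the elliptic term via the antimonotonicity of $\cn$ to produce $\frac{\mu}{1 + c_P}\|V - \tilde V\|^2_{W^{1,2}_D}$ on the left; the second bracket is controlled in $L_2$ by $c(R)\|V_0 - V_1\|_{L_2}$ via the local Lipschitz continuity of $\cn$ established while proving part~\ref{tBS510-7}; and the $q - \tilde q$ term is paired with $V - \tilde V$ exactly as in part~\ref{tBS510-7}. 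Dividing by $\|V - \tilde V\|_{W^{1,2}_D}$ and collecting terms yields the stated bound, with the constant $c$ explicitly of the form $C(R)(1 + c_P)/\mu$.
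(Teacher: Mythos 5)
Your proposal follows the paper's architecture for items~\ref{tBS510-1}--\ref{tBS510-7}: the relative form bound via Sobolev embedding for~\ref{tBS510-1}, the factorisation through a trace-class power of the resolvent for~\ref{tBS510-2}, the monotone-limit argument for~\ref{tBS510-3}, and the reduction to Theorem~\ref{t-ggz} for~\ref{tBS510-5}--\ref{tBS510-7}. Two places where you gloss over work the paper does carefully: first, the claim that $V \mapsto -\cn(V_0+V)$ is (anti)monotone is not an immediate consequence of $f$ being decreasing --- the paper proves it (Propositions~\ref{pBS508}, \ref{pBS509}) by establishing in Proposition~\ref{pBS404} that $H \dotplus V = H + M_V$ is an operator sum with common domain $D(H)$, expanding in the two eigenbases and using the identity $((U-V)\psi_n,\varphi_m) = (\lambda_n - \mu_m)(\psi_n,\varphi_m)$, and then passing from $L_\infty$ potentials to $L_2$ potentials by density and the Lipschitz bound; this chain is essential and cannot be waved away. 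Second, the $L_2 \to L_\infty$ resolvent bound you invoke (to make $M_W (H\dotplus V+\lambda)^{-1}$ bounded with operator norm $\lesssim \|W\|_{L_2}$, and hence to get $\cm(V),\cn(V)\in L_2$) is not free: the paper derives it from Gaussian semigroup estimates plus a Laplace transform in Proposition~\ref{pBS404}\ref{pBS404-4}. Your phrasing for item~\ref{tBS510-4} in terms of the kernel diagonal is informal but amounts to the same thing once you have the quantitative estimate $|\Tr(M_W f(H\dotplus V))| \le C\|W\|_{L_2}$.

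For item~\ref{tBS510-8} you take a genuinely different route than the paper. The paper splits $\|\Psi(V_0,q)-\Psi(V_1,\tilde q)\|$ by the triangle inequality into a $q$-variation (handled by item~\ref{tBS510-7}) and a $V_0$-variation, and controls the latter by comparing the fixed points of the two explicit contraction maps $Q_0,Q_1$ from~(\ref{etBS510;20}); this produces a constant of the form $(1-\sqrt{1-m^2/M^2})^{-1}\frac{m}{M^2}\|I\|_{L_2\to\ch^*}$. You instead subtract the two weak formulations, test with $V-\tilde V$, split the nonlinearity as $\cn(V_0+V)-\cn(V_1+\tilde V)=(\cn(V_1+V)-\cn(V_1+\tilde V))+(\cn(V_0+V)-\cn(V_1+V))$, absorb the first bracket into the coercive elliptic term via antimonotonicity, bound the second by the local Lipschitz property, and divide through. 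This is cleaner, treats the $q$-variation and the $V_0$-variation simultaneously, and yields the sharper explicit constant $\frac{c(1+c_P)}{\mu}$ in front of $\|V_0-V_1\|_{L_2}$. Both arguments are correct; yours is arguably preferable once items~\ref{tBS510-6} and the local Lipschitz bound on $\cn$ are in place to justify that $V=\Psi(V_0,q)$ stays in a fixed $L_2$-ball.
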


The proof of Theorem~\ref{tBS510} requires quite some preparation
and the statements will be proved in separate lemmas and propositions in 
this section.
Throughout this section we adopt the notation and assumptions of Theorem~\ref{tBS510}.

Define the sesquilinear form 
$\mathfrak t \colon W^{1,2}_D (\Omega) \times W^{1,2}_D (\Omega) \to \Ci$
by
\[
\mathfrak t[\psi,\varphi] 
=  \int_\Omega \mathfrak  m^{-1} \nabla \psi \cdot \overline{\nabla \varphi} .
\]
Then $\gott$ is a closed positive symmetric form.
We define the unperturbed Schr\"odinger operator $H$ 
on $L_2(\Omega) $ as the operator associated to the form $\gott$.
Formally $H=-\dive \bigl (\mathfrak   m^{-1}\nabla \bigr )$
In semiconductor modelling $\mathfrak m$ describes the position dependent effective mass
(modulo Planck's constant and a factor~$2$).
Then $H$ is a positive self-adjoint operator.
The form domain $W^{1,2}_D(\Omega)$  
gives the realisation of the operator
$H$ with homogeneous Dirichlet conditions on $D$ and
homogeneous Neumann conditions on
$\partial \Omega \setminus D$.
We refer to \cite{EgertTolksdorf} for more details of the spaces $W^{1,2}_D (\Omega)$.

We need for the sequel that the resolvent of $H$ is a Hilbert--Schmidt operator,
or equivalently that $(H +1)^{-1/2}$ belongs to the Schatten $4$-class.

\begin{proposition} \label{p-HSWeyl}
$(H +1)^{-1/2} \in \cs_4$.
\end{proposition}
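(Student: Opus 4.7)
The plan is to show that $(H+1)^{-1}$ is Hilbert--Schmidt, which is equivalent to the claim that $(H+1)^{-1/2} \in \cs_4$. Since the Poincar\'e inequality makes $\gott + (\cdot,\cdot)_{L_2}$ equivalent to the $W^{1,2}$-norm on $W^{1,2}_D(\Omega)$, and the embedding $W^{1,2}(\Omega) \hookrightarrow L_2(\Omega)$ is compact by the Rellich--Kondrachov theorem on bounded Lipschitz domains, the operator $H$ has compact resolvent. Enumerate its eigenvalues $0 \le \lambda_1 \le \lambda_2 \le \dots$ with multiplicities; the claim is $\sum_n (\lambda_n+1)^{-2} < \infty$.

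First I would extract pointwise ellipticity of $\gotm^{-1}$. The bounds $\tilde\mu \, I \le \gotm(x) \le \|\gotm\|_\infty \, I$ give $\gotm^{-1}(x) \ge \|\gotm\|_\infty^{-1} \, I$, hence
\[
\gott[u] \ge c_0 \, \|\nabla u\|_{L_2}^2, \qquad c_0 := \|\gotm\|_\infty^{-1},
\]
for every $u \in W^{1,2}_D(\Omega)$.

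Next I would compare $H$ with the Neumann Laplacian $-\Delta_N$ on $\Omega$, whose form $\gott_N[u] = \|\nabla u\|_{L_2}^2$ lives on the larger space $W^{1,2}(\Omega) \supset W^{1,2}_D(\Omega)$. Since any finite-dimensional subspace of $W^{1,2}_D(\Omega)$ is also one of $W^{1,2}(\Omega)$, the Courant--Fischer min-max principle yields
\[
\lambda_n(H) \;\ge\; c_0 \, \lambda_n(-\Delta_N)
\]
for every $n \in \Ni$.

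Finally I would invoke Weyl asymptotics for the Neumann Laplacian on the bounded Lipschitz domain $\Omega$, which give $\lambda_n(-\Delta_N) \ge c_1 \, n^{2/d}$ for some $c_1 > 0$ and all sufficiently large $n$. Combining with the previous step,
\[
\sum_{n=1}^\infty (\lambda_n(H)+1)^{-2} \;\le\; C_0 + C_1 \sum_{n \ge N_0} n^{-4/d},
\]
and the tail converges iff $4/d > 1$, i.e.\ for $d \in \{1,2,3\}$. Thus $(H+1)^{-1} \in \cs_2$, which is precisely $(H+1)^{-1/2} \in \cs_4$. The main obstacle is the invocation of Weyl's law on a general bounded Lipschitz domain with Neumann boundary conditions; should a self-contained argument be preferred, one could replace this step by Dirichlet--Neumann bracketing, enclosing $\Omega$ in a cube $Q \supset \overline\Omega$, using a bounded extension $W^{1,2}(\Omega) \to W^{1,2}(Q)$ to compare $-\Delta_N^\Omega$ with $-\Delta_N^Q$, and then exploiting the explicit product-cosine spectrum of the Neumann Laplacian on the cube, whose $n$-th eigenvalue grows like $n^{2/d}$ by an elementary counting argument.
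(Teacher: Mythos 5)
Your proof is correct and follows essentially the same route as the paper: lower-bound $H$ by a multiple of the Neumann Laplacian via the min--max principle, then invoke Weyl asymptotics and use $d\le 3$ to conclude $\sum (\lambda_n+1)^{-2}<\infty$. The only difference is cosmetic (you make the comparison constant explicit and sketch a Dirichlet--Neumann bracketing alternative to Weyl's law, which the paper cites without proof).
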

\begin{proof}
Define $\gotl \colon W^{1,2}(\Omega) \times W^{1,2}(\Omega) \to \Ci$ by
$\gotl[\psi,\varphi] = \int_\Omega \nabla \psi \cdot \overline{\nabla \varphi}$.
Then there is a $c > 0$ such that 
$\gotl[\psi] \leq c \, \gott[\psi]$ for all $\psi \in W^{1,2}_D(\Omega)$.
If $\Delta_N$ is the Laplacian with Neumann boundary conditions, 
then $- \Delta_N$ is the operator associated with $\gotl$.
Let $\lambda_1 \leq \lambda_2 \leq \ldots$ be the eigenvalues of the operator $H$,
repeated with multiplicity and let 
$\mu_1 \leq \mu_2 \leq \ldots$ be the eigenvalues of the operator $-\Delta_N$,
repeated with multiplicity.
Then the mini-max theorem gives $\mu_n \leq c \, \lambda_n$ for all $n \in \Ni$.
The Weyl asymptotics give that there is a $c' > 0$ such that $\mu_n \geq c' \, n^{2/d}$
for large $n \in \Ni$.
Since $d \leq 3$ this implies that $\sum_{n=1}^\infty \Big( (\lambda_n + 1)^{-1/2} \Big)^4 < \infty$.
\end{proof}

In the sequel $H$ is perturbed by a potential $V \in L_2(\Omega,\Ri)$.
We will prove a form bound
with respect to $\mathfrak t$ in order to see that the form sum is well-defined.

\begin{lemma} \label{l-relformbound}
There is a $\gamma > 0$ such that 
\[
\Big | \int_\Omega V \psi \, \overline \psi \Big |
\leq   \frac34( \mathfrak t +1) [\psi]   +  \gamma \|V\|_{L_2}^4 \|\psi \|^2_{L_2}
\]
uniformly for all $V \in L_2(\Omega,\Ri)$ and $\psi \in W^{1,2}_D(\Omega)$.
\end{lemma}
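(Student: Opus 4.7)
The plan is to combine H\"older's inequality, a Gagliardo--Nirenberg estimate of the form $\|\psi\|_{L_4}^2 \leq c\,\|\psi\|_{W^{1,2}}^{3/2}\|\psi\|_{L_2}^{1/2}$ (available uniformly for $d \in \{1,2,3\}$), and Young's inequality with conjugate exponents $4/3$ and $4$; the exponent $4$ on $\|V\|_{L_2}$ and the coefficient $3/4$ in the claim then fall out by matching exponents and using the uniform ellipticity of $\gotm$ to dominate $\|\psi\|_{W^{1,2}}^2$ by $(\gott+1)[\psi]$.

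Concretely, I would first use H\"older's inequality to write
\[
\Big|\int_\Omega V\,\psi\,\overline\psi \,\Big| \leq \|V\|_{L_2}\,\|\psi\|_{L_4}^2 .
\]
Second, since $\Omega$ is a bounded Lipschitz domain and $d\leq 3$, one has $W^{1,2}(\Omega) \hookrightarrow L_6(\Omega)$, and interpolating $L_4$ between $L_2$ and $L_6$ via $\|\psi\|_{L_4} \leq \|\psi\|_{L_6}^{3/4}\|\psi\|_{L_2}^{1/4}$ yields a constant $c>0$ such that $\|\psi\|_{L_4}^2 \leq c\,\|\psi\|_{W^{1,2}(\Omega)}^{3/2}\|\psi\|_{L_2}^{1/2}$ for all $\psi \in W^{1,2}_D(\Omega)$. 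Substituting and then applying Young's inequality with conjugate exponents $\tfrac{4}{3}$ and $4$ to the product $\|\psi\|_{W^{1,2}}^{3/2}\cdot\bigl(c\,\|V\|_{L_2}\|\psi\|_{L_2}^{1/2}\bigr)$ gives, for every $\varepsilon > 0$,
\[
\Big|\int_\Omega V\,\psi\,\overline\psi \,\Big|
\;\leq\; \tfrac{3\varepsilon}{4}\,\|\psi\|_{W^{1,2}}^2
\;+\; \tfrac{c^4}{4\,\varepsilon^{3}}\,\|V\|_{L_2}^4\,\|\psi\|_{L_2}^2 .
\]
Finally, the uniform lower bound $\gotm^{-1} \geq \|\gotm\|_\infty^{-1}\,I$ implies $\|\nabla\psi\|_{L_2}^2 \leq \|\gotm\|_\infty\,\gott[\psi]$, so $\|\psi\|_{W^{1,2}}^2 \leq c_1\,(\gott+1)[\psi]$ with $c_1 = \max(\|\gotm\|_\infty,1)$; choosing $\varepsilon = 1/c_1$ produces exactly the prefactor $3/4$ on $(\gott+1)[\psi]$ with $\gamma = c^4\,c_1^3/4$.

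I do not expect a real obstacle: the only non-trivial ingredient is the Gagliardo--Nirenberg estimate, but for $d \leq 3$ it is an immediate consequence of the Sobolev embedding $W^{1,2}(\Omega) \hookrightarrow L_6(\Omega)$ on bounded Lipschitz domains together with the elementary $L_p$ interpolation recalled above; no use of the boundary set $D$ is needed. Everything else is routine H\"older--Young book-keeping, and all constants can be chosen independently of $V$ and $\psi$, so the asserted uniformity is automatic.
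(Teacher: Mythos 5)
Your proposal is correct and follows essentially the same route as the paper's proof: H\"older to reduce to $\|V\|_{L_2}\|\psi\|_{L_4}^2$, the $L_4$--$L_6$ interpolation inequality with the Sobolev embedding $W^{1,2}(\Omega)\hookrightarrow L_6(\Omega)$ (valid since $d\le 3$ and $\Omega$ is bounded Lipschitz), domination of $\|\psi\|_{W^{1,2}}^2$ by $(\gott+1)[\psi]$ via the uniform upper bound on $\gotm$, and Young's inequality with exponents $4/3$ and $4$. The only cosmetic difference is that you carry an auxiliary parameter $\varepsilon$ and fix it at the end, whereas the paper absorbs the form constant into the Young split directly.
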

\begin{proof}
Since $\Omega$ is Lipschitz, one has the continuous embedding
$W^{1,2}_D(\Omega) \hookrightarrow L_6(\Omega)$, where we use that $d \leq 3$. 
Let $c_1$ be the embedding constant.
Then 
\begin{eqnarray*}
\Big | \int_\Omega V \psi \, \overline \psi \Big |
& \leq & \|V \|_{L_2} \|\psi \|^2 _{L_4} 
\leq  \|V \|_{L_2} ( \|\psi \|^2_{L_6}) ^{\frac34}
 ( \|\psi \|^2_{L_2}) ^{\frac14} 
\leq 
c_1^{\frac{3}{2}} \, \|V \|_{L_2} ( \|\psi \|^2_{W^{1,2}_D}) ^{\frac34} ( \|\psi \|^2_{L_2}) ^{\frac14}   \\
& \leq & c_1^{\frac{3}{2}} \Big( (\|\gotm\|_{L_\infty} + 1) (\gott + 1)[\psi] \Big)^{\frac{3}{4}}
   \|V\|_{L_2} ( \|\psi \|^2_{L_2})^{\frac14}   \\
& \leq & \frac{3}{4} (\gott + 1)[\psi]
   + \frac{ c_1^6 \|V\|_{L_2}^4 (\|\gotm\|_{L_\infty} + 1)^3 }{4} \|\psi \|_{L_2}^2
\end{eqnarray*}
by Young's inequality.
\end{proof}

Throughout the remainder of this section we let $\gamma > 0$ be as in Lemma~\ref{l-relformbound}.
For all $V \in L_2(\Omega,\Ri)$ define the sesquilinear form
$\gott_V \colon W^{1,2}_D (\Omega) \times W^{1,2}_D (\Omega) \to \Ci$
by
\[
\gott_V[\psi,\varphi] 
= \gott[\psi,\varphi] + \int_\Omega V \, \psi \, \overline \varphi
.  \]
By Lemma~\ref{l-relformbound} the form $\gott_V$ is symmetric, semibounded and closed.
We denote the corresponding self-adjoint operator by $H \dotplus V$.

\begin{lemma} \label{l-formabsch}
Let $R > 0$.
Define $\lambda = 1 + \gamma \, R^4$.
Let $V \in L_2(\Omega,\Ri)$ with $\|V\|_{L_2} \leq R$.
Then one has the following.
\begin{tabel}
\item \label{l-formabsch-1}
The operator $H \dotplus V + \lambda$ is positive, invertible and
$\|(H \dotplus V + \lambda)^{-1}\|_{L_2 \to L_2} \leq 4$.
\item \label{l-formabsch-5}
$\frac{1}{4} \, \gott[\psi] - \lambda \, \|\psi\|_{L_2}^2
\leq \gott_V[\psi]
\leq \frac{7}{4} \, \gott[\psi] + \lambda \, \|\psi\|_{L_2}^2$
for all $\psi \in W^{1,2}_D(\Omega)$.
\item \label{l-formabsch-2}
$\| (H +1)^{1/2} (H \dotplus V + \lambda) ^{-1/2}\| \leq 2$.
\item \label{l-formabsch-4}
$\|(H \dotplus V + \lambda) ^{-2}\|_{\cs_1}^{1/4}
= \|(H \dotplus V + \lambda) ^{-1/2}\|_{\cs_4} 
\leq 2 \|(H + 1) ^{-1/2}\|_{\cs_4} < \infty$.
\end{tabel}
\end{lemma}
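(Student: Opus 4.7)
The plan is to leverage the single relative form bound from Lemma \ref{l-relformbound} throughout. Applied with $\|V\|_{L_2} \leq R$, it reads
\[
\Big|\int_\Omega V\,|\psi|^2 \Big| \leq \tfrac{3}{4}\gott[\psi] + \big(\tfrac{3}{4} + \gamma R^4\big)\|\psi\|_{L_2}^2,
\]
and the choice $\lambda = 1 + \gamma R^4$ is tailored so that, after adding $\lambda\|\psi\|_{L_2}^2$ to $\gott_V[\psi]$, a slack of exactly $\tfrac{1}{4}\|\psi\|_{L_2}^2$ is left over on the positive side.

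I would prove \ref{l-formabsch-5} first by a direct insertion of the above bound into $\gott_V[\psi]=\gott[\psi]+\int_\Omega V|\psi|^2$: the lower estimate uses the triangle inequality in one direction, the upper in the other, and the arithmetic with $\lambda = 1+\gamma R^4$ gives the stated two-sided control. Part \ref{l-formabsch-1} then drops out immediately: the lower bound in \ref{l-formabsch-5}, together with the extra $\tfrac{1}{4}\|\psi\|_{L_2}^2$ observed above, shows $H\dotplus V + \lambda \geq \tfrac{1}{4}I$ in the form sense, whence $H\dotplus V+\lambda$ is self-adjoint, strictly positive, boundedly invertible, and $\|(H\dotplus V+\lambda)^{-1}\|_{L_2\to L_2}\leq 4$.

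For \ref{l-formabsch-2}, I would rephrase the same coercivity as
\[
\|(H\dotplus V+\lambda)^{1/2}\psi\|_{L_2}^2 \geq \tfrac{1}{4}\,\|(H+1)^{1/2}\psi\|_{L_2}^2
\]
on the form domain $W^{1,2}_D(\Omega)$, and then substitute $\psi = (H\dotplus V+\lambda)^{-1/2}\varphi$ to obtain the asserted operator-norm bound by $2$. Part \ref{l-formabsch-4} is then a short factorisation argument:
\[
(H\dotplus V+\lambda)^{-1/2} = (H+1)^{-1/2}\cdot (H+1)^{1/2}(H\dotplus V+\lambda)^{-1/2},
\]
where Proposition \ref{p-HSWeyl} places the first factor in $\cs_4$ and \ref{l-formabsch-2} bounds the operator norm of the second by $2$; the ideal property of $\cs_4$ closes the estimate. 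The stated equality $\|(H\dotplus V+\lambda)^{-2}\|_{\cs_1}^{1/4} = \|(H\dotplus V+\lambda)^{-1/2}\|_{\cs_4}$ is the general identity $\|T\|_{\cs_4}^{4} = \|T^4\|_{\cs_1}$ for positive $T$, applied to $T=(H\dotplus V+\lambda)^{-1/2}$.

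The main obstacle is not conceptual but bookkeeping: the concrete constants $\tfrac14,\tfrac74,2,4$ must be extracted in lockstep from the $\tfrac34$ in Lemma \ref{l-relformbound} and from the specific $\lambda = 1 + \gamma R^4$. In particular the extra summand $1$ in $\lambda$ (rather than merely $\gamma R^4$) is precisely what produces the $\tfrac14 \|\psi\|_{L_2}^2$ slack, which is used both for the $\tfrac14 I$ coercivity in \ref{l-formabsch-1} and for the graph-norm domination of $(H+1)^{1/2}$ by $(H\dotplus V+\lambda)^{1/2}$ in \ref{l-formabsch-2}.
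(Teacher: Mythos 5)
Your proposal is correct and follows essentially the same route as the paper: derive the coercivity $\gott_V[\psi]+\lambda\|\psi\|_{L_2}^2\geq\tfrac14(\gott+1)[\psi]$ from Lemma~\ref{l-relformbound}, read off both the $\tfrac14 I$ lower bound and the two-sided form comparison, obtain \ref{l-formabsch-2} by comparing the square-root domains (the paper invokes Kato, Theorem~VI.2.23, for the step you phrase as the substitution $\psi=(H\dotplus V+\lambda)^{-1/2}\varphi$), and then factor through $(H+1)^{-1/2}$ for \ref{l-formabsch-4}. Incidentally, your constant $\tfrac14$ in the square-root inequality is the correct one to produce the stated bound~$2$; the paper's displayed $\tfrac12$ at that point appears to be a typo.
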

\begin{proof}
`\ref{l-formabsch-1}'.
Let $\psi \in W^{1,2}_D(\Omega)$.
Then
\[
\mathfrak t_V [\psi] 
= \mathfrak t[\psi] +  \int_\Omega V \psi \, \overline \psi  
\geq \mathfrak t[\psi] - \Big |  \int_\Omega V \psi \, \overline \psi \Big |  
\geq \frac14 (\mathfrak t +1)[\psi] - \lambda \, \|\psi \|^2_{L_2},
\]
or, equivalently,
\begin{equation} \label{e-lowerformbound1}
\mathfrak t_V [\psi] + \lambda \, \|\psi \|^2_{L_2} \geq \frac14 (\mathfrak t+1) [\psi]
.
\end{equation}
This implies Statement~\ref{l-formabsch-1}.

`\ref{l-formabsch-5}'.
This follows similarly.

`\ref{l-formabsch-2}'.
It follows from (\ref{e-lowerformbound1}) that
\[
\|(H \dotplus V + \lambda) ^{1/2} \psi\|_{L_2}^2
\geq \frac{1}{2} \, \|(H +1)^{1/2} \psi\|_{L_2}^2
\]
for all $\psi \in W^{1,2}_D(\Omega)$.
Now use \cite{Kat1} Theorem~VI.2.23.

`\ref{l-formabsch-4}'.
Since $(H \dotplus V + \lambda) ^{-1/2} 
= (H + 1)^{-\frac{1}{2}} \, \Big( (H + 1)^{\frac{1}{2}} (H \dotplus V + \lambda) ^{-1/2} \Big)$,
the statement follows from Statement~\ref{l-formabsch-2} and Proposition~\ref{p-HSWeyl}.
\end{proof}

The form method gives that $D(H^{1/2}) = D(H \dotplus V + 1 + \gamma \, \|V\|_{L_2}^4)^{1/2}$
for all $V \in L_2(\Omega,\Ri)$.
Even more is valid.
For all $V \in L_2(\Omega,\Ri)$ let $M_V$ be the self-adjoint operator in $L_2(\Omega)$ 
by multiplication with $V$.

\begin{prop} \label{pBS404}
Let $R > 0$.
Define $\lambda = 1 + \gamma \, R^4$.
Then there is a $c > 0$ such that for all $V \in L_2(\Omega,\Ri)$ with $\|V\|_{L_2} \leq R$
the following is valid.
\begin{tabel}
\item \label{pBS404-1}
$D(H \dotplus V) = D(H)$.
\item \label{pBS404-2}
$D(H) \subset L_\infty(\Omega) \subset D(M_V)$.
\item \label{pBS404-3}
$H \dotplus V = H + M_V$.
\item \label{pBS404-4}
$\|(H \dotplus V + \lambda)^{-1}\|_{L_2 \to L_\infty} \leq c$.
\end{tabel}
\end{prop}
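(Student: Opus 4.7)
The plan is to establish part~\ref{pBS404-4} first, as the remaining parts will cascade from it together with Lemma~\ref{l-formabsch}. Taking $V = 0$ in \ref{pBS404-4} gives $(H+1)^{-1} \colon L_2(\Omega) \to L_\infty(\Omega)$ boundedly, hence $D(H) = \operatorname{ran}((H+1)^{-1}) \subset L_\infty(\Omega)$, which is the non-trivial part of \ref{pBS404-2}; the inclusion $L_\infty(\Omega) \subset D(M_V)$ is immediate from $V \in L_2(\Omega,\Ri)$. For general $V$, \ref{pBS404-4} likewise gives $D(H \dotplus V) \subset L_\infty(\Omega)$, which is exactly what is needed for \ref{pBS404-1} and \ref{pBS404-3}.

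To prove \ref{pBS404-4}, I would show uniform ultracontractivity of the semigroup $(e^{-t(H \dotplus V + \lambda)})_{t > 0}$ over the ball $\{V \in L_2(\Omega,\Ri) : \|V\|_{L_2} \leq R\}$. The Sobolev embedding $W^{1,2}_D(\Omega) \hookrightarrow L_6(\Omega)$, available because $\Omega$ is a bounded Lipschitz domain and $d \leq 3$, yields, by interpolation with $L_1$, a Nash-type inequality
\[
\|u\|_{L_2}^{2 + 4/d} \leq c \, (\gott + 1)[u] \, \|u\|_{L_1}^{4/d}
\]
for $u \in W^{1,2}_D(\Omega)$, with $c$ independent of $V$. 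By Lemma~\ref{l-formabsch}\ref{l-formabsch-5} the form $\gott_V + \lambda$ dominates $\tfrac{1}{4}(\gott + 1)$, so the Nash inequality transfers, with $R$-dependent constants, to $\gott_V + \lambda$. The classical Nash argument (differential inequality for $t \mapsto \|e^{-t(\cdot)} u\|_{L_2}^2$) converts the Nash inequality into an ultracontractive bound $\|e^{-t(H \dotplus V + \lambda)}\|_{L_1 \to L_\infty} \leq C \, t^{-d/2}$ for $0 < t \leq 1$ with $C = C(R)$. Together with $\|e^{-t(H \dotplus V + \lambda)}\|_{L_2 \to L_2} \leq e^{-t/4}$ (from positivity, Lemma~\ref{l-formabsch}\ref{l-formabsch-1}) and symmetry, interpolation gives $\|e^{-t(H \dotplus V + \lambda)}\|_{L_2 \to L_\infty} \leq C' \, t^{-d/4} e^{-t/8}$ for all $t > 0$. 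Since $d < 4$, integrating $(H \dotplus V + \lambda)^{-1} = \int_0^\infty e^{-t(H \dotplus V + \lambda)} \, dt$ then yields~\ref{pBS404-4}.

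Given \ref{pBS404-4}, parts \ref{pBS404-1} and \ref{pBS404-3} follow by short form-theoretic computations. If $\psi \in D(H \dotplus V)$, then $\psi \in L_\infty(\Omega)$ by \ref{pBS404-4} (applied to $f = (H \dotplus V + \lambda) \psi$), hence $V \psi \in L_2(\Omega)$. For every $\varphi \in W^{1,2}_D(\Omega)$ one computes
\[
\gott[\psi, \varphi]
= \gott_V[\psi, \varphi] - \int_\Omega V \, \psi \, \overline{\varphi}
= \bigl( (H \dotplus V) \psi - V \psi, \, \varphi \bigr)_{L_2},
\]
so the representation theorem for the form $\gott$ gives $\psi \in D(H)$ with $H \psi = (H \dotplus V) \psi - M_V \psi$, that is, $(H + M_V) \psi = (H \dotplus V) \psi$. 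The reverse inclusion $D(H) \subset D(H \dotplus V)$ is entirely symmetric, using the already-established \ref{pBS404-2} to conclude $V \psi \in L_2$ for $\psi \in D(H) \subset L_\infty(\Omega)$.

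The main obstacle is the uniform ultracontractivity in \ref{pBS404-4}. The technical delicacies are (a) verifying the Nash inequality for $W^{1,2}_D(\Omega)$ under mixed boundary conditions on a Lipschitz domain, which crucially uses $d \leq 3$ so that the Sobolev exponent exceeds $2$ with room to spare, and (b) tracking the dependence of constants on $V$, where the form comparison of Lemma~\ref{l-formabsch}\ref{l-formabsch-5} is precisely what decouples the $V$-dependence from the pure heat-kernel estimate and yields a bound depending on $R$ only through $\lambda$.
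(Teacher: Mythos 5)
Your overall architecture — establish \ref{pBS404-4} first, then derive \ref{pBS404-1}--\ref{pBS404-3} from it by the form representation theorem — is sound, and your derivation of the latter three items from \ref{pBS404-4} is correct. However, your proof of \ref{pBS404-4} itself has a real gap that the paper's quite different route is specifically designed to avoid.

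The issue is the step ``the classical Nash argument converts the Nash inequality into an ultracontractive bound.'' Nash's differential-inequality argument gives $\|e^{-tA}\|_{L_1\to L_2}\le C t^{-d/4}$ only if one can control $\|e^{-tA}u_0\|_{L_1}$ along the flow, i.e.\ one needs the semigroup to be (quasi-)contractive on $L_1$ (equivalently, by duality and symmetry, on $L_\infty$). For the unperturbed form $\gott$ this is automatic: $e^{-tH}$ is sub-Markovian. But for $\gott_V$ with $V\in L_2(\Omega,\Ri)$ sign-changing and unbounded below, $L_1$-contractivity fails, and the form comparison $\gott_V+\lambda\ge\tfrac14(\gott+1)$ of Lemma~\ref{l-formabsch}\ref{l-formabsch-5} does \emph{not} imply a pointwise heat-kernel or $L_\infty$-bound comparison — it is a quadratic-form inequality, not a potential (pointwise) domination. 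Likewise a Moser iteration on the perturbed heat equation does not close, because the relative form bound of Lemma~\ref{l-relformbound} applied to $|u|^{p/2}$ produces a coefficient growing linearly in $p$, which overwhelms the favourable term for large $p$. So ``tracking the $V$-dependence of constants,'' which you flag as the delicacy, is not the obstacle; the obstacle is that the Nash machinery needs an $L_1$ (or $L_\infty$) bound on $e^{-t(H\dotplus V)}$ that is simply not yet available. It \emph{is} true that $e^{-t(H\dotplus V)}$ is quasi-$L_\infty$-contractive for $V\in L_2$ in $d\le 3$ on a bounded domain (because $V$ is Kato-bounded), but that is precisely a nontrivial fact of the same depth as \ref{pBS404-4}, and proving it would essentially reproduce the argument you are trying to avoid.

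The paper sidesteps all of this by never applying Nash to the perturbed form. It invokes ultracontractivity of the \emph{unperturbed} semigroup $S_t=e^{-t(H+\lambda)}$ (cited from \cite{AE1} Theorem~4.4, where the standard Nash/Gaussian-bound machinery applies since $e^{-tH}$ is sub-Markovian), obtains $\|(H+\lambda+\rho)^{-1}\|_{L_2\to L_\infty}$ by a Laplace transform, uses this plus $\|V\|_{L_2}\le R$ to show $\|M_V(H+\lambda+\rho)^{-1}\|_{L_2\to L_2}\le\tfrac12$ after choosing $\rho$ large, concludes $H\dotplus V=H+M_V$ via Kato--Rellich, and then reads off $\|(H\dotplus V+\lambda+\rho)^{-1}\|_{L_2\to L_\infty}\le 2\|(H+\lambda+\rho)^{-1}\|_{L_2\to L_\infty}$ from the Neumann series for $\bigl(1+M_V(H+\lambda+\rho)^{-1}\bigr)^{-1}$. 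The perturbation is thus handled entirely at the resolvent level on $L_2$, where relative boundedness is cheap, and the only ultracontractivity needed is the classical one for the free operator. To repair your proof you would need to insert, before running Nash, a separate argument that the perturbed semigroup is quasi-$L_\infty$-bounded uniformly for $\|V\|_{L_2}\le R$ — at which point the paper's resolvent bootstrap is the more economical route.
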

\begin{proof}
Let $S$ be the semigroup on $L_2$ generated by $-(H+\lambda)$.
It follows from \cite{AE1} Theorem~4.4 together with Example~4.3 that 
there are $c,\omega > 0$ such that 
$\|S_t\|_{L_2 \to L_\infty} \leq c \, t^{-d/4} \, e^{\omega t}$
for all $t > 0$.
Let $\rho \in \Ri$ with $\rho > \omega$
be such that $c \, R \, (\rho - \omega)^{-\frac{4-d}{4}} \, \Gamma(\frac{4-d}{4}) = \frac{1}{2}$.
Let $\psi \in L_2$.
Then a Laplace transform gives $(H + \lambda + \rho)^{-1} \psi \in L_\infty$,
where we use that $d \leq 3$.
In particular, $D(H) \subset D(M_V)$.
Moreover,
\begin{eqnarray*}
\|M_V \, (H + \lambda + \rho)^{-1} \psi\|_{L_2} 
& \leq & \|V\|_{L_2} \, 
   \int_0^\infty e^{-\rho t} \, \|S_t\|_{L_2 \to L_\infty} \, dt \, \|\psi\|_{L_2}  \\
& \leq & c \, \|V\|_{L_2} \, (\rho - \omega)^{-\frac{4-d}{4}} \, \Gamma(\frac{4-d}{4}) \, \|\psi\|_{L_2}  \\
& \leq & \frac{1}{2} \, \|\psi\|_{L_2}
. 
\end{eqnarray*}
Therefore $M_V$ is $H$-bounded with 
relative bound $\frac{1}{2}$.
Then it follows from \cite{Kat1} Theorem~V.4.3 that the operator 
$H + M_V$ is self-adjoint.
Note that $D(H + M_V) = D(H)$.
Let $\psi \in D(H)$.
Then $\psi \in D(M_V)$.
If $\varphi \in W^{1,2}_D (\Omega)$, 
then 
$\gott_V[\psi,\varphi] 
= \gott[\psi,\varphi] + (V \psi,\varphi)_{L_2}
= (H \psi + V \psi, \varphi)_{L_2}$.
Hence $\psi \in D(H \dotplus V)$ and 
$(H \dotplus V) \psi = H \psi + V \psi = (H + M_V) \psi$.
We proved that $H \dotplus V$ is an extension of $H + M_V$.
But both $H \dotplus V$ and $H + M_V$ are self-adjoint.
Therefore $H \dotplus V = H + M_V$.
In particular $D(H \dotplus V) = D(H) \subset L_\infty$.

Next 
\[
(H \dotplus V + \lambda + \rho)^{-1}
   \Big( 1 + M_V \, (H + \lambda + \rho)^{-1} \Big)
= (H + \lambda + \rho)^{-1}
\]
and 
\[
(H \dotplus V + \lambda + \rho)^{-1}
= (H + \lambda + \rho)^{-1} \, 
      \Big( 1 + M_V \, (H + \lambda + \rho)^{-1} \Big)^{-1}
.  \]
Therefore $\|(H \dotplus V + \lambda + \rho)^{-1}\|_{L_2 \to L_\infty}
\leq 2 \|(H + \lambda + \rho)^{-1}\|_{L_2 \to L_\infty}$.
\end{proof}

As a result we obtain a resolvent equation.
Note that the multiplication operator in the next corollary
is an unbounded operator in general.

\begin{cor} \label{cBS405}
Let $R > 0$.
Define $\lambda = 1 + \gamma \, R^4$.
Let $U,V \in L_2(\Omega,\Ri)$ with $\|U\|_{L_2} \leq R$ and $\|V\|_{L_2} \leq R$.
Then $D(H \dotplus V + \lambda) \subset D(M_U)$ and 
\[
(H \dotplus U + \lambda)^{-1} - (H \dotplus V + \lambda)^{-1}
= - (H \dotplus U + \lambda)^{-1} \, M_{U - V} \, (H \dotplus V + \lambda)^{-1}
.  \]
\end{cor}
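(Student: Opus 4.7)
The plan is to combine the operator-sum identification from Proposition~\ref{pBS404} with the standard second resolvent identity, the only wrinkle being that $M_U, M_V, M_{U-V}$ are unbounded and we must justify that $M_{U-V}$ applied to vectors of the form $(H \dotplus V + \lambda)^{-1} g$ lands in $L_2$.

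First I would establish the domain inclusion $D(H \dotplus V + \lambda) \subset D(M_U)$. By Proposition~\ref{pBS404}\ref{pBS404-1} we have $D(H \dotplus V) = D(H)$, and Proposition~\ref{pBS404}\ref{pBS404-2} gives $D(H) \subset L_\infty(\Omega)$. Since $U \in L_2$, multiplication by $U$ sends $L_\infty$ into $L_2$, so $L_\infty(\Omega) \subset D(M_U)$, and hence $D(H \dotplus V + \lambda) = D(H) \subset D(M_U)$. The same argument applied with $U-V$ in place of $U$ shows that $M_{U-V}$ maps $D(H)$ into $L_2$.

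Next I would derive the resolvent identity. Fix $g \in L_2(\Omega)$ and set $\psi = (H \dotplus V + \lambda)^{-1} g$, so $\psi \in D(H)$ by the above. Applying Proposition~\ref{pBS404}\ref{pBS404-3} to both potentials, $H \dotplus U = H + M_U$ and $H \dotplus V = H + M_V$ on the common domain $D(H)$. Thus
\[
(H \dotplus U + \lambda)\psi - (H \dotplus V + \lambda)\psi = M_U \psi - M_V \psi = M_{U-V} \psi,
\]
which is an element of $L_2$ by the domain step. Since $(H \dotplus V + \lambda)\psi = g$, this rearranges to
\[
(H \dotplus U + \lambda)\psi = g + M_{U-V} \psi.
\]
Applying $(H \dotplus U + \lambda)^{-1}$, which is bounded on $L_2$ by Lemma~\ref{l-formabsch}\ref{l-formabsch-1}, yields
\[
\psi = (H \dotplus U + \lambda)^{-1} g + (H \dotplus U + \lambda)^{-1} M_{U-V} \psi,
\]
and substituting back $\psi = (H \dotplus V + \lambda)^{-1} g$ gives exactly the claimed formula after moving the first term to the other side. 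As $g \in L_2$ was arbitrary, we obtain the operator identity.

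There is no real obstacle here; the only subtlety is that one cannot simply write the familiar computation $(H \dotplus U + \lambda)^{-1}[(H \dotplus V + \lambda) - (H \dotplus U + \lambda)](H \dotplus V + \lambda)^{-1}$ as a manipulation of unbounded operators without first certifying that the intermediate object $M_{U-V}(H \dotplus V + \lambda)^{-1} g$ belongs to $L_2$, which is precisely what the $L_\infty$-smoothing provided by Proposition~\ref{pBS404}\ref{pBS404-4} supplies (and at the same time shows the right-hand side is a well-defined bounded operator on $L_2$).
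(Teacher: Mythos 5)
Your proof is correct and takes essentially the same route the paper implies (the paper states the corollary without an explicit proof, as an immediate consequence of Proposition~\ref{pBS404}): use $D(H \dotplus V) = D(H) = D(H \dotplus U)$ and $D(H) \subset L_\infty \subset D(M_{U-V})$ from parts~\ref{pBS404-1} and~\ref{pBS404-2}, the operator-sum identification from part~\ref{pBS404-3}, and the second resolvent identity. One small citation slip: the closing parenthetical attributes the needed $L_\infty$-regularity to Proposition~\ref{pBS404}\ref{pBS404-4}, but the domain step in your argument actually rests on part~\ref{pBS404-2}; part~\ref{pBS404-4} gives the uniform operator norm bound, which is not needed for this qualitative identity.
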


\begin{lemma} \label{lBS402}
Let $R > 0$.
Define $\lambda = 1 + \gamma \, R^4$.
Then one has the following.
\begin{tabel}
\item \label{lBS402-1}
For all $k \in \{ 0,\ldots,4 \} $
there is a $c_k > 0$ such that 
$f(H \dotplus V) \psi \in D((H \dotplus V + \lambda)^k)$ for all $\psi \in L_2(\Omega)$
and 
$\| (H \dotplus V + \lambda)^k \, f(H \dotplus V)\|_{L_2 \to L_2} \leq c_k$
for all $V \in L_2(\Omega,\Ri)$ with $\|V\|_{L_2} \leq R$.
\item \label{lBS402-2}
For all $k \in \{ 0,1,2 \} $ there exists an $M_k > 0$ such that 
the operator \[
(H \dotplus V + \lambda)^k \, f(H \dotplus V)
\] 
is nuclear and 
$\|(H \dotplus V + \lambda)^k \, f(H \dotplus V)\|_{\cs_1} \leq M_k$
for all $V \in L_2(\Omega,\Ri)$ with $\|V\|_{L_2} \leq R$.
\end{tabel}
\end{lemma}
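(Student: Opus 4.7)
The plan is to deduce both statements directly from the spectral calculus applied to the self-adjoint operator $H \dotplus V$, combined with the uniform bounds already established in Lemma~\ref{l-formabsch}.

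For part~\ref{lBS402-1}, first observe that Lemma~\ref{l-formabsch}\ref{l-formabsch-1} guarantees $H \dotplus V + \lambda \geq 0$, so the spectrum of $H \dotplus V$ lies in $[-\lambda,\infty)$ for every admissible $V$. Given $k \in \{0,\ldots,4\}$, I would introduce the Borel function $\phi_k \colon [-\lambda,\infty) \to \Ri$ defined by $\phi_k(x) = (x+\lambda)^k \, f(x)$ and show that $\|\phi_k\|_\infty$ is finite and depends only on $R$ (through $\lambda = 1 + \gamma R^4$) and on the a priori bounds on $f$. Indeed, $\phi_k$ is continuous (hence bounded) on the compact interval $[-\lambda, \lambda]$, and for $x \geq \lambda$ one has $(x+\lambda)^k \leq (2x)^k = 2^k x^k$, whence
\[
\phi_k(x) \leq 2^k \, x^{k-4} \, \sup_{r \geq 0} r^4 f(r),
\]
which is bounded on $[\lambda,\infty)$ because $k \leq 4$. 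By the Borel functional calculus, $\phi_k(H \dotplus V)$ is then a bounded self-adjoint operator with norm at most $c_k := \|\phi_k\|_\infty$, and the multiplicativity property of the functional calculus applied to the factorisation $\phi_k(x) = (x+\lambda)^k \cdot f(x)$ gives $f(H \dotplus V) \psi \in D((H \dotplus V + \lambda)^k)$ for every $\psi \in L_2(\Omega)$ together with the identity $(H \dotplus V + \lambda)^k \, f(H \dotplus V) = \phi_k(H \dotplus V)$.

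For part~\ref{lBS402-2}, since $(H \dotplus V + \lambda)^{-2}$ and $\phi_{k+2}(H \dotplus V)$ are Borel functions of the same self-adjoint operator they commute, and by the multiplicative functional calculus their product equals the operator with symbol $\phi_k$. Thus for $k \in \{0,1,2\}$ I would factorise
\[
(H \dotplus V + \lambda)^k \, f(H \dotplus V) = \phi_{k+2}(H \dotplus V) \cdot (H \dotplus V + \lambda)^{-2}.
\]
Since $k+2 \in \{2,3,4\}$, the first factor has $L_2 \to L_2$-norm at most $c_{k+2}$ by part~\ref{lBS402-1}. The second factor is trace class by Lemma~\ref{l-formabsch}\ref{l-formabsch-4}, with
\[
\|(H \dotplus V + \lambda)^{-2}\|_{\cs_1} \leq \bigl( 2 \, \|(H+1)^{-1/2}\|_{\cs_4} \bigr)^{4},
\]
a bound independent of $V$. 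The ideal inequality $\|AB\|_{\cs_1} \leq \|A\|_{L_2 \to L_2} \, \|B\|_{\cs_1}$ then yields the desired uniform nuclear-norm bound $M_k$.

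There is no serious obstacle here; the argument is spectral-calculus bookkeeping. The two essential inputs are the uniform positivity $H \dotplus V + \lambda \geq 0$ and the uniform $\cs_4$-bound on $(H \dotplus V + \lambda)^{-1/2}$, both supplied by Lemma~\ref{l-formabsch}. The only point requiring care is to ensure that the symbol bounds depend only on $R$ and on the fixed asymptotic data of $f$, so that the constants $c_k$ and $M_k$ are genuinely uniform over $\{V \in L_2(\Omega,\Ri) : \|V\|_{L_2} \leq R\}$.
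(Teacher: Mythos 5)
Your proposal is correct and follows essentially the same route as the paper: part~\ref{lBS402-1} is the spectral‐theorem bound on the symbol $(x+\lambda)^k f(x)$ over $[-\lambda,\infty)$, and part~\ref{lBS402-2} is the factorisation through the trace-class operator $(H \dotplus V + \lambda)^{-2}$ combined with Lemma~\ref{l-formabsch}\ref{l-formabsch-4} and part~\ref{lBS402-1}. The only cosmetic difference is that you spell out the boundedness of the symbol in more detail and write the two commuting factors in the opposite order, neither of which changes the argument.
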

\begin{proof}
`\ref{lBS402-1}'.
The operator $H \dotplus V$ is self-adjoint and lower bounded by $- \gamma$
by Lemma~\ref{l-formabsch}\ref{l-formabsch-1}.
Define $g \colon \Ri \to \Ri$ by $g(r) = (r+\lambda)^k \, f(r)$.
Then $g$ is bounded on $[-\lambda,\infty)$ by the assumptions on~$f$.
Therefore the spectral theorem gives
$\|(H \dotplus V + \lambda)^k \, f(H \dotplus V)\|_{L_2 \to L_2}
\leq \sup_{r \in [- \lambda,\infty)} |g(r)|
= \sup_{r \in [- \lambda,\infty)} (r + \lambda)^k \, f(r)$.

`\ref{lBS402-2}'.
Now
\begin{eqnarray*}
\|(H \dotplus V + \lambda)^k \, f(H \dotplus V)\|_{\cs_1} 
& \leq & \|\bigl (H \dotplus V + \lambda)^{-2} \|_{\cs_1} \, 
     \|\bigl(  (H \dotplus V + \lambda)^{k+2} \, f(H \dotplus V) \|_{L_2 \to L_2}  \\
& \leq & (2 \|(H+1)^{-1/2}\|_{\cs_4})^4 \, c_{k+2}
,
\end{eqnarray*}
where we used Lemma~\ref{l-formabsch}\ref{l-formabsch-4} and Statement~\ref{lBS402-1}.
\end{proof}

For all $W \in L_\infty(\Omega)$ we denote by $M_W$ the multiplication operator
on $L_2(\Omega)$ by multiplying with $W$.
Let $V \in L_2(\Omega,\Ri)$.
Choosing $k = 0$ in Lemma~\ref{lBS402}\ref{lBS402-2} 
gives that $f(H \dotplus V)$
is trace class.
Then 
\[
|\Tr( M_W f(H \dotplus V) )| 
\leq \|M_W\|_{L_2 \to L_2} \, \Tr |f(H \dotplus V)|
\leq \|W\|_{L_\infty(\Omega)} \, \Tr f(H \dotplus V)
\]
for all $W \in L_\infty(\Omega)$.
Hence there is a unique $\cm(V) \in L_1(\Omega,\Ri)$ such that 
\[
\int_\Omega \mathcal M(V) \, W 
= \tr \bigl ( M_W \, f(H \dotplus V ) \bigr )
\]
for all $W \in L_\infty(\Omega)$.

\begin{prop} \label{p-lipschpartdens}
Let $R > 0$. 
Then there exists an $M > 0$ such that the following is valid.
\begin{tabel}
\item \label{p-lipschpartdens-1}
If $V \in L_2(\Omega,\Ri)$ with $\|V\|_{L_2} \leq R$, then
$\cm(V) \in L_2(\Omega,\Ri)$ and
$\|\cm(V)\|_{L_2} \leq M$.
\item \label{p-lipschpartdens-2}
$\|\M(U) - \M(V) \|_{L_2} \leq M \, \|U - V \|_{L_2}$
for all $U,V \in L_2(\Omega,\Ri)$ with $\|U\|_{L_2} \leq R$
and $\|V\|_{L_2} \leq R$.
\end{tabel}
\end{prop}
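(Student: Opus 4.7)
The plan is to reduce both statements to trace estimates via $L_2$-duality.
Fix $R > 0$, set $\lambda = 1 + \gamma R^4$, write $R_V = (H \dotplus V + \lambda)^{-1}$, and note that since $L_\infty(\Omega) \cap L_2(\Omega)$ is dense in $L_2(\Omega)$,
\[
\|\cm(V)\|_{L_2}
= \sup \bigl\{ |\Tr(M_W f(H \dotplus V))| : W \in L_\infty(\Omega,\Ri),\, \|W\|_{L_2} \leq 1 \bigr\},
\]
and analogously for $\cm(U) - \cm(V)$ after replacing $f(H \dotplus V)$ by $f(H \dotplus U) - f(H \dotplus V)$.

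The key ingredient is the Hilbert--Schmidt bound
\[
\|M_W T\|_\HS \leq \|W\|_{L_2} \, \|T\|_{L_2 \to L_\infty}
\]
valid for any Hilbert--Schmidt self-adjoint operator $T$ that extends continuously to $L_2(\Omega) \to L_\infty(\Omega)$.
Indeed, for any orthonormal basis $(e_k)$ of $L_2(\Omega)$, self-adjointness gives $\|M_W T\|_\HS^2 = \int |W(x)|^2 \sum_k |(T e_k)(x)|^2 \, dx$, and the mapping property $T \colon L_2 \to L_\infty$ forces the kernel $K_T$ of $T$ to satisfy $\esssup_x \|K_T(x,\cdot)\|_{L_2} = \|T\|_{L_2 \to L_\infty}$, whence Parseval yields $\sum_k |(T e_k)(x)|^2 = \|K_T(x,\cdot)\|_{L_2}^2 \leq \|T\|_{L_2 \to L_\infty}^2$ a.e.
Applied to $T = R_V$ via Proposition~\ref{pBS404}\ref{pBS404-4}, and to $T = f(H \dotplus V) = R_V \cdot ((H \dotplus V + \lambda) f(H \dotplus V))$ via Proposition~\ref{pBS404}\ref{pBS404-4} combined with Lemma~\ref{lBS402}\ref{lBS402-1}, this yields $\|M_W R_V\|_\HS \leq c \|W\|_{L_2}$ and $\|M_W f(H \dotplus V)\|_\HS \leq c \|W\|_{L_2}$ uniformly on $\|V\|_{L_2} \leq R$.

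For Statement~\ref{p-lipschpartdens-1} it suffices to factor $M_W f(H \dotplus V) = (M_W R_V) \cdot ((H \dotplus V + \lambda) f(H \dotplus V))$ and apply $|\Tr(AB)| \leq \|A\|_\HS \|B\|_\HS$, the second factor being uniformly nuclear by Lemma~\ref{lBS402}\ref{lBS402-2} with $k = 1$.
For Statement~\ref{p-lipschpartdens-2} the same factorisation reads
\[
M_W \bigl(f(H \dotplus U) - f(H \dotplus V)\bigr)
= (M_W R_V) \cdot R_V^{-1} \bigl(f(H \dotplus U) - f(H \dotplus V)\bigr),
\]
which makes sense because $f(H \dotplus U)$ and $f(H \dotplus V)$ both take values in $D(H) = D(H \dotplus V)$ by Lemma~\ref{lBS402}\ref{lBS402-1} and Proposition~\ref{pBS404}\ref{pBS404-1}.
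Using $H \dotplus V = H \dotplus U + M_{V-U}$ on $D(H)$ from Proposition~\ref{pBS404}\ref{pBS404-3} and setting $g(r) = (r + \lambda) f(r)$, the second factor decomposes by functional calculus as
\[
R_V^{-1} \bigl(f(H \dotplus U) - f(H \dotplus V)\bigr)
= \bigl(g(H \dotplus U) - g(H \dotplus V)\bigr) + M_{V-U} f(H \dotplus U).
\]
The decay hypotheses on $f$ make $(r + \lambda)^2 |g'(r)|$ bounded on $[-\gamma, \infty)$, so Corollary~\ref{cBS301} gives $\|g(H \dotplus U) - g(H \dotplus V)\|_\HS \leq L_g \, \|R_U - R_V\|_\HS$, and the resolvent identity of Corollary~\ref{cBS405} together with the key estimate yields $\|R_U - R_V\|_\HS \leq \|R_U\|_{L_2 \to L_2} \, \|M_{U-V} R_V\|_\HS \leq c \, \|U - V\|_{L_2}$; the second summand is handled directly by the key estimate applied to $f(H \dotplus U)$.
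Combining everything gives $|\Tr(M_W (f(H \dotplus U) - f(H \dotplus V)))| \leq c \, \|W\|_{L_2} \|U-V\|_{L_2}$, as required.

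The main obstacle is the bookkeeping: since $M_W$ is unbounded on $L_2(\Omega)$ when $W$ merely lies in $L_2$, and $R_V^{-1} = H \dotplus V + \lambda$ is also unbounded, one must verify in each step that the compositions extend to genuine Hilbert--Schmidt operators on $L_2(\Omega)$ before the trace Cauchy--Schwarz inequality can be invoked.
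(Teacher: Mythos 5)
Your proof is correct, and it departs from the paper's argument in two genuine ways.  For part~\ref{p-lipschpartdens-1} the paper writes $M_W f(H\dotplus V) = M_W (H\dotplus V+\lambda)^{-1}\cdot (H\dotplus V+\lambda)^3 f(H\dotplus V) \cdot (H\dotplus V+\lambda)^{-2}$ and uses the bound $|\Tr(ABC)|\leq\|A\|_{L_2\to L_2}\|B\|_{L_2\to L_2}\|C\|_{\cs_1}$; the operator-norm estimate $\|M_W(H\dotplus V+\lambda)^{-1}\|_{L_2\to L_2}\le c\|W\|_{L_2}$ follows from H\"older alone and needs no kernel argument.  You instead prove the stronger Hilbert--Schmidt bound $\|M_W T\|_\HS\le\|W\|_{L_2}\|T\|_{L_2\to L_\infty}$ via Tonelli plus Parseval on the kernel of $T$, which lets you use the two-factor estimate $|\Tr(AB)|\le\|A\|_\HS\|B\|_\HS$ with $A=M_W(H\dotplus V+\lambda)^{-1}$ and $B=(H\dotplus V+\lambda)f(H\dotplus V)$.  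For part~\ref{p-lipschpartdens-2} the difference is sharper: the paper starts from $f(H\dotplus V)=(H\dotplus V+\lambda)^{-1}g(H\dotplus V)(H\dotplus V+\lambda)^{-1}$ with $g(r)=(r+\lambda)^2 f(r)$, splits into two trace terms and then splits the second again with the resolvent identity, producing four Hilbert--Schmidt factors to control.  Your decomposition $(H\dotplus V+\lambda)\big(f(H\dotplus U)-f(H\dotplus V)\big)=g(H\dotplus U)-g(H\dotplus V)+M_{V-U}f(H\dotplus U)$ with the lower-order $g(r)=(r+\lambda)f(r)$ (justified by Proposition~\ref{pBS404}\ref{pBS404-1},\ref{pBS404-3} and the mapping into $D(H)$ from Lemma~\ref{lBS402}\ref{lBS402-1}) turns the whole difference into two trace terms, with Corollary~\ref{cBS301} and Corollary~\ref{cBS405} entering only once.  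This is shorter and conceptually cleaner; the price is the slightly less elementary Parseval-kernel lemma, whereas the paper stays entirely with H\"older and $\cs_1$-bounds.  Both proofs need $(H\dotplus V+\lambda)^{-1}\in\cs_2$ from Lemma~\ref{l-formabsch}\ref{l-formabsch-4}, and both must (as you note) first take $W\in L_\infty$ and pass to $L_2$ by density, since $M_W$ is unbounded for general $W\in L_2$.
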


\begin{proof}
`\ref{p-lipschpartdens-1}'.
Let $c_3 > 0$ be as in 
Lemma~\ref{lBS402}\ref{lBS402-1}
and $c > 0$ as in Proposition~\ref{pBS404}\ref{pBS404-4}.
Write $\lambda = 1 + \gamma \, R^4$.
Let $W \in L_\infty(\Omega)$.
Then 
\begin{eqnarray*}
\Big| \int_\Omega \mathcal M(V) \, W \Big|
& = & \Big| \tr \bigl( M_W \, f(H \dotplus V) \bigr) \Big|  \\
& = & \Big| \tr \Big( M_W \, (H \dotplus V + \lambda)^{-1}
    (H \dotplus V + \lambda)^3 \, f(H \dotplus V) \, (H \dotplus V + \lambda)^{-2} \Big) \Big|  \\
& \leq & \|M_W \, (H \dotplus V + \lambda)^{-1}\|_{L_2 \to L_2} \, 
     \cdot  \\*
& & \hspace*{5mm} {} \cdot
   \|(H \dotplus V + \lambda)^3 \, f(H \dotplus V)\|_{L_2 \to L_2} \, 
   \|(H \dotplus V + \lambda)^{-2}\|_{\cs_1}  \\
& \leq & c \, \|W\|_{L_2} \, c_3
   \, (2 \|(H + 1)^{-1/2}\|_{\cs_4})^4
,
\end{eqnarray*}
where we used  
Proposition~\ref{pBS404}\ref{pBS404-4}, 
Lemma~\ref{lBS402}\ref{lBS402-1} and Lemma~\ref{l-formabsch}\ref{l-formabsch-4}.

`\ref{p-lipschpartdens-2}'.
Write $\lambda = 1 + \gamma \, R^4$.
Define the function $g \colon \Ri \to \Ri$ by $g(r) = (r + \lambda)^2 \, f(r)$.
Let $U,V \in L_2(\Omega,\Ri)$ with $\|U\|_{L_2} \leq R$
and $\|V\|_{L_2} \leq R$.
Let $W \in L_\infty(\Omega)$.
Then 
\begin{eqnarray*}
\int_\Omega \cm(V) \, W
& = & \Tr (M_W \, f(H \dotplus V) ) \\
& = & \Tr \Big( g(H \dotplus V) \, (H \dotplus V + \lambda)^{-1} \, M_W \, (H \dotplus V + \lambda)^{-1} \Big)  
\end{eqnarray*}
with a similar expression for $U$ instead of $V$.
Therefore
\begin{eqnarray*}
\lefteqn{
\Big| \int_\Omega \Big( \cm(U) - \cm(V) \Big) \, W \Big|
} \hspace*{10mm}  \\*
& = & \Big| \Tr \Big( g(H \dotplus U) - g(H \dotplus V) \Big)
    (H \dotplus U + \lambda)^{-1} \, M_W \, (H \dotplus U + \lambda)^{-1}  \\*
& & \hspace*{10mm} {} + 
    \Tr g(H \dotplus V) \, \Big( (H \dotplus U + \lambda)^{-1} \, M_W \, (H \dotplus U + \lambda)^{-1}  \\*
& & \hspace*{70mm} {} 
         - (H \dotplus V + \lambda)^{-1} \, M_W \, (H \dotplus V + \lambda)^{-1} \Big) \Big|  \\
& \leq & \|g(H \dotplus U) - g(H \dotplus V)\|_\HS \, 
     \|(H \dotplus U + \lambda)^{-1} \, M_W \, (H \dotplus U + \lambda)^{-1}\|_\HS  \\*
& & \hspace*{1mm} {} + 
      \|g(H \dotplus V)\|_\HS \, 
    \|(H \dotplus U + \lambda)^{-1} \, M_W \, (H \dotplus U + \lambda)^{-1}  \\*
& & \hspace*{70mm} {} 
         - (H \dotplus V + \lambda)^{-1} \, M_W \, (H \dotplus V + \lambda)^{-1}\|_\HS
.
\end{eqnarray*}
We estimate the two factors of the two terms.

First consider $\|g(H \dotplus U) - g(H \dotplus V)\|_\HS$.
If $r \in [-\lambda,\infty)$, then 
\[
(r + \lambda)^2 \, |g'(r)|
= 2 (r + \lambda)^3 \, f(r) + (r + \lambda)^4 \, f'(r)
.  \]
Set 
\[
L = \sup_{r \in [-\lambda,\infty)} (r + \lambda)^2 \, |g'(r)|
.  \]
The assumptions on the function~$f$ imply that $L < \infty$.
Then Corollary~\ref{cBS301} together with Corollary~\ref{cBS405} give
\begin{eqnarray*}
\lefteqn{
\|g(H \dotplus U) - g(H \dotplus V)\|_\HS
} \hspace*{10mm}  \\*
& \leq & L \, \|(H \dotplus U + \lambda)^{-1} - (H \dotplus V + \lambda)^{-1}\|_\HS  \\
& = & L \, \|(H \dotplus U + \lambda)^{-1} \, M_{U - V} \, (H \dotplus V + \lambda)^{-1}\|_\HS  \\
& \leq & \|(H \dotplus U + \lambda)^{-1}\|_\HS \, 
   \|M_{U - V} \, (H \dotplus V + \lambda)^{-1}\|_{L_2 \to L_2}   \\
& \leq & (2 \|(H + 1)^{-1/2}\|_{\cs_4})^2  \, c \, \|U - V\|_{L_2}
,
\end{eqnarray*}
where we used Lemma~\ref{l-formabsch}\ref{l-formabsch-4}
and $c > 0$ is as in Proposition~\ref{pBS404}\ref{pBS404-4}.

Secondly and similarly
\[
\|(H \dotplus U + \lambda)^{-1} \, M_W \, (H \dotplus U + \lambda)^{-1}\|_\HS
\leq (2 \|(H + 1)^{-1/2}\|_{\cs_4})^2 \, c \, \|W\|_{L_2}
.  \]

Thirdly,
\[
\|g(H \dotplus V)\|_\HS
\leq \|g(H \dotplus V)\|_{\cs_1}^{1/2} \, \|g(H \dotplus V)\|_{L_2 \to L_2}^{1/2}
\leq c_2^{1/2} \, M_2^{1/2} 
,  \]
where $c_2$ and $M_2$ are as in Lemma~\ref{lBS402}.

Finally, using again Corollary~\ref{cBS405} we obtain
\begin{eqnarray*}
\lefteqn{
\|(H \dotplus U + \lambda)^{-1} \, M_W \, (H \dotplus U + \lambda)^{-1} 
         - (H \dotplus V + \lambda)^{-1} \, M_W \, (H \dotplus V + \lambda)^{-1}\|_\HS
} \hspace*{10mm}  \\*
& \leq & \|\Big( (H \dotplus U + \lambda)^{-1} - (H \dotplus V + \lambda)^{-1} \Big)
   M_W \, (H \dotplus U + \lambda)^{-1} \|_\HS  \\*
& & \hspace*{10mm} {} 
   + \|(H \dotplus V + \lambda)^{-1} \, M_W \Big( (H \dotplus U + \lambda)^{-1} - (H \dotplus V + \lambda)^{-1} \Big)\|_\HS  \\
& \leq & \|(H \dotplus U + \lambda)^{-1} \, M_{U-V} \, (H \dotplus V + \lambda)^{-1} \, 
   M_W \, (H \dotplus U + \lambda)^{-1}\|_\HS  \\*
& & \hspace*{10mm} {} 
   + \|(H \dotplus V + \lambda)^{-1} \, M_W \, (H \dotplus U + \lambda)^{-1} \, 
   M_{U-V} \, (H \dotplus V + \lambda)^{-1}\|_\HS
.
\end{eqnarray*}
Now 
\begin{eqnarray*}
\lefteqn{
\|(H \dotplus U + \lambda)^{-1} \, M_{U-V} \, (H \dotplus V + \lambda)^{-1} \, 
   M_W \, (H \dotplus U + \lambda)^{-1}\|_\HS
} \hspace*{30mm}  \\*
& \leq & \|(H \dotplus U + \lambda)^{-1}\|_\HS
   \, \|M_{U-V} \, (H \dotplus V + \lambda)^{-1}\|_{L_2 \to L_2} \cdot  \\*
& & \hspace*{10mm} {} \cdot
   \|M_W \, (H \dotplus U + \lambda)^{-1}\|_{L_2 \to L_2}  \\
& \leq & (2 \|(H + 1)^{-1/2}\|_{\cs_4})^2 \, c^2 \, \|U-V\|_{L_2} \, \|W\|_{L_2}
,
\end{eqnarray*}
where we used Lemma~\ref{l-formabsch}\ref{l-formabsch-4}
and $c > 0$ is as in Proposition~\ref{pBS404}.
Similarly 
\begin{eqnarray*}
\lefteqn{
\|(H \dotplus V + \lambda)^{-1} \, M_W \, (H \dotplus U + \lambda)^{-1} \, 
   M_{U-V} \, (H \dotplus V + \lambda)^{-1}\|_\HS
} \hspace*{50mm}  \\*
& \leq & (2 \|(H + 1)^{-1/2}\|_{\cs_4})^2 \, c^2 \, \|U-V\|_{L_2} \, \|W\|_{L_2}
.
\end{eqnarray*}

Together we obtain that there is a $C > 0$ such that 
\[
\Big| \int_\Omega \Big( \cm(U) - \cm(V) \Big) \, W \Big|
\leq C \, \|U-V\|_{L_2} \, \|W\|_{L_2}
\]
for all $W \in L_\infty(\Omega)$
and $U,V \in L_2(\Omega,\Ri)$ with $\|U\|_{L_2} \leq R$
and $\|V\|_{L_2} \leq R$.
Hence 
\[
\|\cm(U) - \cm(V)\|_{L_2} 
\leq C \, \|U-V\|_{L_2}
\]
for all $U,V \in L_2(\Omega,\Ri)$ with $\|U\|_{L_2} \leq R$
and $\|V\|_{L_2} \leq R$, as required.
\end{proof}

The map $\cm$ has the following monotonicity on 
the real Hilbert space $L_2(\Omega,\Ri)$.

\begin{prop} \label{pBS508}
Let $U,V \in L_2(\Omega,\Ri)$.
Then $(\cm(U) - \cm(V), U - V)_{L_2} \leq 0$.
\end{prop}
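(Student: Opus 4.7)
The plan is to reduce via the $L_2$-continuity of $\cm$ to the case where $U$ and $V$ are bounded and then to expand the resulting trace in eigenbases of $H \dotplus U$ and $H \dotplus V$, where the monotonicity of $f$ makes the inequality manifest. First I would truncate: set $U_n = (-n) \vee U \wedge n$ and $V_n = (-n) \vee V \wedge n$, so $U_n, V_n \in L_\infty(\Omega,\Ri) \cap L_2(\Omega,\Ri)$ and $U_n \to U$, $V_n \to V$ in $L_2(\Omega,\Ri)$ by dominated convergence. By Proposition~\ref{p-lipschpartdens}\ref{p-lipschpartdens-2} one has $\cm(U_n) \to \cm(U)$ and $\cm(V_n) \to \cm(V)$ in $L_2(\Omega,\Ri)$. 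Consequently $(\cm(U_n) - \cm(V_n), U_n - V_n)_{L_2} \to (\cm(U) - \cm(V), U - V)_{L_2}$, so it suffices to prove the inequality under the additional assumption $U,V \in L_\infty(\Omega,\Ri)$.

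For such potentials, set $A = H \dotplus U$ and $B = H \dotplus V$. By Proposition~\ref{pBS404} one has $A = H + M_U$ and $B = H + M_V$ on the common domain $D(H)$, so $A - B = M_{U-V}$ is a bounded self-adjoint operator on $L_2(\Omega)$. Inserting $W = U - V \in L_\infty(\Omega)$ into the defining relation of $\cm$ yields
\[
(\cm(U) - \cm(V), U - V)_{L_2}
= \Tr\bigl( M_{U-V}(f(A) - f(B)) \bigr)
= \Tr\bigl( (A-B)(f(A) - f(B)) \bigr),
\]
a well-defined trace since $A - B$ is bounded and $f(A), f(B)$ are trace class (Lemma~\ref{lBS402}\ref{lBS402-2} with $k = 0$). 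By Proposition~\ref{p-HSWeyl}, $H$ has compact resolvent, hence so do $A$ and $B$; choose orthonormal eigenbases $(a_i)_{i \in \Ni}$ and $(b_j)_{j \in \Ni}$ with eigenvalues $\lambda_i$ and $\mu_j$, all of which lie in $D(H) = D(A) = D(B)$. Expanding $f(B) a_i = \sum_j f(\mu_j) \langle b_j, a_i\rangle b_j$ and using $\langle a_i, (A-B) b_j\rangle = (\lambda_i - \mu_j)\langle a_i, b_j\rangle$ (from self-adjointness), I would compute
\[
\langle a_i, (A-B)(f(A) - f(B)) a_i\rangle
= \sum_j (\lambda_i - \mu_j)(f(\lambda_i) - f(\mu_j))\, |\langle a_i, b_j\rangle|^2,
\]
and summing over $i$ gives
\[
\Tr\bigl((A-B)(f(A) - f(B))\bigr)
= \sum_{i,j} (\lambda_i - \mu_j)(f(\lambda_i) - f(\mu_j))\, |\langle a_i, b_j\rangle|^2 \leq 0,
\]
since $f$ is decreasing, every summand is non-positive.

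The main obstacle is the justification of the interchange of sums. Absolute convergence follows from $|f(\lambda_i) - f(\mu_j)| \leq f(\lambda_i) + f(\mu_j)$ together with the Cauchy--Schwarz estimate $\sum_j |\mu_j|\, |\langle a_i, b_j\rangle|^2 \leq \|B a_i\| \leq |\lambda_i| + \|A - B\|$ (and its symmetric analogue), combined with $\sum_i f(\lambda_i) < \infty$ and $\sum_i |\lambda_i|\, f(\lambda_i) < \infty$ coming from Lemma~\ref{lBS402}\ref{lBS402-2} with $k = 0$ and $k = 1$ together with the lower bound $\lambda_i \geq -\lambda$ obtained from Lemma~\ref{l-formabsch}\ref{l-formabsch-5}. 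The reduction to $L_\infty$ potentials in the first step is precisely what provides the boundedness of $A - B$ needed for this estimate.
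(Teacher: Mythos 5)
Your proof is correct and follows essentially the same route as the paper: reduce to bounded potentials via the Lipschitz continuity of $\cm$ from Proposition~\ref{p-lipschpartdens}\ref{p-lipschpartdens-2}, expand the trace in eigenbases of $H \dotplus U$ and $H \dotplus V$ using that both operators have the common domain $D(H)$ and differ by $M_{U-V}$, and conclude from $(\lambda_n - \mu_m)(f(\lambda_n) - f(\mu_m)) \leq 0$. You are somewhat more explicit than the paper about the truncation step and the absolute convergence needed to reorder the double sum, but the underlying argument is the same.
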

\begin{proof}
We follow arguments given in \cite{KaiserNeidhardtRehberg}.
First suppose that $U,V \in L_\infty(\Omega,\Ri)$.
There exists an orthonormal basis $(\psi_n)_{n \in \Ni}$ for 
$L_2(\Omega)$ and for all $n \in \Ni$ there is a $\lambda_n \in \Ri$ 
such that $(H \dotplus U) \psi_n = \lambda_n \, \psi_n$.
Similarly there exists an orthonormal basis $(\varphi_n)_{n \in \Ni}$ for 
$L_2(\Omega)$ and for all $n \in \Ni$ there is a $\mu_n \in \Ri$ 
such that $(H \dotplus V) \varphi_n = \mu_n \, \varphi_n$.

Let $n,m \in \Ni$.
Then $\psi_n \in D(H \dotplus U) = D(H) = D(H \dotplus V)$ by Proposition~\ref{pBS404}\ref{pBS404-1}.
Hence 
\[
((U - V) \psi_n, \varphi_m)_{L_2}
= ((H \dotplus U - H \dotplus V) \psi_n, \varphi_m)_{L_2}
= (\lambda_n - \mu_m) \, (\psi_n, \varphi_m)_{L_2}
.  \]
Then 
\begin{eqnarray*}
\int_\Omega (U - V) \, \cm(U)
& = & \tr( M_{U-V} \, f(H \dotplus U)
= \sum_{n = 1}^\infty (M_{U-V} \, f(H \dotplus U) \psi_n, \psi_n)_{L_2}  \\
& = & \sum_{n = 1}^\infty f(\lambda_n) \, (( U - V) \psi_n, \psi_n)_{L_2}  \\
& = & \sum_{n,m = 1}^\infty f(\lambda_n) \, (( U - V) \psi_n, \varphi_m)_{L_2} (\varphi_m, \psi_n)_{L_2}  \\
& = & \sum_{n,m = 1}^\infty f(\lambda_n) \, (\lambda_n - \mu_m) \, |(\psi_n, \varphi_m)_{L_2}|^2
.
\end{eqnarray*}
Similarly
\[
\int_\Omega (V - U) \, \cm(V)
= \sum_{n,m = 1}^\infty f(\mu_m) \, (\mu_m - \lambda_n) \, |(\psi_n, \varphi_m)_{L_2}|^2
.  \]
Hence 
\[
\int_\Omega (U - V) \, \Big( \cm(U) - \cm(V) \Big)
= \sum_{n,m = 1}^\infty (f(\lambda_n) - f(\mu_m)) \, (\lambda_n - \mu_m) \, |(\psi_n, \varphi_m)_{L_2}|^2
\leq 0
\]
since $f$ is decreasing.

Now the proposition follows from Proposition~\ref{p-lipschpartdens}\ref{p-lipschpartdens-2} 
by density and continuity.
\end{proof}

If $V \in L_2(\Omega,\Ri)$ and $t \in \Ri$, 
then $H \dotplus V - t = H + (V - t \, \one_\Omega)$ and 
$V - t \, \one_\Omega \in L_2(\Omega,\Ri)$.
Choosing $k = 0$ in Lemma~\ref{lBS402}\ref{lBS402-2} 
gives that $f(H \dotplus V - t)$
is trace class for all $V \in L_2(\Omega,\Ri)$ and $t \in \Ri$.

\begin{lemma} \label{l-Fermi}
For all $V \in L_2(\Omega,\Ri)$ and $\widetilde N \in (0,\infty)$
there is a unique
$t \in \Ri$ such that $\tr \bigl ( f(H \dotplus V - t) \bigr ) = \widetilde N$.
\end{lemma}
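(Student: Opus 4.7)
Fix $V \in L_2(\Omega,\Ri)$. The plan is to define $\varphi \colon \Ri \to (0,\infty)$ by $\varphi(t) = \tr f(H \dotplus V - t)$, show that $\varphi$ is well-defined, continuous, strictly increasing, and satisfies $\lim_{t \to -\infty} \varphi(t) = 0$ and $\lim_{t \to \infty} \varphi(t) = \infty$; the conclusion then follows from the intermediate value theorem and the strict monotonicity. Note that $H \dotplus V - t = H \dotplus (V - t\,\one_\Omega)$ with $V - t\,\one_\Omega \in L_2(\Omega,\Ri)$, so Lemma~\ref{lBS402}\ref{lBS402-2} applied with $k = 0$ guarantees that $f(H \dotplus V - t)$ is trace class and $\varphi(t) < \infty$. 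Moreover, since $H \dotplus V$ has compact resolvent (by Lemma~\ref{l-formabsch}\ref{l-formabsch-4}), there is an orthonormal basis $(\psi_n)_{n \in \Ni}$ of $L_2(\Omega)$ with $(H \dotplus V)\psi_n = \lambda_n \, \psi_n$, where $\lambda_n \to \infty$, and by the spectral theorem
\[
\varphi(t) = \sum_{n=1}^\infty f(\lambda_n - t).
\]

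Strict monotonicity is immediate: since $f$ is strictly decreasing, each summand is strictly increasing in~$t$. For continuity at a fixed $t_0 \in \Ri$, pick $\delta > 0$ and observe that for $|t - t_0| \leq \delta$ and for all sufficiently large $n$ (so that $\lambda_n - t_0 - \delta \geq 1$) one has $f(\lambda_n - t) \leq f(\lambda_n - t_0 - \delta) \leq C (\lambda_n - t_0 - \delta)^{-4}$ by the bound $\sup_{r \geq 0} r^4 f(r) < \infty$; this tail is summable by Lemma~\ref{l-formabsch}\ref{l-formabsch-4}, since $\sum_n (\lambda_n + \lambda)^{-2} < \infty$ with $\lambda = 1 + \gamma\|V\|_{L_2}^4$. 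The dominated convergence theorem then yields continuity. For the limit at $-\infty$, choose any $t_0 \leq -\lambda - 1$; then for $t \leq t_0$ we have $\lambda_n - t \geq 1$ and thus $f(\lambda_n - t) \leq C(\lambda_n - t)^{-4} \leq C(\lambda_n - t_0)^{-4}$, a summable majorant; since each term tends to $0$ as $t \to -\infty$ (because $f(r) \to 0$ for $r \to \infty$, forced by $r^4 f(r)$ being bounded), dominated convergence gives $\varphi(t) \to 0$.

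For the limit at $+\infty$, since $f$ is strictly decreasing and positive, $f(r) \geq f(0) > 0$ for all $r \leq 0$. Given any $K \in \Ni$, for $t$ large enough that $t \geq \lambda_K$ we have $\lambda_n - t \leq 0$ for $n \leq K$, hence $\varphi(t) \geq \sum_{n=1}^K f(\lambda_n - t) \geq K f(0)$; letting $K \to \infty$ shows $\varphi(t) \to \infty$. Combining all four properties, the equation $\varphi(t) = \widetilde N$ has a unique solution for every $\widetilde N \in (0,\infty)$. The main subtlety is producing a uniform summable majorant for the continuity and the limit at $-\infty$; this is exactly where the decay hypothesis $\sup_{r \geq 0} r^4 f(r) < \infty$ meets the Schatten-class property of the resolvent delivered by Lemma~\ref{l-formabsch}\ref{l-formabsch-4}.
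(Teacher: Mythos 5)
Your proposal is correct and follows essentially the same route as the paper's proof: expand $\tr f(H \dotplus V - t)$ over the eigenvalues $\lambda_n$ of $H \dotplus V$, then deduce continuity and the limit $0$ at $-\infty$ via dominated convergence, strict monotonicity from $f$ being strictly decreasing, and the limit $\infty$ at $+\infty$ from $f(0) > 0$. The paper is terser and leaves the construction of a summable dominating function implicit; you supply it explicitly from the decay $\sup_{r \geq 0} r^4 f(r) < \infty$ combined with $(H \dotplus V + \lambda)^{-1/2} \in \cs_4$, which is a welcome clarification but not a different method.
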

\begin{proof}
Let $\lambda_1 \leq \lambda_2 \leq \ldots$ be the eigenvalues of the 
operator $H \dotplus V$, repeated with multiplicity. 
Then 
\[
 \tr \bigl ( f(H \dotplus V - t) \bigr ) 
= \sum_{n=1}^\infty f(\lambda_n - t)
\]
for all $t \in \Ri$.
This series is absolutely convergent for each $t \in \Ri$.
Since $f$ is continuous and strictly decreasing, it follows from 
the Lebesgue dominated convergence theorem that $t \mapsto \tr \bigl ( f(H \dotplus V - t) \bigr )$
is continuous and strictly increasing.

Because $f(0) > 0$ one deduces that 
$\lim_{t \to \infty} \tr \bigl ( f(H \dotplus V - t) \bigr ) = \infty$.
Using again the Lebesgue dominated convergence theorem
it follows that 
$\lim_{t \to - \infty} \tr \bigl ( f(H \dotplus V - t) \bigr ) = 0$.
Then the existence follows.
\end{proof}

For all $V \in L_2(\Omega,\Ri)$ we define the {\bf Fermi level}
$\mathcal E(V) = t$, where $t \in \Ri$ is such that 
\[
\tr \bigl ( f(H \dotplus V - t) \bigr ) = N
.  \]
Recall that $N$ is fixed in Theorem~\ref{tBS510}.
We next prove that the function $V \mapsto \ce(V)$ is locally bounded.

\begin{prop} \label{pBS506}
For all $R > 0$ there exists an $M > 0$ such that 
$|\ce(V)| \leq M$ for all $V \in L_2(\Omega,\Ri)$ with $\|V\|_{L_2} \leq R$.
\end{prop}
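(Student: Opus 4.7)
The plan is to obtain uniform upper and lower bounds on $\ce(V)$ separately, exploiting the form comparison of Lemma~\ref{l-formabsch}\ref{l-formabsch-5} to reduce everything to the (fixed) spectral data of $H$, and then using the two tail assumptions on $f$.

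First, I fix $R > 0$ and set $\lambda = 1 + \gamma \, R^4$ as in Lemma~\ref{l-formabsch}. By Lemma~\ref{l-formabsch}\ref{l-formabsch-4} both $H$ and $H \dotplus V$ have compact resolvent, so each has a purely discrete spectrum $\lambda_1(\cdot) \leq \lambda_2(\cdot) \leq \ldots \to \infty$, repeated with multiplicity. The two-sided form bound of Lemma~\ref{l-formabsch}\ref{l-formabsch-5} reads $\tfrac{1}{4}\gott - \lambda \|\cdot\|_{L_2}^2 \leq \gott_V \leq \tfrac{7}{4}\gott + \lambda\|\cdot\|_{L_2}^2$ on the common form domain $W^{1,2}_D(\Omega)$, and via the mini-max principle this transfers to the uniform eigenvalue sandwich
\[
\tfrac{1}{4}\lambda_n(H) - \lambda \;\leq\; \lambda_n(H \dotplus V) \;\leq\; \tfrac{7}{4}\lambda_n(H) + \lambda
\]
for every $n \in \Ni$ and every $V \in L_2(\Omega,\Ri)$ with $\|V\|_{L_2} \leq R$.

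For the upper bound on $\ce(V)$, I would pick an integer $k$ with $k\,f(0) > N$ and set $t_0 := \tfrac{7}{4}\lambda_k(H) + \lambda$, a constant depending only on $R$ and $N$. For any $t \geq t_0$ the first $k$ eigenvalues of $H \dotplus V$ all lie at or below $t$, so by positivity and monotonicity of $f$ we get $\tr f(H \dotplus V - t) \geq k\,f(0) > N$, which by Lemma~\ref{l-Fermi} forces $\ce(V) < t_0$.

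For the lower bound, I use the hypothesis $C_f := \sup_{r \geq 0} r^4 f(r) < \infty$ together with the summability $\sum_n (\lambda_n(H)+1)^{-2} = \|(H+1)^{-1/2}\|_{\cs_4}^4 < \infty$ from Proposition~\ref{p-HSWeyl}. For $t \leq -\lambda - 1$, put $b := -\lambda - t \geq 1$; then $\lambda_n(H \dotplus V) - t \geq \tfrac{1}{4}\lambda_n(H) + b > 0$ for every $n$, so $f(\lambda_n(H \dotplus V) - t) \leq C_f \, (\tfrac14 \lambda_n(H) + b)^{-4}$. A crude estimate such as $(\tfrac14\lambda_n(H) + b)^4 \geq \tfrac{1}{16}(\lambda_n(H)+1)^2 b^2$ (valid because $b \geq 1$) yields
\[
\tr f(H \dotplus V - t) \;\leq\; \frac{16\,C_f\,\|(H+1)^{-1/2}\|_{\cs_4}^4}{b^2}.
\]
Choosing $b$ large enough that the right-hand side is $< N$, uniformly in $V$, I get $\ce(V) > t$, producing a uniform lower bound. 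The two bounds combined give the desired $M$.

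The only real obstacle is the passage from form comparison to eigenvalue comparison; but since Lemma~\ref{l-formabsch}\ref{l-formabsch-5} already produces two-sided form bounds with the \emph{same} form domain $W^{1,2}_D(\Omega)$, the mini-max principle applies directly and everything else is elementary once the decay hypotheses on $f$ are invoked on the correct side of the spectrum.
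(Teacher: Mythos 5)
Your proof is correct and follows essentially the same route as the paper: both derive the uniform eigenvalue sandwich $\tfrac{1}{4}\lambda_n^{(0)} - \lambda \leq \lambda_n^{(V)} \leq \tfrac{7}{4}\lambda_n^{(0)} + \lambda$ from Lemma~\ref{l-formabsch}\ref{l-formabsch-5} via the mini-max principle, and then bound $\tr f(H \dotplus V - t)$ above and below by $V$-independent sums. The only cosmetic difference is that you construct the bounding Fermi levels explicitly from the decay $r^4 f(r) \leq C_f$ and the $\cs_4$-summability, whereas the paper instead invokes the argument of Lemma~\ref{l-Fermi} on the two scaled copies of $H$ to obtain $T$ and $\widetilde T$.
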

\begin{proof}
The proof is a modification of the proof of Lemma~\ref{l-Fermi}.
Write $\lambda = 1 + \gamma \, R^4$.
For all $V \in L_2(\Omega,\Ri)$ with $\|V\|_{L_2} \leq R$ let 
$\lambda_1^{(V)} \leq \lambda_2^{(V)} \leq \ldots$ be the eigenvalues of the 
operator $H \dotplus V$, repeated with multiplicity. 
Then the mini-max theorem together with the form bounds of Lemma~\ref{l-formabsch}\ref{l-formabsch-5}
gives
\[
\tfrac{1}{4} \, \lambda_n^{(0)} - \lambda 
\leq \lambda_n^{(V)}
\leq \tfrac{7}{4} \, \lambda_n^{(0)} + \lambda 
\]
for all $n \in \Ni$.
Let $t \in \Ri$.
Since $f$ is decreasing one estimates
\[
\sum_{n=1}^\infty f(\tfrac{7}{4} \, \lambda_n^{(0)} + \lambda - t)
\leq \sum_{n=1}^\infty f(\lambda_n^{(V)} - t)
\leq \sum_{n=1}^\infty f(\tfrac{1}{4} \, \lambda_n^{(0)} - \lambda - t)
.  \]
Arguing as in the proof of Lemma~\ref{l-Fermi} there are $T,\widetilde T \in \Ri$
such that 
\[
\sum_{n=1}^\infty f(\tfrac{7}{4} \, \lambda_n^{(0)} + \lambda - T)
= N
= \sum_{n=1}^\infty f(\tfrac{1}{4} \, \lambda_n^{(0)} - \lambda - \widetilde T)
.  \]
Then $T \geq \ce(V) \geq \widetilde T$.
\end{proof}

Define the {\bf particle density} $\cn \colon L_2(\Omega,\Ri) \to L_2(\Omega,\Ri)$
by 
\[
\mathcal N(V)
= \cm(V - \ce(V) \, \one_\Omega)
.  \]
So 
\[
\int_\Omega \cn(V) \, W 
= \tr \bigl ( M_W \, f(H \dotplus V - \ce(V)) \bigr )
\]
and 
\begin{equation}
\int_\Omega \cn(V)
= \tr f(H \dotplus V - \ce(V)) 
= N
\label{etBS506;1}
\end{equation}
for all $V \in L_2(\Omega,\Ri)$ and $W \in L_\infty(\Omega)$.
We next prove that Proposition~\ref{p-lipschpartdens}\ref{p-lipschpartdens-2} 
remains valid if 
$\mathcal M$ is replaced by $\mathcal N$.
This was proved before in \cite{KaiserRehberg} Section~4.

\begin{prop} \label{pBS506-N}
Let $R > 0$. 
Then there exists an $M > 0$ such that 
\[
\|\cn(U) - \cn(V) \|_{L_2} \leq M \, \|U - V \|_{L_2}
\]
for all $U,V \in L_2(\Omega,\Ri)$ with $\|U\|_{L_2} \leq R$
and $\|V\|_{L_2} \leq R$.
\end{prop}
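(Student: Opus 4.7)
The strategy is to reduce the claim to a local Lipschitz estimate on the Fermi level $\ce \colon L_2(\Omega,\Ri) \to \Ri$ and then assemble the result from Proposition~\ref{p-lipschpartdens}\ref{p-lipschpartdens-2} via the decomposition $\cn(V) = \cm(V - \ce(V)\,\one_\Omega)$. Concretely, Proposition~\ref{pBS506} yields $M_0 > 0$ with $|\ce(V)|\leq M_0$ whenever $\|V\|_{L_2} \leq R$, so both $U - \ce(U)\,\one_\Omega$ and $V - \ce(V)\,\one_\Omega$ lie in the $L_2$-ball of radius $R + M_0\, |\Omega|^{1/2}$. Proposition~\ref{p-lipschpartdens}\ref{p-lipschpartdens-2} applied on that larger ball would then give a constant $M_1 > 0$ with
\[
\|\cn(U) - \cn(V)\|_{L_2}
\leq M_1 \, \bigl(\|U - V\|_{L_2} + |\ce(U) - \ce(V)| \, |\Omega|^{1/2} \bigr),
\]
so it suffices to establish $|\ce(U) - \ce(V)| \leq C \, \|U - V\|_{L_2}$ for some $C$ depending only on $R$.

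For the Fermi level estimate I would work with the function
\[
F(V,t) \;=\; \tr f(H \dotplus V - t) \;=\; \int_\Omega \cm(V - t\,\one_\Omega) \;=\; \sum_{n=1}^\infty f(\lambda_n^{(V)} - t),
\]
where $\lambda_1^{(V)} \leq \lambda_2^{(V)} \leq \ldots$ are the eigenvalues of $H \dotplus V$ listed with multiplicity. The defining identities $F(U,\ce(U)) = N = F(V,\ce(V))$ rewrite as
\[
F(V,\ce(U)) - F(V,\ce(V)) \;=\; F(V,\ce(U)) - F(U,\ce(U)),
\]
and the task is to bound the right-hand side from above and the left-hand side from below.

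The upper bound comes straight from Proposition~\ref{p-lipschpartdens}\ref{p-lipschpartdens-2} applied to the pair $U - \ce(U)\,\one_\Omega$ and $V - \ce(U)\,\one_\Omega$, followed by Cauchy--Schwarz against $\one_\Omega$: this gives $|F(V,\ce(U)) - F(U,\ce(U))| \leq M_1 \, |\Omega|^{1/2} \, \|U - V\|_{L_2}$. For the lower bound, assume without loss of generality that $\ce(U) \geq \ce(V)$; since $f$ is strictly decreasing, every summand in $F(V,\ce(U)) - F(V,\ce(V))$ is nonnegative, so discarding all but the $n=1$ term gives
\[
F(V,\ce(U)) - F(V,\ce(V)) \;\geq\; f(\lambda_1^{(V)} - \ce(U)) - f(\lambda_1^{(V)} - \ce(V)).
\]
The form bounds in Lemma~\ref{l-formabsch}\ref{l-formabsch-5} together with the mini-max principle control $\lambda_1^{(V)}$ from above and below uniformly in $\|V\|_{L_2} \leq R$, and $|\ce(V)|,|\ce(U)| \leq M_0$, so both arguments of $f$ lie in a fixed compact interval $[-A,A]$ depending only on $R$.

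The key input at this point is the hypothesis that $f^{-1}$ is locally Lipschitz: this is exactly equivalent to $f$ having a strictly positive lower Lipschitz slope on each compact interval, so there is $c = c(R) > 0$ with $f(x) - f(y) \geq c(y - x)$ for all $x \leq y$ in $[-A,A]$. Applied to $x = \lambda_1^{(V)} - \ce(U)$ and $y = \lambda_1^{(V)} - \ce(V)$, this yields
\[
F(V,\ce(U)) - F(V,\ce(V)) \;\geq\; c \, (\ce(U) - \ce(V)).
\]
Comparing the two bounds produces $|\ce(U) - \ce(V)| \leq (M_1 \, |\Omega|^{1/2}/c) \, \|U - V\|_{L_2}$, and substituting back into the opening display finishes the proof. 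The main obstacle is exactly this lower bound on $F$ in $t$: without the assumption on $f^{-1}$ the slope of $f$ could degenerate and no uniform control of the Fermi level would be available. The use of only the ground-state term $n=1$ is what keeps the argument elementary, since one avoids having to estimate the derivative of $f$ at arguments $\lambda_n^{(V)} - t$ that grow with $n$.
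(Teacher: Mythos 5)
Your proposal is correct and follows essentially the same route as the paper's own proof: reduce to a local Lipschitz estimate on the Fermi level $\ce$, bound $\Tr f(H\dotplus V - \ce(U)) - \Tr f(H\dotplus V - \ce(V))$ from above via Proposition~\ref{p-lipschpartdens}\ref{p-lipschpartdens-2} and Cauchy--Schwarz, and from below by keeping only the ground-state term together with the local Lipschitz continuity of $f^{-1}$. The only cosmetic difference is that you phrase the resulting slope bound of $f$ on a compact domain interval $[-A,A]$, whereas the paper phrases the equivalent condition via a compact interval $[-M_2,M_2]$ in the range of $f$.
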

\begin{proof}
Using Propositions~\ref{pBS506} and \ref{p-lipschpartdens}\ref{p-lipschpartdens-2}
it suffices to show that there is an $M > 0$ such that 
$|\ce(U) - \ce(V)| \leq M \, \|U - V\|_{L_2}$
for all $U,V \in L_2(\Omega,\Ri)$ with $\|U\|_{L_2} \leq R$ and $\|V\|_{L_2} \leq R$.

For all $V \in L_2(\Omega,\Ri)$ let 
$\lambda_1^{(V)} \leq \lambda_2^{(V)} \leq \ldots$ be the eigenvalues of the 
operator $H \dotplus V$, repeated with multiplicity. 
Then the mini-max theorem together with the form bounds of Lemma~\ref{l-formabsch}\ref{l-formabsch-5}
gives
\[
\tfrac{1}{4} \, \lambda_1^{(0)} - \lambda 
\leq \lambda^{(V)}_1
\leq \tfrac{7}{4} \, \lambda_1^{(0)} + \lambda 
\]
for all $V \in L_2(\Omega,\Ri)$ with $\|V\|_{L_2} \leq R$,
where $\lambda = 1 + \gamma \, R^4$.
By Proposition~\ref{pBS506} there is an $M_1 > 0$ such that 
$|\ce(V)| \leq M_1$ for all $V \in L_2(\Omega,\Ri)$ with $\|V\|_{L_2} \leq R$.
Since $f$ is locally bounded, there is an $M_2 > 0$ such that 
$f(\lambda_1^{(V)} - \ce(V)) \leq M_2$ for all $V \in L_2(\Omega,\Ri)$ with $\|V\|_{L_2} \leq R$.
By the assumptions on~$f$ 
there is an $L > 0$ such that
$|t-s| \leq L \, |f(t) - f(s)|$ for all $t,s \in \Ri$ with 
$f(t) \in [-M_2,M_2]$ and $f(s) \in [-M_2,M_2]$.
By Proposition~\ref{p-lipschpartdens}\ref{p-lipschpartdens-2} there is a $c > 0$ 
such that 
\[
\|\cm(U) - \cm(V)\|_{L_2}
\leq c \, \|U - V\|_{L_2}
\]
for all $U,V \in L_2(\Omega,\Ri)$ with $\|U\|_{L_2} \leq R + M_1 \, |\Omega|^{1/2}$
and $\|V\|_{L_2} \leq R + M_1 \, |\Omega|^{1/2}$.

Now let $U,V \in L_2(\Omega,\Ri)$ with $\|U\|_{L_2} \leq R$ and 
$\|V\|_{L_2} \leq R$.
Without loss of generality we may assume that $\ce(V) \geq \ce(U)$.
Since $f$ is decreasing one deduces that 
\[
f(\lambda_n^{(V)} - \ce(V)) - f(\lambda_n^{(V)} - \ce(U))
\geq 0
\]
for all $n \in \Ni$.
Consequently 
\begin{eqnarray*}
\lefteqn{
\Tr f(H \dotplus V - \ce(V)) - \Tr f(H \dotplus V - \ce(U))
} \hspace*{30mm} \\*
& = & \sum_{n=1}^\infty \Big( f(\lambda_n^{(V)} - \ce(V)) - f(\lambda_n^{(V)} - \ce(U)) \Big)  \\
& \geq & |f(\lambda_1^{(V)} - \ce(V)) - f(\lambda_1^{(V)} - \ce(U))|  \\
& \geq & L^{-1} \, |(\lambda_1^{(V)} - \ce(V)) - (\lambda_1^{(V)} - \ce(U))|  \\
& = & L^{-1} \, |\ce(U) - \ce(V)|
.
\end{eqnarray*}
On the other hand,
\[
\Tr f(H \dotplus V - \ce(V))
= N
= \Tr f(H \dotplus U - \ce(U))
\]
and hence
\begin{eqnarray*}
\lefteqn{
\Tr f(H \dotplus V - \ce(V)) - \Tr f(H \dotplus V - \ce(U))
} \hspace*{30mm} \\*
& = & |\Tr f(H \dotplus U - \ce(U)) - \Tr f(H \dotplus V - \ce(U))|  \\
& = & \Big| \int_\Omega \Big( \cm(U - \ce(U)) - \cm(V - \ce(U)) \Big) \Big|  \\
& \leq & |\Omega|^{1/2} \, \|\cm(U - \ce(U)) - \cm(V - \ce(U))\|_{L_2}  \\
& \leq & c \, |\Omega|^{1/2} \, \|U - V\|_{L_2}
\end{eqnarray*}
from which the theorem follows.
\end{proof}

Also Proposition~\ref{pBS508} is valid for $\cn$ instead of $\cm$.

\begin{prop} \label{pBS509}
Let $U,V \in L_2(\Omega,\Ri)$.
Then $(\cn(U) - \cn(V), U - V)_{L_2} \leq 0$.
\end{prop}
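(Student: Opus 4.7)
The plan is to reduce the monotonicity of $\cn$ to the already established monotonicity of $\cm$ from Proposition~\ref{pBS508}, using the definition $\cn(V) = \cm(V - \ce(V)\,\one_\Omega)$ together with the trace normalisation (\ref{etBS506;1}).

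First I would set $U' = U - \ce(U)\,\one_\Omega$ and $V' = V - \ce(V)\,\one_\Omega$, both of which belong to $L_2(\Omega,\Ri)$. Applying Proposition~\ref{pBS508} to the pair $(U',V')$ yields
\[
(\cm(U') - \cm(V'),\, U' - V')_{L_2} \leq 0,
\]
which, by the definition of $\cn$, rewrites as
\[
\bigl(\cn(U) - \cn(V),\, (U-V) - (\ce(U) - \ce(V))\,\one_\Omega\bigr)_{L_2} \leq 0.
\]

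Next I would expand the inner product to separate the two contributions:
\[
(\cn(U) - \cn(V),\, U - V)_{L_2}
- (\ce(U) - \ce(V)) \int_\Omega (\cn(U) - \cn(V)) \leq 0.
\]
The key observation, which makes the argument work, is that by (\ref{etBS506;1}) one has $\int_\Omega \cn(U) = N = \int_\Omega \cn(V)$, so the second term vanishes identically. Hence $(\cn(U) - \cn(V),\, U - V)_{L_2} \leq 0$, as claimed.

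There is no real obstacle here: the Fermi level shift is precisely arranged so that $\cn$ lands in the affine subspace of $L_1$-functions with integral $N$, and the scalar ambiguity introduced by $\ce$ drops out against this linear constraint. The only thing to double-check is that $U - \ce(U)\,\one_\Omega$ and $V - \ce(V)\,\one_\Omega$ are legitimate arguments for Proposition~\ref{pBS508}, which is immediate since $\Omega$ has finite measure and $\ce(U),\ce(V) \in \Ri$, so the shifts lie in $L_2(\Omega,\Ri)$.
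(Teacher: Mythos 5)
Your proof is correct and is essentially the same as the paper's: both rely on the normalisation $\int_\Omega \cn(U)=N=\int_\Omega \cn(V)$ to remove the Fermi-level shift from the inner product, and then invoke Proposition~\ref{pBS508}. The paper just runs the algebra in the opposite direction, replacing $U-V$ by $(U-\ce(U)\one_\Omega)-(V-\ce(V)\one_\Omega)$ inside $(\cn(U)-\cn(V),\cdot)_{L_2}$ before applying the monotonicity of~$\cm$.
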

\begin{proof}
Using (\ref{etBS506;1}) and Proposition~\ref{pBS508} one deduces that 
\begin{eqnarray*}
\lefteqn{
(\cn(U) - \cn(V), U - V)_{L_2}
} \hspace*{10mm} \\*
& = & (\cn(U) - \cn(V), (U - \ce(U) \, \one_\Omega) - (V - \ce(V) \, \one_\Omega) )_{L_2}  \\
& = & ((\cm(U - \ce(U) \, \one_\Omega) - \cm(V - \ce(V) \, \one_\Omega), 
            (U - \ce(U) \, \one_\Omega) - (V - \ce(V) \, \one_\Omega) )_{L_2}
\leq 0
\end{eqnarray*}
as required.
\end{proof}

We next show that for all $V_0 \in L_2(\Omega,\Ri)$ and $q \in (W^{1,2}_D(\Omega,\Ri))^*$
there is a $V \in W^{1,2}_D(\Omega)$ such that 
\[
- \nabla \cdot \varepsilon \nabla V
= q + \cn(V_0 + V)
\]
in a weak sense.

\begin{prop} \label{tBS509.5}
Let $V_0 \in L_2(\Omega,\Ri)$.
Write $\ch = W^{1,2}_D(\Omega,\Ri)$.
Define the map $A_0 \colon \ch \to \ch^*$ by
\[
\langle A_0 V, W \rangle_{\ch^* \times \ch}
= \int_\Omega \varepsilon \nabla V \cdot \nabla W
   - (\cn(V_0 + V), W)_{L_2(\Omega)}
.  \]
Then one has the following.
\begin{tabel}
\item \label{tBS509.5-1}
The operator $A_0$ is strongly monotone with monotonicity constant
$\mu \, (1 + c_P)^{-1}$.
\item \label{tBS509.5-2}
For all $R > 0$ there exists a $C > 0$ such that 
\[
\|A_0 U - A_0 V\|_{\ch^*}
\leq C \, \|U - V\|_\ch
\]
for all $U,V \in \ch$ with $\|U\|_\ch \leq R$ and $\|V\|_\ch \leq R$.
\item \label{tBS509.5-3}
For all $q \in \ch^*$ there is a unique $V \in \ch$ 
such that 
\[
\langle A_0 V, W \rangle_{\ch^* \times \ch}
= \langle q, W \rangle_{\ch^* \times \ch}
\]
for all $W \in \ch$.
It follows that 
$\|V\|_\ch \leq \frac{1+c_P}{\mu} \, \|q + \cn(V_0)\|_{\ch^*}$.
\end{tabel}
\end{prop}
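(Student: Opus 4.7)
The plan is to establish the three parts in order, with~\ref{tBS509.5-3} following from the contractive iteration scheme of Theorem~\ref{t-ggz} once~\ref{tBS509.5-1} and~\ref{tBS509.5-2} are in hand. The key ingredients are the antimonotonicity of~$\cn$ from Proposition~\ref{pBS509}, its local Lipschitz continuity from Proposition~\ref{pBS506-N}, ellipticity of~$\varepsilon$, and Poincar\'e's inequality to convert $\|\nabla \cdot\|_{L_2}$ into the full $W^{1,2}$ norm.

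For~\ref{tBS509.5-1}, I would take $U,V \in \ch$ and expand
\[
\langle A_0 V - A_0 U, V - U \rangle_{\ch^* \times \ch}
= \int_\Omega \varepsilon\, \nabla(V-U) \cdot \nabla(V-U)
   - (\cn(V_0+V) - \cn(V_0+U), V - U)_{L_2}.
\]
Ellipticity bounds the first term from below by $\mu\,\|\nabla(V-U)\|_{L_2}^2$, while the second term is nonnegative after the minus sign, since $(V_0+V)-(V_0+U)=V-U$ and Proposition~\ref{pBS509} makes the inner product before the minus nonpositive. Applying Poincar\'e in the form $\|\nabla w\|_{L_2}^2 \geq (1+c_P)^{-1}\|w\|_\ch^2$ yields the claimed monotonicity constant $m = \mu(1+c_P)^{-1}$.

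For~\ref{tBS509.5-2}, fix $R>0$ and take $U,V \in \ch$ with $\|U\|_\ch, \|V\|_\ch \leq R$. Testing $A_0 V - A_0 U$ against $W \in \ch$, the gradient contribution is controlled by $\|\varepsilon\|_{L_\infty}\,\|V-U\|_\ch\,\|W\|_\ch$. For the density contribution, the trivial embedding $\ch \hookrightarrow L_2(\Omega)$ yields $\|V_0+V\|_{L_2}, \|V_0+U\|_{L_2} \leq \|V_0\|_{L_2}+R$, so Proposition~\ref{pBS506-N} applied on the $L_2$-ball of radius $\|V_0\|_{L_2}+R$ furnishes a constant $M'$ with
\[
|(\cn(V_0+V) - \cn(V_0+U), W)_{L_2}|
\leq M'\,\|V-U\|_{L_2}\,\|W\|_{L_2}
\leq M'\,\|V-U\|_\ch\,\|W\|_\ch.
\]
Adding the two bounds gives a Lipschitz constant $C = \|\varepsilon\|_{L_\infty}+M'$ on the $R$-ball.

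For~\ref{tBS509.5-3}, I would apply Theorem~\ref{t-ggz} with the above $m$ and $f$ taken to be~$q$. Since the gradient term of $A_0$ vanishes at $V=0$, a direct computation gives $A_0(0) = -\cn(V_0)$ in $\ch^*$ (viewed via the $L_2$ pairing), so $\|A_0 0 - q\|_{\ch^*} = \|q + \cn(V_0)\|_{\ch^*}$. Any $R \geq \tfrac{2}{m}\|q+\cn(V_0)\|_{\ch^*}$ then works: by~\ref{tBS509.5-2} the map $A_0$ is Lipschitz on the corresponding ball, so Theorem~\ref{t-ggz}\ref{t-ggz-3} produces a unique $V$ in the ball with $A_0 V = q$ and $\|V\|_\ch \leq \tfrac{1}{m}\|A_0 0 - q\|_{\ch^*} = \tfrac{1+c_P}{\mu}\|q+\cn(V_0)\|_{\ch^*}$; injectivity of the strongly monotone $A_0$ upgrades uniqueness to all of~$\ch$. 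The main obstacle is the mismatch between the $L_2$-valued Lipschitz estimate for $\cn$ and the $\ch^*$-target space, which is resolved by using the embedding $\ch \hookrightarrow L_2$ on both sides.
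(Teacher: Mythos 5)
Your proof is correct and follows essentially the same route as the paper: part~\ref{tBS509.5-1} from ellipticity, Proposition~\ref{pBS509}, and the Poincar\'e inequality; part~\ref{tBS509.5-2} from Proposition~\ref{pBS506-N} and boundedness of $\varepsilon$ via the trivial embedding $\ch \hookrightarrow L_2(\Omega)$; and part~\ref{tBS509.5-3} by feeding these into Theorem~\ref{t-ggz}. You simply spell out the details that the paper leaves terse, including the explicit identification $A_0 0 = -\cn(V_0)$ and the passage from uniqueness in the ball to uniqueness on all of $\ch$ via injectivity of the strongly monotone map.
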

\begin{proof}
`\ref{tBS509.5-1}'.
Let $U,V \in \ch$.
Then 
\begin{eqnarray*}
\lefteqn{
\langle A_0 U - A_0 V, U - V \rangle_{\ch^* \times \ch}
} \hspace*{10mm}  \\
& \geq & \mu \int_\Omega |\nabla (U - V)|^2
   - (\cn(V_0 + U) - \cn(V_0 + V), (V_0 + U) - (V_0 + V))_{L_2}  \\
& \geq & \mu \, (1 + c_P)^{-1} \|U - V\|_\ch^2
,  
\end{eqnarray*}
where we used Proposition~\ref{pBS509}.

`\ref{tBS509.5-2}'.
This follows from Proposition~\ref{pBS506-N} and the boundedness 
of the coefficient function~$\varepsilon$.

`\ref{tBS509.5-3}'.
This follows immediately from the previous statements and Theorem~\ref{t-ggz}.
\end{proof}

Write $\Psi(V_0,q) = V$ if $V_0$, $q$ and $V$ are as in Proposition~\ref{tBS509.5}\ref{tBS509.5-3}.

\begin{proof}[{\bf Proof of Theorem~\ref{tBS510}\ref{tBS510-8}.}]
If follows from Theorem~\ref{t-ggz}\ref{t-ggz-4} that 
\[
\|\Psi(V_0,q) - \Psi(V_0,\tilde q)\|_{W^{1,2}_D(\Omega)}
\leq \frac{1 + c_P}{\mu} \, \|q - \tilde q\|_{(W^{1,2}_D(\Omega))^*}
.  \]
Hence it remains to estimate $\|\Psi(V_0,q) - \Psi(V_1,q)\|_{W^{1,2}_D(\Omega)}$.
Write $\ch = W^{1,2}_D(\Omega,\Ri)$.
Define the maps $A_0, A_1 \colon \ch \to \ch^*$ by
\begin{eqnarray*}
\langle A_0 V, W \rangle_{\ch^* \times \ch}
& = & \int_\Omega \varepsilon \nabla V \cdot \nabla W
   - (\cn(V_0 + V), W)_{L_2(\Omega)}
\quad \mbox{and}  \\
\langle A_1 V, W \rangle_{\ch^* \times \ch}
& = & \int_\Omega \varepsilon \nabla V \cdot \nabla W
   - (\cn(V_1 + V), W)_{L_2(\Omega)}
.  
\end{eqnarray*}
Write $m = \frac{\mu}{1 + c_P}$.
Let 
\[
R = \frac{2}{m} \max \{ \|q + \cn(V_0)\|_{\ch^*}, \|q + \cn(V_1)\|_{\ch^*} \}
.  \]
By Proposition~\ref{tBS509.5}\ref{tBS509.5-2} there exists an $M > 0$ such that 
\[
\|A_0 U - A_0 V\|_{\ch^*} \leq M \, \|U - V\|_\ch
\quad \mbox{and} \quad
\|A_1 U - A_1 V\|_{\ch^*} \leq M \, \|U - V\|_\ch
\]
for all $U,V \in \ch$ with $\|U\|_\ch \leq R$ and $\|V\|_\ch \leq R$.
By Proposition~\ref{pBS506-N} there exists a $c > 0$ such that 
\[
\|\cn(U) - \cn(V)\|_{L_2} 
\leq c \, \|U - V\|_{L_2}
\]
for all $U,V \in L_2(\Omega,\Ri)$ such that 
$\|U\|_{L_2} \leq R + \|V_0\|_{L_2} + \|V_1\|_{L_2}$ and 
$\|V\|_{L_2} \leq R + \|V_0\|_{L_2} + \|V_1\|_{L_2}$.
Let $J \colon \ch \to \ch^*$ be the duality map.
Define $Q_0,Q_1 \colon \{ U \in \ch : \|U\|_\ch \leq R \} \to \ch$ by
\begin{equation}
Q_0 U = U - \frac {m}{M^2} \, J^{-1}(A_0 U - q)
\quad \mbox{and} \quad
Q_1 U = U - \frac {m}{M^2} \, J^{-1}(A_1 U - q)
.  
\label{etBS510;20}
\end{equation}

Now we estimate $\|\Psi(V_0,q) - \Psi(V_1,q)\|_{W^{1,2}_D(\Omega)}$.
Write $U = \Psi(V_0,q)$ and $V = \Psi(V_1,q)$.
Then $\|U\|_{L_2} \leq R$ and $\|V\|_{L_2} \leq R$ by Proposition~\ref{tBS509.5}\ref{tBS509.5-3}.
Moreover,
$Q_0 U = U$ and $Q_1 V = V$.
Using Proposition~\ref{tBS509.5}\ref{tBS509.5-1}
and Theorem~\ref{t-ggz}\ref{t-ggz-1} one estimates
\begin{eqnarray*}
\|U - V\|_\ch
& = & \|Q_0 U - Q_1 V\|_\ch
= \|Q_0 U - Q_0 V\|_\ch + \|Q_0 V - Q_1 V\|_\ch  \\
& \leq & \sqrt{ 1 - \frac {m^2}{M^2}} \, \|U - V\|_\ch 
   + \frac{m}{M^2} \, \| J^{-1} ( \cn(V_0 + V) - \cn(V_1 + V) ) \|_{\ch}  
.
\end{eqnarray*}
Rearrangement gives
\begin{eqnarray*}
\|U - V\|_\ch
& \leq & \Big( 1 - \sqrt{ 1 - \frac {m^2}{M^2}} \Big)^{-1} \, 
   \frac{m}{M^2} \, \| \cn(V_0 + V) - \cn(V_1 + V) \|_{\ch^*}   \\
& \leq & \Big( 1 - \sqrt{ 1 - \frac {m^2}{M^2}} \Big)^{-1} \, 
   \frac{m}{M^2} \, \|I\|_{L_2 \to \ch^*} \, \|\cn(V_0 + V) - \cn(V_1 + V)\|_{L_2}  \\
& \leq & \Big( 1 - \sqrt{ 1 - \frac {m^2}{M^2}} \Big)^{-1} \, 
   \frac{m}{M^2} \, \|I\|_{L_2 \to \ch^*} \, c \, \|V_0 - V_1\|_{L_2}  
\end{eqnarray*}
as required.
\end{proof}

Finally we consider regularity of the solutions
under some weak additional assumptions. 
For all $p \in (1,\infty)$ let
$W^{1,p}_D(\Omega)$ be the $W^{1,p}(\Omega)$-closure of the set 
\[
\{ u|_\Omega : u \in C^\infty(\R^d) \mbox{ and } \supp u \cap D = \emptyset \} 
.  \]
Let $p'$ be the dual exponent of $p$.
Define $W^{-1,p'}_D(\Omega)$ to be the dual space of $W^{1,p}_D(\Omega)$.

\begin{theorem} \label{tBS524}
Let $V_0 \in L_2(\Omega,\Ri)$, $p \in [2,\infty]$, and $q \in W^{-1,p}_D(\Omega,\Ri)$.
\mbox{}
\begin{tabel}
\item \label{tBS523-1}
If $p > d$, then $\Psi(V_0,q) \in L_\infty(\Omega)$.
\item \label{tBS523-2}
If $2 < p \leq d = 3$, then $\Psi(V_0,q) \in L_{p^*}(\Omega)$,
where $\frac{1}{p^*} = \frac{1}{p} - \frac{1}{3}$.
\item \label{tBS525-2}
Suppose $p > d$.
Under a measure theoretic condition of the relative boundary of $D$ in $\partial \Omega$
(see \cite{ERe2} Assumption~(III) in Theorem~1.1 )
it follows that $\Psi(V_0,q)$ is H\"older continuous
for all $V_0 \in L_2(\Omega,\Ri)$ and $q \in W^{-1,p}_D(\Omega)$.
\item \label{tBS525-3}
Suppose that $D = \partial \Omega$ and that the 
coefficient function $\varepsilon$ is Lipschitz continuous.
Further suppose that $\Omega$ is of class $C^{1,1}$ or convex.
Then $\Psi(V_0,q) \in W^{2,2}(\Omega)$ if $q \in L_2(\Omega,\Ri)$.
\end{tabel}
\end{theorem}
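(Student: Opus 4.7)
The plan is to view $V = \Psi(V_0,q)$ as the unique weak solution in $W^{1,2}_D(\Omega,\Ri)$ of the linear divergence-form equation
\[
-\nabla \cdot \varepsilon \nabla V = q + \cn(V_0+V)
\]
in the dual $(W^{1,2}_D(\Omega,\Ri))^*$, with bounded measurable coefficient matrix $\varepsilon$. By Theorem~\ref{tBS510}\ref{tBS510-4} the term $\cn(V_0+V)$ belongs to $L_2(\Omega,\Ri)$, while $q$ lies in $W^{-1,p}_D$ by assumption; a short Sobolev--duality computation (using $d\leq 3$, the assumed Poincar\'e inequality and $|\Omega|<\infty$) places the full right-hand side $f := q + \cn(V_0+V)$ in $W^{-1,p}_D(\Omega,\Ri)$. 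Each of the four statements then reduces to a classical regularity theorem applied to this linear equation.

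For (i), I would invoke Stampacchia's $L_\infty$-estimate for weak solutions of divergence-form equations with bounded measurable coefficients and mixed boundary conditions (a De~Giorgi--Moser iteration in $W^{1,2}_D$, available thanks to the Poincar\'e inequality assumed in Theorem~\ref{tBS510}): whenever the right-hand side lies in $W^{-1,p}_D$ with $p>d$, the weak solution is bounded. For (ii), a Gr\"oger/Meyers-type higher-integrability argument applied to the same equation yields $V \in W^{1,p}_D(\Omega)$ and hence $V \in L_{p^*}(\Omega)$ via the Sobolev embedding $W^{1,p}_D \hookrightarrow L_{p^*}$, where $\tfrac{1}{p^*} = \tfrac{1}{p} - \tfrac{1}{3}$.

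For (iii), combining (i) with the H\"older continuity result for divergence-form operators with mixed boundary conditions under the geometric hypothesis on the relative boundary of $D$ --- precisely \cite{ERe2}~Theorem~1.1 together with its Assumption~(III) --- yields H\"older regularity of $V$. For (iv), the pure Dirichlet hypothesis $D=\partial\Omega$ together with $\varepsilon$ Lipschitz and $\Omega$ convex or of class $C^{1,1}$ brings us into the classical $H^2$-regularity framework (Grisvard for convex, Gilbarg--Trudinger for $C^{1,1}$); since both $q$ and $\cn(V_0+V)$ lie in $L_2$, the right-hand side is in $L_2$ and the classical theorem yields $V \in W^{2,2}(\Omega)$.

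The only real difficulty is the bookkeeping to check that the various classical regularity statements apply verbatim in the present setting of bounded measurable coefficients and the mixed boundary spaces $W^{1,p}_D(\Omega)$. The genuinely non-trivial ingredient, for statement (iii), is supplied by the cited theorem of \cite{ERe2}; statements (i), (ii) and (iv) follow from versions of the De~Giorgi, Moser, Gr\"oger and Grisvard/Gilbarg--Trudinger machinery that are by now standard.
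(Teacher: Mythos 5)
Your overall strategy---freeze $\cn(V_0+V)\in L_2$, treat $-\nabla\cdot\varepsilon\nabla V = q + \cn(V_0+V)$ as a \emph{linear} problem with measurable coefficients, and invoke classical regularity theory---is exactly the paper's strategy, and parts~\ref{tBS523-1}, \ref{tBS525-2} and~\ref{tBS525-3} are handled in essentially the same way (the paper cites Stampacchia for~\ref{tBS523-1}, \cite{ERe2} for~\ref{tBS525-2}, and Grisvard rather than Gilbarg--Trudinger for~\ref{tBS525-3}, but these are interchangeable references).

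However, your argument for part~\ref{tBS523-2} contains a genuine gap. You propose to obtain $V\in W^{1,p}_D(\Omega)$ from a Gr\"oger/Meyers higher-integrability theorem and then embed into $L_{p^*}$. For merely bounded, measurable, elliptic $\varepsilon$ (which is all that is assumed here), the Gr\"oger/Meyers exponent $p_0>2$ depends on the ellipticity ratio and can be arbitrarily close to~$2$; one certainly cannot conclude $V\in W^{1,p}_D$ for every $p\in(2,3]$. Indeed the paper's own Theorem~\ref{tBS526} invokes exactly such a Gr\"oger-type result from \cite{HJKR} and is correspondingly phrased as ``there exists a $p>2$'', not ``for all $p\in(2,3]$''. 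What \emph{does} hold for all $p\in(2,d]$ is an $L_{p^*}$-bound on the \emph{solution itself} (not its gradient): this is the De~Giorgi/Stampacchia-type estimate, and the paper cites Stampacchia \cite{Stam2}~Th\'eor\`eme~4.2(b) for precisely this. You should replace your Meyers-route by this direct solution-integrability estimate.

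Two smaller points. First, you claim the full right-hand side lies in $W^{-1,p}_D$; since $L_2(\Omega)\hookrightarrow W^{-1,6}_D(\Omega)$ only, this is correct only when $p\le 6$. The paper accordingly works with $\tilde p = \min(p,6)$, which still satisfies $\tilde p > d$ whenever $p>d$ because $d\le 3$, so the conclusions are unaffected, but the bookkeeping should be stated correctly. Second, part~\ref{tBS525-2} does not require part~\ref{tBS523-1} as input; the cited \cite{ERe2} Theorem~1.1 delivers H\"older continuity directly.
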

\begin{proof}
The Sobolev embedding theorem gives $L_2(\Omega) \subset W^{-1,6}_D(\Omega)$.
Write $\tilde p = \min(p,6)$ and $V = \Psi(V_0,q)$.
Then $q + \cn(V_0 + V) \in W^{-1,\tilde p}_D(\Omega)$.
Moreover, 
\begin{equation}
\int_\Omega \varepsilon \nabla V \cdot \nabla W
= \langle q + \cn(V_0 + V), W \rangle_{\cd'(\Omega) \times \cd(\Omega)}
\label{etBS524;1}
\end{equation}
for all $W \in C_c^\infty(\Omega)$.

`\ref{tBS523-1}'.
Now $\tilde p > d$ and \cite{Stam2} Th\'eor\`eme~4.2(a) implies that $V \in L_\infty(\Omega)$.

`\ref{tBS523-2}'.
Now $\tilde p = p$ and one can use \cite{Stam2} Th\'eor\`eme~4.2(b).

`\ref{tBS525-2}'.
See \cite{ERe2} Theorem~1.1.

`\ref{tBS525-3}'.
We use (\ref{etBS524;1}).
Then the regularity follows from \cite{Gris} Theorem~2.2.2.3
in case $\Omega$ is of class $C^{1,1}$ and 
from \cite{Gris} Theorem~3.2.1.2 in case $\Omega$ is convex.
\end{proof}

\begin{theorem} \label{tBS526}
Suppose that the set $D$ is a $(d-1)$-set in the sense of Jonsson--Wallin
(\cite{JW} Chapter~II).
Then there exists a $p > 2$ such that $\Psi(V_0,q) \in W^{1,p}_D(\Omega)$
for all $V_0 \in L_2(\Omega,\Ri)$ and $q \in W^{-1,p}_D(\Omega)$.
\end{theorem}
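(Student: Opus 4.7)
The plan is to apply a Gröger--type isomorphism theorem for the divergence-form operator with mixed boundary conditions and then bootstrap. Specifically, when $D \subset \partial\Omega$ is a $(d-1)$-set in the sense of Jonsson--Wallin, the Gröger extrapolation result (in the $(d-1)$-set formulation due, e.g., to Haller-Dintelmann, Jonsson, Knees and Rehberg) provides a $p_0 > 2$ such that the mapping $-\dive\, \varepsilon \nabla \colon W^{1,p_0}_D(\Omega,\Ri) \to W^{-1,p_0}_D(\Omega)$ is a topological isomorphism. After possibly shrinking $p_0$ we may assume $p_0 \leq 6$, which will be needed below.

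Set $V = \Psi(V_0,q)$. By Proposition~\ref{tBS509.5} we have $V \in W^{1,2}_D(\Omega,\Ri)$ and
\[
\int_\Omega \varepsilon \nabla V \cdot \nabla W
= \langle q, W \rangle_{(W^{1,2}_D(\Omega))^* \times W^{1,2}_D(\Omega)}
 + \int_\Omega \cn(V_0+V) \, W
\]
for all $W \in W^{1,2}_D(\Omega,\Ri)$; equivalently $-\dive\, \varepsilon \nabla V = q + \cn(V_0+V)$ in $W^{-1,2}_D(\Omega)$. Proposition~\ref{p-lipschpartdens}\ref{p-lipschpartdens-1} yields $\cn(V_0+V) \in L_2(\Omega,\Ri)$. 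Since $d \leq 3$, the Sobolev embedding $W^{1,p_0'}_D(\Omega) \hookrightarrow L_2(\Omega)$ (valid because $p_0' \geq 6/5$) gives by duality $L_2(\Omega) \hookrightarrow W^{-1,p_0}_D(\Omega)$, and the assumption on $q$ together with the previous inclusion then yield $q + \cn(V_0+V) \in W^{-1,p_0}_D(\Omega)$.

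Invoking the isomorphism, there is a unique $\tilde V \in W^{1,p_0}_D(\Omega,\Ri)$ satisfying $-\dive\, \varepsilon \nabla \tilde V = q + \cn(V_0+V)$ in $W^{-1,p_0}_D(\Omega)$. Since $p_0 > 2$ and $\Omega$ is bounded, $W^{1,p_0}_D(\Omega) \hookrightarrow W^{1,2}_D(\Omega)$, so both $V$ and $\tilde V$ are $W^{1,2}_D$-weak solutions of the same linear elliptic problem with prescribed right-hand side. Uniqueness for this linear problem, via Lax--Milgram applied to the coercive symmetric form $(U,W) \mapsto \int_\Omega \varepsilon \nabla U \cdot \nabla W$ on $W^{1,2}_D(\Omega,\Ri)$, forces $V = \tilde V$, and therefore $V \in W^{1,p_0}_D(\Omega,\Ri)$, as claimed.

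The main obstacle is the Gröger-type extrapolation step itself: one has to pin down a reference under whose hypotheses (bounded Lipschitz $\Omega$, bounded symmetric coercive $\varepsilon$, closed $D$ satisfying the Jonsson--Wallin $(d-1)$-set condition) the $L_2$-coercive isomorphism $-\dive\, \varepsilon \nabla \colon W^{1,2}_D \to W^{-1,2}_D$ extrapolates to a neighbourhood of $p = 2$. Once this input is available, the argument above is a routine bootstrap: combine the Sobolev embedding $L_2(\Omega) \hookrightarrow W^{-1,p_0}_D(\Omega)$ with uniqueness in $W^{1,2}_D$ to upgrade the regularity of $V$ from $W^{1,2}_D$ to $W^{1,p_0}_D$.
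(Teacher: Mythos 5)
Your argument is essentially the paper's own: Theorem~\ref{tBS526} is proved there in one line by citing \cite{HJKR} Theorem~5.6, which is exactly the Gr\"oger--type extrapolation isomorphism for $-\dive\,\varepsilon\nabla \colon W^{1,p_0}_D(\Omega) \to W^{-1,p_0}_D(\Omega)$ under the Jonsson--Wallin $(d-1)$-set hypothesis that you invoke, and the bootstrap you spell out (freeze the right-hand side $q+\cn(V_0+V)$, observe it lies in $W^{-1,p_0}_D(\Omega)$ via $\cn(V_0+V)\in L_2$ and the embedding $L_2\hookrightarrow W^{-1,p_0}_D$ for $p_0\le 6$, apply the isomorphism, and identify the result with $V$ by $W^{1,2}_D$-uniqueness) is precisely what the one-line citation implicitly relies on. Your version is simply the explicit form of the same argument.
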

\begin{proof}
This follows from \cite{HJKR} Theorem~5.6.
\end{proof}

If $V_0 \in L_2(\Omega,\Ri)$ and $q \in (W^{1,2}_D(\Omega,\Ri))^*$,
then the solution $\Psi(V_0,q)$ of the Schr\"o\-din\-ger--Poisson system obtained
in Theorem~\ref{tBS510}\ref{tBS510-5}, or Proposition~\ref{tBS509.5}\ref{tBS509.5-3},
is constructed  via a contraction in Theorem~\ref{t-ggz}\ref{t-ggz-3}.
Explicitly, if $Q_0$ is as in (\ref{etBS510;20}), $U_1 = 0$ and 
inductively $U_{n+1} = Q_0 U_n$ for all $n \in \Ni$, then 
$\Psi(V_0,q) = \lim_{n \to \infty} U_n$ in $W^{1,2}_D(\Omega)$.
We do not know whether the convergence is valid in any of the 
spaces mentioned in Theorems~\ref{tBS524} and~\ref{tBS526}.

\subsection*{Acknowledgment} The third named author (J.R.) 
wants to thank Alexander Strohmaier (Hannover) 
for valuable discussions on the subject. 
Also we wish to thank Angel Paredes and Humberto Michinel (Universidade de Vigo).

\small

\noindent 
\emph{V. Bach}, Institut f\"ur Analysis und Algebra,
Carl-Friedrich-Gauss-Fakult\"at,
Technische Universit\"at Braunschweig,
Universit\"tsplatz 2,
38106 Braunschweig,
Germany,  \\
\texttt{v.bach@tu-braunschweig.de}

\medskip 

\noindent 
\emph{A.F.M. ter Elst}, Department of Mathematics, University of Auckland, 
Private Bag 92019, Auckland 1142, New Zealand,  \\
\texttt{terelst@math.auckland.ac.nz}

\medskip 

\noindent
\emph{J. Rehberg},
Weierstrass Institute for Applied Analysis and Stochastics,
Mohrenstr.~39, 
D-10117 Berlin, 
Germany,  \\
\texttt{rehberg@wias-berlin.de}

\end{document}